\newcommand{\mb}[1]{\mbox{\boldmath$#1$}}
\newcounter{ale}
\newcommand{\abc}{\item[\alph{ale})]\stepcounter{ale}}
\newenvironment{liste}{\begin{itemize}}{\end{itemize}}
\newcommand{\aliste}{\begin{liste} \setcounter{ale}{1}}
\newcommand{\zliste}{\end{liste}}
\newenvironment{abcliste}{\aliste}{\zliste}
\begin{document}

\title{Eigenvector Synchronization, Graph Rigidity and the Molecule Problem}
\author{
Mihai Cucuringu%
\thanks{PACM, Princeton University, Fine Hall, Washington Road, Princeton NJ 08544-1000 USA, email: mcucurin@math.princeton.edu}
\and
Amit~Singer%
\thanks{Department of Mathematics and PACM, Princeton University, Fine Hall, Washington Road, Princeton NJ 08544-1000 USA, email: amits@math.princeton.edu}
\and
David Cowburn%
\thanks{Department of Biochemistry, Albert Einstein College of Medicine of Yeshiva University, 1300 Morris Park Ave, Bronx, NY 10461, USA, email: david.cowburn@einstein.yu.edu
}
}

\date{\today}
\maketitle

\begin{abstract}
The graph realization problem has received a great deal of attention in recent years, due to its importance in applications such as wireless sensor networks and structural biology. In this paper, we extend on previous work and propose the 3D-ASAP algorithm, for the graph realization problem in $\mathbb{R}^3$, given a sparse and noisy set of distance measurements. 3D-ASAP is a divide and conquer, non-incremental and non-iterative algorithm,  which integrates local distance information into a global structure determination. Our approach starts with identifying, for every node, a subgraph of its 1-hop neighborhood graph, which can be accurately embedded in its own coordinate system. In the noise-free case, the computed coordinates of the sensors in each patch must agree with their global positioning up to some unknown rigid motion, that is, up to translation, rotation and possibly reflection. In other words, to every patch there corresponds an element of the Euclidean group Euc(3) of rigid transformations in $\mathbb{R}^3$, and the goal is to estimate the group elements that will properly align all the patches in a globally consistent way. Furthermore, 3D-ASAP successfully incorporates information  specific to the molecule problem in structural biology, in particular information on known  substructures and their orientation. In addition, we also propose 3D-SP-ASAP, a faster version of 3D-ASAP, which uses a spectral partitioning algorithm as a preprocessing step for dividing the initial graph into smaller subgraphs. Our extensive numerical simulations show that 3D-ASAP and 3D-SP-ASAP are very robust to high levels of noise in the measured distances and to sparse connectivity in the measurement graph, and compare favorably to similar state-of-the art localization algorithms.
\end{abstract}

\begin{keywords}
Graph realization, distance geometry, eigenvectors, synchronization, spectral graph theory, rigidity theory, SDP, the molecule problem, divide-and-conquer.
\end{keywords}


\begin{AMS} 15A18, 49M27, 90C06, 90C20, 90C22, 92-08, 92E10
\end{AMS}

\section{Introduction}
\label{intro}


In the \textit{graph realization problem}, one is given a graph $G=(V,E)$ consisting of a set of $|V|=n$ nodes and $|E|=m$ edges, together with a non-negative distance measurement $d_{ij}$ associated with each edge, and is asked to compute a realization of $G$ in the Euclidean space $\mathbb{R}^d$ for a given dimension $d$. In other words, for any pair of adjacent nodes $i$ and $j$, $(i,j) \in E$, the distance $d_{ij} = d_{ji}$ is available, and the goal is to find a $d$-dimensional embedding $p_1, p_2, \ldots, p_n \in \mathbb{R}^d$ such that $\|p_i-p_j \| =d_{ij}, \text{ for all } (i,j) \in E$.

Due to its practical significance, the graph realization problem has attracted a lot of attention in recent years, across many communities. The problem and its variants come up naturally in a variety of settings such as wireless sensor networks \cite{biswas_stress_sdp,overview}, structural biology \cite{molecule_problem}, dimensionality reduction, Euclidean ball packing and multidimensional scaling (MDS) \cite{cox}. In such real world applications, the given distances $d_{ij}$ between adjacent nodes are not accurate, $d_{ij} = \|p_i - p_j \| + \varepsilon_{ij}$ where $\varepsilon_{ij}$ represents the added noise, and the goal is to find an embedding that realizes all known distances $d_{ij}$ as best as possible.

When all $n(n-1)/2$ pairwise distances are known, a $d$-dimensional embedding of the complete graph can be computed using classical MDS. However, when many of the distance constraints are missing, the problem becomes significantly more challenging because the rank-$d$ constraint on the solution is not convex. Applying a rigid transformation (composition of rotation, translation and possibly reflection) to a graph realization results in another graph realization, because rigid transformations preserve distances. Whenever an embedding exists, it is unique (up to rigid transformations) only if there are enough distance constraints, in which case the graph is said to be globally rigid 
(see, e.g., \cite{conditions_UGR}). The graph realization problem is known to be difficult; Saxe has shown it is strongly NP-complete in one dimension, and strongly NP-hard for higher dimensions \cite{saxe,yemini}. Despite its difficulty, the graph realization problem has received a great deal of attention in the networking and distributed computation communities, and numerous heuristic algorithms exist that approximate its solution. In the context of sensor networks \cite{Ji2004,Yang2004,Yang2006,Yang2007}, there are many algorithms that solve the graph realization problem, and they include methods such as global optimization \cite{smacof}, semidefinite programming (SDP) \cite{BiswasYe,biswas_stress_sdp,biswas,So07,SY05,univRig} and local to global approaches \cite{moore,ShangRuml,PATCHWORK,LRE,ARAP}.

A popular model for the graph realization problem is that of a geometric graph model, where the distance between two nodes is available if and only if they are within sensing radius $\rho$ of each other, i.e., $(i,j)\in E \iff d_{ij} \leq \rho$. The graph realization problem is NP-hard also under the geometric graph model \cite{Yang2006}. Figure \ref{fig:intro_Example} shows an example of a measurement graph for a data set of $n=500$ nodes, with sensing radius $\rho=0.092$ and average degree $deg=18$, i.e. each node knows, on average, the distance to its $18$ closest neighbors.
\begin{figure}[h]
\begin{center}
\vspace{0mm}
\includegraphics[width=0.49\columnwidth]{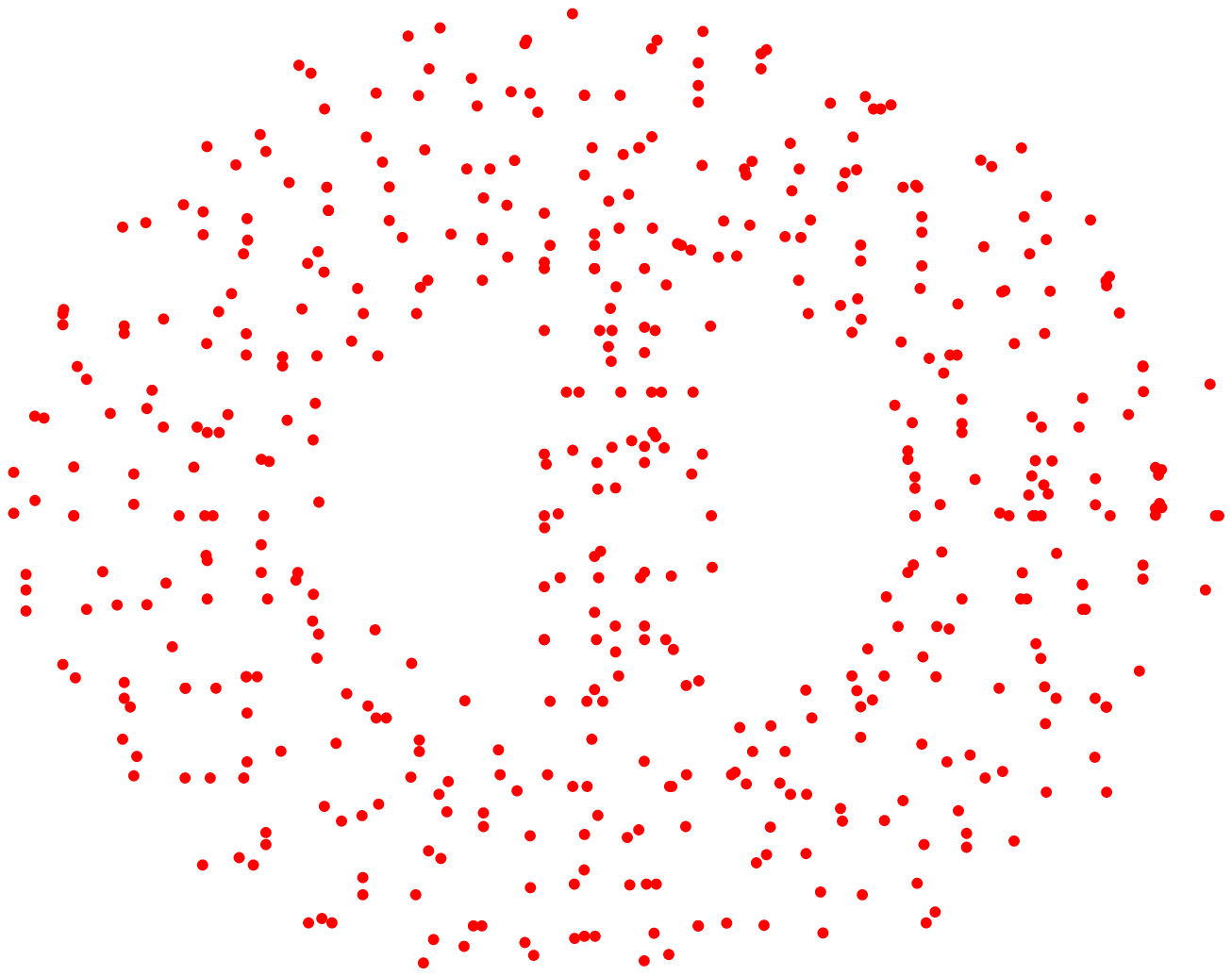}
\includegraphics[width=0.49\columnwidth]{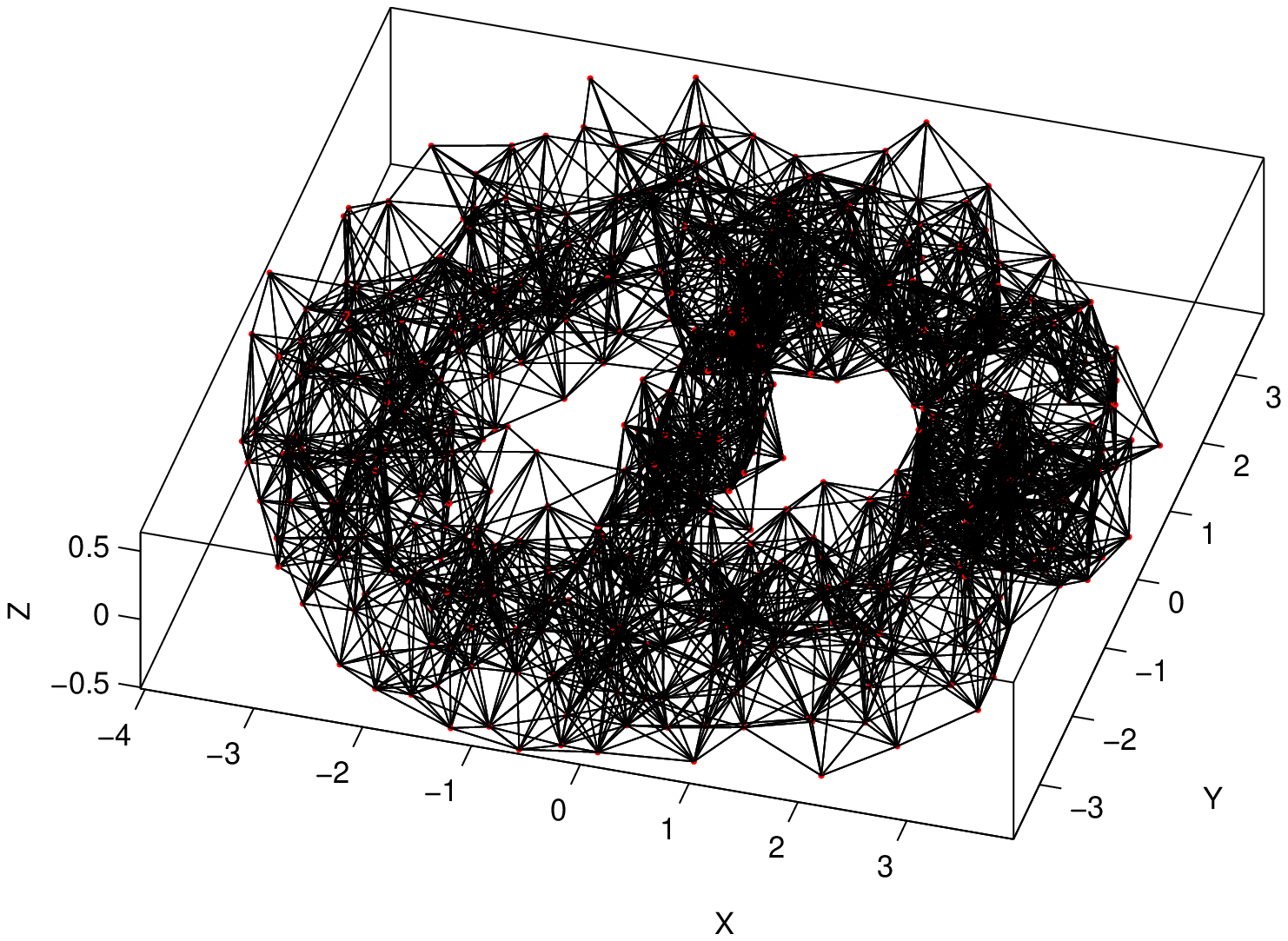}
\vspace{-6mm}
\end{center}
\caption{2D view of the BRIDGE-DONUT data set, a cloud of $n=500$ points in $\mathbb{R}^3$ in the shape of a donut (left), and the associated measurement geometric graph of radius $\rho = 0.92 $ and average degree $deg=18$ (right).}
\label{fig:intro_Example}
\end{figure}

The graph realization problem in $\mathbb{R}^3$ is of particular importance because it arises naturally in the application of nuclear magnetic resonance (NMR) to structural biology. NMR spectroscopy is a well established modality for atomic structure determination, especially for relatively small proteins (i.e., with atomic mass less than 40kDa) \cite{DC1}, and contributes to progress in structural genomics \cite{DC2}. General properties of proteins such as bond lengths and angles can be translated into accurate distance constraints. In addition, peaks in the NOESY experiments are used to infer spatial proximity information between pairs of nearby hydrogen atoms, typically in the range of 1.8 to 6 angstroms. The intensity of such NOESY peaks is approximately proportional to the distance to the minus $6^{th}$ power, and it is thus used to infer distance information between pairs of hydrogen atoms (NOEs). Unfortunately, NOEs provide only a rough estimate of the true distance, and hence the need for robust algorithms that are able to provide accurate reconstructions even at high levels of noise in the NOE data. In addition, the experimental data often contains potential constraints that are ambiguous, because of signal overlap resulting in incomplete assignment \cite{aria}.
The structure calculation based on the entire set of distance constraints, both accurate measurements and NOE measurements, can be thought of as instance of the graph realization problem in $\mathbb{R}^3$  with noisy data.

In this paper, we focus on the molecular distance geometry problem (to which we will refer from now on as the molecule problem) in $\mathbb{R}^3$, although the approach is applicable to other dimensions as well. In \cite{ASAP}  we introduced 2D-ASAP, an eigenvector based synchronization algorithm that solves the sensor network localization problem in $\mathbb{R}^2$. We summarize below the approach used in 2D-ASAP, and further explain the differences and improvements that its generalization to three-dimensions brings. Figure \ref{fig:pipeline} shows a schematic overview of our algorithm, which we call 3D-As-Synchronized-As-Possible (3D-ASAP).

The 2D-ASAP algorithm proposed in \cite{ASAP} belongs to the group of algorithms that integrate local distance information into a global structure determination. For every sensor, we first identify  globally rigid subgraphs of its $1$-hop neighborhood that we call patches. Each patch is then separately localized in a coordinate system of its own using either the stress minimization approach of \cite{gotsman}, or by using semidefinite programming (SDP). In the noise-free case, the computed coordinates of the sensors in each patch must agree with their global positioning up to some unknown rigid motion, that is, up to translation, rotation and possibly reflection. To every patch there corresponds an element of the Euclidean group Euc(2) of rigid transformations in the plane, and the goal is to estimate the group elements that will properly align all the patches in a globally consistent way. By finding the optimal alignment of all pairs of patches whose intersection is large enough, we obtain measurements for the ratios of the unknown group elements. Finding group elements from noisy measurements of their  ratios is also known as the synchronization problem \cite{timeSync2,timeSync1}. For example, the synchronization of clocks in a distributed network from noisy measurements of their time offsets is a particular example of synchronization over $\mathbb{R}$. \cite{sync} introduced an eigenvector method for solving the synchronization problem over the group SO(2) of planar rotations. This algorithm serves as the basic building block for our 2D-ASAP and 3D-ASAP algorithms. Namely, we reduce the graph realization problem to three consecutive synchronization problems that overall solve the synchronization problem over Euc(2). Intuitively, we use the eigenvector method for the compact part of the group (reflections and rotations), and use the least-squares method for the non-compact part (translations).
In the first step, we solve a synchronization problem over $\mathbb{Z}_2$ for the possible reflections of the patches using the eigenvector method. In the second step, we solve a synchronization problem over SO(2) for the rotations,  also using the eigenvector method. Finally, in the third step, we solve a synchronization problem over $\mathbb{R}^2$ for the translations by solving an overdetermined linear system of equations using the method of least squares. This solution yields the estimated coordinates of all the sensors up to a global rigid transformation. Note that the groups $\mathbb{Z}_2$ and SO(2) are compact, allowing us to use the eigenvector method, while the group $\mathbb{R}^2$ is non-compact and requires a different synchronization method such as least squares.

In the present paper, we extend on the approach used in 2D-ASAP to accommodate for the additional challenges posed by rigidity theory in $\mathbb{R}^3$ and other constraints that are specific to the molecule problem. In addition, we also increase the robustness to noise and speed of the algorithm. The following paragraphs are a brief summary of the improvements 3D-ASAP brings, in the order in which they appear in the algorithm.

First, we address the issue of using a divide and conquer approach from the perspective of three dimensional global rigidity, i.e., of decomposing the initial measurement graph into many small overlapping patches that can be uniquely localized. Sufficient and necessary conditions for two dimensional combinatorial global rigidity have been established only recently, and motivated our approach for building patches in 2D-ASAP \cite{JacksonJordan,conditions_UGR}. Due to the recent coning result in rigidity theory \cite{coning}, it is also possible to extract globally rigid patches in dimension three. However, such globally rigid patches cannot always be localized accurately by SDP algorithms, even in the case of noiseless data. To that end, we rely on the notion of \textit{unique localizability} \cite{SY05} to localize noiseless graphs, and introduce the notion of a weakly uniquely localizable (WUL) graph, in the case of noisy data.

Second, we use a median-based denoising algorithm in the preprocessing step, that overall produces more accurate patch localizations. Our approach is based on the observation that a given edge may belong to several different patches, the localization of each of which may result in a different estimation for the distance. The median of these different estimators from the different patches is a more accurate estimator of the underlying distance.

Third, we incorporated in 3D-ASAP the possibility to integrate prior available information. As it is often the case in real applications (such as NMR), one has readily available structural information on various parts of the network that we are trying to localize. For example, in the NMR application, there are often subsets of atoms (referred to as ``molecular fragments", by analogy with the fragment molecular orbital approach, e.g., \cite{fedorov}) whose relative coordinates are known a priori, and thus it is desirable to be able to incorporate such information in the reconstruction process. Of course, one may always input into the problem all pairwise distances within the molecular fragments.
However, this requires increased computational efforts while still not taking full advantage of the available information, i.e., the orientation of the molecular fragment. Nodes that are aware of their location are often referred to as anchors, and anchor-based algorithms make use of their existence when computing the coordinates of the remaining sensors. Since in some applications the presence of anchors is not a realistic assumption, it is important to have efficient and noise-robust anchor-free algorithms, that can also incorporate the location of anchors if provided. However, note that having molecular fragments is not the same as having anchors; given a set of (possibly overlapping) molecular fragments, no two of which can be joined in a globally rigid body, only one molecular fragment can be treated as anchor information (the nodes of that molecular fragment will be the anchors), as we do not know a priori  how the individual molecular fragments relate to each other in the same global coordinate system.

Fourth, we allow for the possibility of combining the first two steps (computing the reflections and rotations) into one single step, thus doing synchronization over the group of orthogonal transformations O(3) $=\mathbb{Z}_2 \times$  SO(3) rather than individually over $\mathbb{Z}_2$ followed by SO(3). However, depending on the problem being considered and the type of available information, one may choose not to combine the above two steps. For example, when molecular fragments are  present, we first do synchronization over $\mathbb{Z}_2$ with anchors, as detailed in Section \ref{sync_anchors}, followed by synchronization over SO(3).

Fifth, we incorporate another median-based heuristic in the final step, where we compute the translations of each patch by solving, using least squares, three overdetermined linear systems, one for each of the $x,y$ and $z$ axis. For a given axis, the displacement between a pair of nodes appears in multiple patches, each resulting in a different estimation of the displacement along that axis. The median of all these different estimators from different patches provides a more accurate estimator for the displacement. In addition, after the least squares step, we introduce a simple heuristic that corrects the scaling of the noisy distance measurements. Due to the geometric graph model assumption and the uniform noise model, the distance measurements taken as input by 3D-ASAP are significantly scaled down, and the least squares step further shrinks the distances between nodes in the initial reconstruction.

Finally, we introduce 3D-SP-ASAP, a variant of 3D-ASAP which uses a spectral partitioning algorithm in the pre-processing step of building the patches. This approach is somewhat similar to the recently proposed DISCO algorithm of \cite{DISCO}. The philosophy behind DISCO is to recursively divide large problems into two smaller problems, thus building a binary tree of subproblems, which can ultimately be solved by the traditional SDP-based localization methods. 3D-ASAP has the disadvantage of generating a number of smaller subproblems (patches) that is linear in the size of the network, and localizing all resulting patches is a computationally expensive task, which is exactly the issue addressed by 3D-SP-ASAP.

\begin{figure}[ht]
\begin{center}
\vspace{0mm}
\includegraphics[width=1.0\columnwidth]{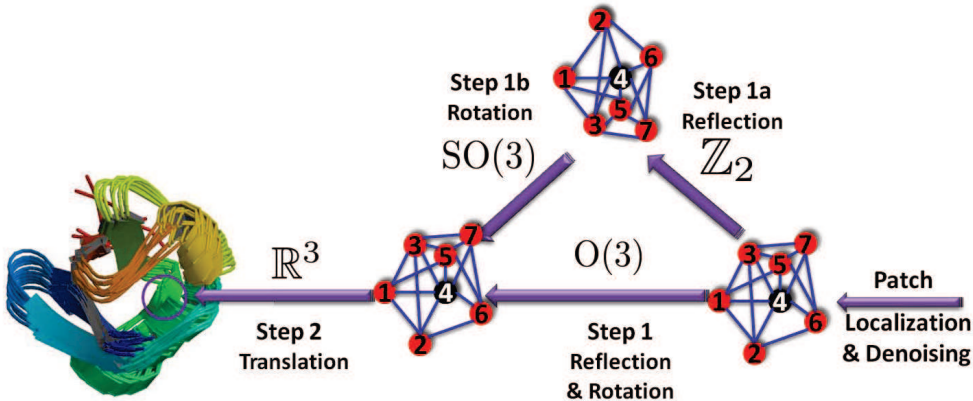}
\end{center}
\vspace{-2mm}
\caption{The 3D-ASAP recovery process for a patch in the 1d3z molecule graph. The localization of the rightmost subgraph in its own local frame is obtained from the available accurate bond lengths and noisy NOE measurements by using one of the SDP algorithms. To every patch, like the one shown here, there corresponds an element of Euc(3) that we try to estimate. Using the pairwise alignments, in Step 1 we estimate both the reflection and rotation matrix from an eigenvector synchronization computation over O(3), while in Step 2 we find the estimated coordinates by solving an overdetermined system of linear equations. If there is available information on the reflection or rotations of some patches, one may choose to further divide Step 1 into two consecutive steps. Step 1a is synchronization over $\mathbb{Z}_2$, while Step 1b is synchronization over SO(3), in which the missing reflections and rotations are estimated.}
\label{fig:pipeline}
\end{figure}

From a computational point of view, all steps of the algorithm can be implemented in a distributed fashion and scale linearly in the size of the network, except for the eigenvector computation, which is nearly-linear\footnote{Every iteration of the power method or the Lanczos algorithm that are used to compute the top eigenvectors is linear in the number of edges of the graph, but the number of iterations is greater than $O(1)$ as it depends on the spectral gap.}.
We show the results of numerous numerical experiments that demonstrate the robustness of our algorithm to noise and various topologies of the measurement graph.

This paper is organized as follows:
Section \ref{relwork} is a brief survey of related approaches for solving the graph realization problem in $\mathbb{R}^3$.
Section \ref{asap} gives an overview of the 3D-ASAP algorithm we propose.
Section \ref{WULS} introduces the notion of weakly uniquely localizable graphs used for breaking up the initial large network into patches, and explains the process of embedding and aligning the patches.
Section \ref{sec:sp-asap} proposes a variant of the 3D-ASAP algorithm by using a spectral clustering algorithm as a preprocessing step in breaking up the measurement graph into patches.
In Section \ref{MDA} we introduce a novel median-based denoising technique that improves the localization of individual patches, as well as a heuristic that corrects the shrinkage of the distance measurements.
Section \ref{sync_anchors} gives an analysis of different approaches to the synchronization problem over $\mathbb{Z}_2$ with anchor information, which is useful for incorporating molecular fragment information when estimating the reflections of the remaining patches.
In Section \ref{numexp}, we detail the results of numerical simulations in which we tested the performance of our algorithms in comparison to existing state-of-the-art algorithms.
Finally, Section \ref{conclusion} is a summary and a discussion of possible extensions of the algorithm and its  usefulness in other applications.

\section{Related work} \label{relwork}
Due to the importance of the graph realization problem, many heuristic strategies and numerical algorithms have been proposed in the last decade.
A popular approach to solving the graph realization problem is based on SDP, and has attracted considerable attention in recent years \cite{BiswasYe,sdpAngle,biswas_stress_sdp,biswas,univRig}. We defer to Section \ref{sec:sub_emb} a description of existing SDP relaxations of the graph realization problem.  Such SDP techniques are usually able to localize accurately small to medium sized problems (up to a couple thousands atoms).
However, many protein molecules have more than 10,000 atoms and the SDP approach by itself is no longer computationally feasible due to its increased running time.
In addition, the performance of the SDP methods is significantly affected by the number and location of the anchors, and the amount of noise in the data. To overcome the computational challenges posed by the limitations of the SDP solvers, several divide and conquer approaches have been  proposed recently for the graph realization problem. One of the earlier methods appears in \cite{dist_SDP}, and more recent methods include the Distributed Anchor Free Graph Localization (DAFGL) algorithm of  \cite{DAFGL}, and the DISCO algorithm (DIStributed COnformation) of \cite{DISCO}.

One of the critical assumptions required by the distributed SDP algorithm in \cite{dist_SDP} is that there exist anchor nodes distributed uniformly throughout the physical space. The algorithm relies on the anchor nodes to divide the sensors into clusters, and solves each cluster separately via an SDP relaxation. Combining smaller subproblems together can be a challenging task, however this is not an issue if there exist anchors within each smaller subproblem (as it happens in the sensor network localization problem) because the solution to the clusters induces a global positioning; in other words the alignment process is trivially solved by the existence of anchors within the smaller subproblems. Unfortunately, for the molecule problem, anchor information is scarce, almost inexistent, hence it becomes crucial to develop algorithms that are amenable to a distributed implementation (to allow for solving large scale problems) despite there being no anchor information available. The DAFGL algorithm of \cite{DAFGL} attempted to overcome this difficulty and was successfully applied to molecular conformations, where anchors are inexistent. However, the performance of DAFGL was significantly affected by the sparsity of the measurement graph, and the algorithm could tolerate only up to 5\% multiplicative noise in the distance measurements.

The recent DISCO algorithm of \cite{DISCO} addressed some of the shortcomings of DAFGL, and
used a similar divide-and-conquer approach to successfully reconstruct conformations of very large molecules. At each step, DISCO checks whether the current subproblem is small enough to be solved by itself, and if so, solves it via SDP and further improves the reconstruction by gradient descent. Otherwise, the current subproblem (subgraph) is further divided into two subgraphs, each of which is then solved recursively. To combine two subgraphs into one larger subgraph, DISCO aligns the two overlapping smaller subgraphs, and refines the coordinates by applying gradient descent.
In general, a divide-and-conquer algorithm  consists of two ingredients: dividing a bigger problem into smaller subproblems, and combining the solutions of the smaller subproblems into a solution for a larger subproblem. With respect to the former aspect, DISCO minimizes the number of edges between the two subgroups (since such edges are not taken into account when localizing the two smaller subgroups), while maximizing the number of edges within subgroups, since denser graphs are easier to localize both in terms of speed and robustness to noise. As for the latter aspect, DISCO divides a group of atoms in such a way that the two resulting subgroups have many overlapping atoms. Whenever the common subgroup of atoms is accurately localized, the two subgroups can be further joined together in a robust manner. DISCO employs several heuristics that determine when the overlapping atoms are accurately localized, and whether there are atoms that cannot be localized in a given instance (they do not attach to a given subgraph in a globally rigid way). Furthermore, in terms of robustness to noise, DISCO compared favorably to the above-mentioned divide-and-conquer algorithms.

Finally, another graph realization algorithm amenable to large scale problems is maximum variance unfolding (MVU), a non-linear dimensionality reduction technique proposed by \cite{fastMVU}. MVU produces a low-dimensional representation of the data by maximizing the variance of its embedding while preserving the original local distance constraints. MVU builds on the SDP approach and addresses the issue of the possibly high dimensional solution to the SDP problem. While rank constraints are non convex and cannot be directly imposed, it has been observed that low dimensional solutions emerge naturally when maximizing the variance of the embedding (also known as the maximum trace heuristic). Their main observation is that the coordinate vectors of the sensors are often well approximated by just the first few (e.g., 10) low-oscillatory eigenvectors of the graph Laplacian. This observation allows to replace the original and possibly large scale SDP by a much smaller SDP which leads to a significant reduction in running time.

While there exist many other localization algorithms, we provide here two other such references. One of the more recent iterative algorithms that was observed to perform well in practice compared to other traditional optimization methods is a variant of the gradient descent approach called the stress majorization algorithm, also known as SMACOF \cite{smacof}, originally introduced by \cite{smacof0}. The main drawback of this approach is that its objective function (commonly referred to as the stress) is not convex and the search for the global minimum is prone to getting stuck at local minima, which often makes the initial guess for gradient descent based algorithms important for obtaining satisfactory results. DILAND, recently introduced in \cite{diland}, is a distributed algorithm for localization with noisy distance measurements. Under appropriate conditions on the connectivity and triangulation of the network, DILAND was shown to converge almost surely to the true solution.

\section{The 3D-ASAP Algorithm}
\label{asap}

3D-ASAP is a divide and conquer algorithm that breaks up the large graph into many
smaller overlapping subgraphs, that we call patches, and ``stitches" them together consistently in a global coordinate system with the purpose of localizing the entire measurement graph. Unlike previous graph localization algorithms, we build patches that are globally rigid\footnote{There are several different notions of rigidity that appear in the literature, such as local and global rigidity, and the more recent notions of universal rigidity and unique localizability \cite{univRig,SY05}.} (or weakly uniquely localizable, that we define later),
in order to avoid foldovers in the final solution\footnote{We remark that in the geometric graph model, the non-edges also provide distance information since $(i,j)\notin E$ implies $d_{ij} > \rho$. This information sometimes allows to uniquely localize networks that are not globally rigid to begin with. However, we do not use this information in the standard formulation of our algorithm, but this could be further incorporated to enhance the reconstruction of very sparse networks.}.
We also assume that the given measurement graph is globally rigid to begin with; otherwise the algorithm  will discard the parts of the graph that do not attach globally rigid to the rest of the graph.

We build the patches in the following way. For every node $i$ we denote by $V(i) = \{ j : (i,j) \in E \} \cup \{i\}$ the set of its neighbors together with the node itself, and by $G(i) = (V(i), E(i))$ its subgraph of 1-hop neighbors.
If $G(i)$ is globally rigid, which can be checked efficiently using the randomized algorithm of \cite{GortlerGR}, then we embed it in $\mathbb{R}^3$.
Using SDP for globally rigid (sub)graphs can still produce incorrect localizations, even for noiseless  data. In order to ensure that SDP would give the correct localization, a stronger notion of rigidity is needed, that of unique localizability \cite{SY05}. However, in practical applications the distance measurements are noisy, so we introduce the notion of weakly localizable subgraphs, and use it to build patches that can be accurately localized.
The exact way we break up the 1-hop neighborhood subgraphs into smaller globally rigid or weakly uniquely localizable subgraphs is detailed in Section \ref{sec:sub_wuls}. In Section \ref{sec:sp-asap} we describe an alternative method for decomposing the measurement graph into patches, using a spectral partitioning algorithm. We denote by $N$ the number of patches obtained in the above decomposition of the measurement graph, and note that it may be different from $n$, the number of nodes in $G$, since the neighborhood graph of a node may contribute several patches or none. Also, note that the embedding of every patch in $\mathbb{R}^3$ is given in its own local frame. To compute such an embedding, we use the following SDP-based algorithms: FULL-SDP for noiseless data \cite{BiswasYe}, and SNL-SDP  for noisy data \cite{snlsdp}. Once each patch is embedded in its own coordinate system, one must find the reflections, rotations and translations that will stitch all patches together in a consistent manner, a process to which we refer as \emph{synchronization}.

We denote the resulting patches by $P_1,P_2,\ldots,P_N$. To every patch $P_i$ there corresponds an element $e_i \in $ Euc(3), where Euc(3) is the Euclidean group of rigid motions in $\mathbb{R}^3$. The rigid motion $e_i$ moves patch $P_i$ to its correct position with respect to the global coordinate system. Our goal is to estimate the rigid motions $e_1,\ldots,e_N$ (up to a global rigid motion) that will properly align all the patches in a globally consistent way. To achieve this goal, we first estimate the alignment between any pair of patches $P_i$ and $P_j$ that have enough nodes in common, a procedure we detail later in Section \ref{sec:sub_align}. The alignment of patches $P_i$ and $P_j$ provides a (perhaps noisy) measurement for the ratio $e_i e_j^{-1}$ in Euc(3).  We solve the resulting synchronization problem in a globally consistent manner, such that information from local alignments propagates to pairs of non-overlapping patches. This is done by replacing the synchronization problem over Euc(3) by two different consecutive synchronization problems.

In the first synchronization problem, we simultaneously find the reflections and rotations of all the patches using the eigenvector synchronization algorithm over the group O(3) of orthogonal matrices. When prior information on the reflections of some patches is available, one may choose to replace this first step by two consecutive synchronization problems, i.e., first estimate the missing rotations by doing synchronization over $\mathbb{Z}_2$ with molecular fragment information, as described in Section \ref{sync_anchors}, followed by another synchronization problem over SO(3) to estimate the rotations of all patches. Once both reflections and rotations are estimated, we estimate the translations by solving an overdetermined linear system. Taken as a whole, the algorithm integrates all the available local information into a global coordinate system over several steps by using the eigenvector synchronization algorithm and least squares over the isometries of the Euclidean space. The main advantage of the eigenvector method is that it can recover the reflections and rotations even if many of the pairwise alignments are incorrect. The algorithm is summarized in Table \ref{overview}.

\begin{table}[h]
\begin{center}
\footnotesize
\begin{tabular}{| p{0.15\columnwidth} | p{0.85\columnwidth} |}
\hline
INPUT & $G=(V,E), \; |V|=n, \; |E|=m, \; d_{ij}$ for $(i,j) \in E$ \\
\hline
\hline
Pre-processing Step &
1. Break the measurement graph $G$ into $N$ globally rigid or weakly uniquely localizable patches $P_1,\ldots,P_N$.\\
Patch Localization &
2. Embed each patch $P_i$ separately using either FULL-SDP (for noiseless data), or SNL-SDP (for noisy data), or cMDS (for complete patches).
\\
\hline
Step 1       & 1. Align all pairs of patches $(P_i, P_j)$ that have enough nodes in common. \\
             & 2. Estimate their relative rotation and possibly reflection  $h_{ij} \in O(3)$.\\
             & 3. Build a sparse $3N\times 3N$ symmetric matrix $H=(h_{ij})$ as defined in (\ref{H_ij_def}).\\
Estimating  Reflections
             & 4. Define $\mathcal{H} = D^{-1} H$, where $D$ is a diagonal matrix with \newline
              $D_{3i-2,3i-2}=D_{3i-1,3i-1}=D_{3i,3i} = deg(i)$, for $i = 1,\ldots,N$. \\
and Rotations  & 5. Compute the top 3 eigenvectors $v_i^\mathcal{H}$ of $\mathcal{H}$
                    satisfying $ \mathcal{H} v_i^\mathcal{H}= \lambda_i^\mathcal{H} v_i^\mathcal{H}, i=1,2,3$. \\
             & 6. Estimate the global reflection and rotation of patch $P_i$ by the orthogonal matrix $\hat{h}_i $ that is closest to $\widetilde{h}_i $ in Frobenius norm, where $\widetilde{h}_i $ is the submatrix corresponding to the i$^\text{th}$ patch in the $3N \times 3$ matrix formed by the top three eigenvectors $[v_1^\mathcal{H} v_2^\mathcal{H} v_3^\mathcal{H}]$. \\
             & 7. Update the current embedding of patch $P_i$ by applying the orthogonal transformation $\hat{h}_i $ obtained above (rotation and possibly reflection)\\
\hline
Step 2 &
1. Build the $m \times n$ overdetermined system of linear equations given in (\ref{lsstep3_3d}), after applying the median-based denoising heuristic.\\
Estimating &
2. Include the anchors information (if available) into the linear system.\\
Translations &
3. Compute the least squares solution for the $x$-axis, $y$-axis and $z$-axis coordinates.\\
\hline
\hline
OUTPUT & Estimated coordinates $\hat{p}_1,\ldots,\hat{p}_n$ \\
\hline
\end{tabular}
\end{center}
\caption{Overview of the 3D-ASAP Algorithm}
\label{overview}
\end{table}

\subsection{Step 1: Synchronization over O(3) to estimate reflections and rotations}

As mentioned earlier, for every patch $P_i$ that was already embedded in its local frame, we need to estimate whether or not it needs to be reflected  with respect to the global coordinate system, and what is the rotation that aligns it in the same coordinate system. In 2D-ASAP, we first estimated the reflections, and based on that, we further estimated the rotations. However, it makes sense to ask whether one can combine the two steps, and perhaps further increase the robustness to noise of the algorithm. By doing this, information contained in the pairwise rotation matrices helps in better estimating the reflections, and vice-versa, information on the pairwise reflection between patches helps in improving the estimated rotations. Combining these two steps also reduces the computational effort by half, since we need to run the eigenvector synchronization algorithm only once.

We denote the orthogonal transformation of patch $P_i$ by $h_i \in$ O(3), which is defined up to a global orthogonal rotation and reflection. The alignment of every pair of patches $P_i$ and $P_j$ whose intersection is sufficiently large, provides a measurement $h_{ij}$ (a $3\times3$ orthogonal matrix) for the ratio $h_i h_j^{-1}$. However, some ratio measurements can be corrupted because of errors in the embedding of the patches due to noise in the measured distances. We denote by $G^P = (V^P, E^P)$ the patch graph whose vertices $V^P$ are the patches $P_1,\ldots,P_N$, and two patches $P_i$ and $P_j$ are adjacent, $(P_i, P_j) \in E^P$, iff they have enough vertices in common to be aligned such that the ratio $h_i h_j^{-1}$ can be estimated. We let $A^P$ denote the adjacency matrix of the patch graph, i.e., $A^P_{ij}=1$ if  $(P_i, P_j) \in E^P$, and $A^P_{ij}=0$ otherwise. Obviously, two patches that are far apart and have no common nodes cannot be aligned, and there must be enough\footnote{E.g., four common vertices, although the precise  definition of ``enough" will be discussed later.} overlapping nodes to make the alignment possible. Figures \ref{fig:patch_info_1} and \ref{fig:patch_info_2} show a typical example of the sizes of the patches we consider, as well as their intersection sizes.

The first step of 3D-ASAP is to estimate the appropriate rotation and reflection of each patch. To that end, we use the eigenvector synchronization method as it was shown to perform well even in the presence of a large number of errors. The eigenvector method starts off by building the following $3N \times 3N$ sparse symmetric matrix $H=(h_{ij})$, where
$h_{ij}$ is the a $3\times3$ orthogonal matrix that aligns patches $P_i$ and $P_j$
\begin{equation}
  H_{ij} = \left\{
     \begin{array}{rl}
  h_{ij} & \;\; (i,j) \in E^P  \text{ ($P_i$ and $P_j$ have enough points in common)} \\
  O_{3 \times 3} & \;\; (i,j) \notin E^P \text{ ($P_i$ and $P_j$ cannot be aligned)}
     \end{array}
   \right.
\label{H_ij_def}
\end{equation}
We explain in more detail in Section \ref{sec:sub_align} the procedure by which we align pairs of patches, if such an alignment is at all possible.

Prior to computing the top eigenvectors of the matrix $H$, as introduced originally in \cite{sync}, we choose to use the following normalization (similar to 2D-ASAP in \cite{ASAP}). Let $D$ be a $3N \times 3N$ diagonal matrix\footnote{The diagonal matrix $D$ should not be confused with the partial distance matrix.}, whose entries are given by $D_{3i-2,3i-2}=D_{3i-1,3i-1}=D_{3i,3i} = deg(i)$, for $i = 1,\ldots,N$. We define the matrix
\begin{equation}
\mathcal{H} = D^{-1} H,
\end{equation}
which is similar to the symmetric matrix $D^{-1/2} H D^{-1/2}$ through
        $$\mathcal{H} = D^{-1/2} (D^{-1/2}  H D^{-1/2}) D^{1/2}. $$
Therefore, $\mathcal{H}$ has $3N$ real eigenvalues $\lambda_1^\mathcal{H} \geq \lambda_2^\mathcal{H} \geq \lambda_3^\mathcal{H}  \geq \lambda_4^\mathcal{H}  \geq \cdots \geq \lambda_{3N}^\mathcal{H} $ with corresponding $3N$ orthogonal eigenvectors $v_1^\mathcal{H} ,\ldots,v_{3N}^\mathcal{H} $, satisfying $\mathcal{H} v_i^\mathcal{H}  = \lambda_i^H v_i^\mathcal{H} $.
As shown in the next paragraphs, in the noise free case, $\lambda_1^\mathcal{H}=\lambda_2^\mathcal{H}=\lambda_3^\mathcal{H}$, and furthermore, if the patch graph is connected, then $\lambda_3^\mathcal{H} > \lambda_4^\mathcal{H}$. We define the estimated orthogonal transformations $\hat{h}_1,\ldots,\hat{h}_N \in $ O(3) using the top three eigenvectors $v_1^{\mathcal{H}}, v_2^{\mathcal{H}}, v_3^{\mathcal{H}}$, following the approach used in \cite{Singer_Shkolnisky}.

Let us now show that, in the noise free case, the top three eigenvectors of $\mathcal{H}$ perfectly recover the unknown group elements. We denote by $h_{i}$ the $3 \times 3$ matrix corresponding to the $i^{th}$ submatrix in the $3 \times N$  matrix $[v_1^{\mathcal{H}}, v_2^{\mathcal{H}}, v_3^{\mathcal{H}}]$. In the noise free case, $h_{i}$ is an orthogonal matrix and represents the solution which aligns patch $P_i$ in the global coordinate system, up to a global orthogonal transformation. To see this, we first let $h$ denote the $3N \times 3$ matrix formed by concatenating the true orthogonal transformation matrices $h_1,\ldots,h_N$. Note that when the patch graph $G^P$ is complete, $H$ is a rank 3 matrix since $H = h h^{T}$, and its top three eigenvectors are given by the columns of $h$
\begin{equation}
H h = h h^T h = h N I_3 = N h.
\end{equation}
In the general case when $G^P$ is a sparse connected graph, note that
\begin{equation}
  H h = D h, \mbox{ hence } D^{-1}H h = \mathcal{H} h = h,
\label{eigHh}
\end{equation}
and thus the three columns of $h$ are each eigenvectors of matrix $\mathcal{H}$, associated to the same eigenvalue $\lambda = 1$ of multiplicity 3. It remains to show this is the largest eigenvalue of $\mathcal{H}$ . We recall that the adjacency matrix of $G^P$ is $A^P$, and denote by $\mathcal{A}^P$ the $3N \times 3N$ matrix built by replacing each entry of value 1 in $A^P$ by the identity matrix $I_3$, i.e., $\mathcal{A}^P = A^P \otimes I_3$ where $\otimes$ denotes the tensor product of two matrices. As a consequence, the eigenvalues of $\mathcal{A}^P$ are just the direct products of the eigenvalues of $I_3$ and $A^P $, and the corresponding eigenvectors of $\mathcal{A}^P$ are the tensor products of the eigenvectors of $I$ and $A^P$. Furthermore, if we let $\Delta$ denote the $N \times N$ diagonal matrix with $\Delta_{ii} = deg(i),$ for $i=1,\ldots,N$, it holds true that
\begin{equation}
  D^{-1} \mathcal{A}^P = (\Delta^{-1} A^P) \otimes I_3,
\end{equation}
and thus the eigenvalues of $D^{-1} \mathcal{A}^P$ are the same as the eigenvalues of $\Delta^{-1} A^P$, each with multiplicity $3$. In addition, if $\Upsilon$ denotes the $3N \times 3N$  matrix with diagonal blocks $h_i$, $i=1,\ldots,N$, then the normalized alignment matrix $\mathcal{H}$ can be written as
\begin{equation}
    \mathcal{H} = \Upsilon D^{-1} \mathcal{A}^P \Upsilon^{-1},
\end{equation}
and thus $\mathcal{H}$ and $D^{-1} \mathcal{A}^P$ have the same eigenvalues, which are also the eigenvalues of $\Delta^{-1} A^P$, each with multiplicity 3. Whenever it is understood from the context, we will omit from now on the remark about the multiplicity 3.
Since the normalized discrete graph Laplacian $\mathcal{L}$ is defined as
\begin{equation}
\mathcal{L} = I - \Delta^{-1} A^P,
\label{laplacian}
\end{equation}
it follows that in the noise-free case, the eigenvalues of $I - \mathcal{H}$  are the same as the eigenvalues of $\mathcal{L}$. These eigenvalues are all non-negative, since $\mathcal{L}$ is similar to the positive semidefinite matrix $I- \Delta^{-1/2} A^P  \Delta^{-1/2}$, whose non-negativity follows from the identity $$x^T (I-\Delta^{-1/2} A^P \Delta^{-1/2}) x = \sum_{(i,j)\in E^P} \left(\frac{x_i}{\sqrt{deg(i)}} - \frac{x_j}{\sqrt{deg(j)}} \right)^2 \geq 0.$$ In other words,
\begin{equation}
1-\lambda_{3i-2}^{\mathcal{H}} = 1-\lambda_{3i-1}^{\mathcal{H}} = 1-\lambda_{3i}^{\mathcal{H}} = \lambda_i^{\mathcal{L}} \geq 0, \quad i=1,2,\ldots,N,
\end{equation}
where the eigenvalues of $\mathcal{L}$ are ordered in increasing order, i.e., $\lambda_1^{\mathcal{L}} \leq \lambda_2^{\mathcal{L}} \leq \cdots \leq \lambda_N^{\mathcal{L}}$.
If the patch graph $G^P$ is connected, then the eigenvalue $\lambda_1^{\mathcal{L}}=0$ is simple (thus $\lambda_2^{\mathcal{L}} > \lambda_1^{\mathcal{L}}  $) and its corresponding eigenvector $v_1^{\mathcal{L}}$ is the all-ones vector $\mb{1} = (1,1,\ldots,1)^T$.
Therefore, the largest eigenvalue of $\mathcal{H}$ equals 1 and has multiplicity 3, i.e., $\lambda_1^\mathcal{H}=\lambda_2^\mathcal{H}=\lambda_3^\mathcal{H}=1$, and $ \lambda_4^\mathcal{H} > \lambda_3^\mathcal{H}$. This concludes our proof that,  in the noise free case, the top three eigenvectors of $\mathcal{H} $ perfectly recover the true solution $h_1,\ldots,h_N \in$ O(3), up to a global orthogonal transformation.

However, when distance measurements are noisy and the pairwise alignments between patches are inaccurate, an estimated transformation $\widetilde{h}_i$ may not coincide with $h_i$, and in fact may not even be an orthogonal transformation. For that reason, we estimate $h_{i}$ by the closest orthogonal matrix to $\widetilde{h}_{i}$ in the Frobenius matrix norm\footnote{We remind the reader that the Frobenius norm of an $m \times n$ matrix $A$ can be defined in several ways
$ \|A\|_F^2 = \sum_{i=1}^{m} \sum_{j=1}^{n} |a_{ij}|^2 = Tr (A^T A) = \sum_{i=1}^{ \min(n,m)} \sigma_i^2 $, where $\sigma_i$ are the singular values of $A$.}
\begin{equation}
  \label{est-h}
  \hat{h}_i  =  \underset{ X \in  O(3)} { \operatorname{argmin}} \|\widetilde{h}_{i}  - X\|_F
\end{equation}
We do this by using the well known procedure (e.g.,\cite{arun}), $ \hat{h}_i =U_i V_i^T$, where $ \widetilde{h}_i =U_i \Sigma_i V_i^T$ is the singular value decomposition of $\widetilde{h}_i$, see also \cite{Fan1955} and \cite{Keller1975}. Note that the estimation of the orthogonal transformations of the patches are up to a global orthogonal transformation (i.e., a global rotation and reflection with respect to the original embedding). Also, the only difference between this step and the angular synchronization algorithm in \cite{sync} is the normalization of the matrix prior to the computation of the top eigenvector. The usefulness of this normalization was first demonstrated in 2D-ASAP, in the synchronization process over $\mathbb{Z}_2$ and SO(2).

\begin{figure}[h]
\centering
\subfigure[$\eta=0\%, \; \tau = 0\%$,  and $MSE = 6e-4 $ ]{
\includegraphics[width=0.31\columnwidth]{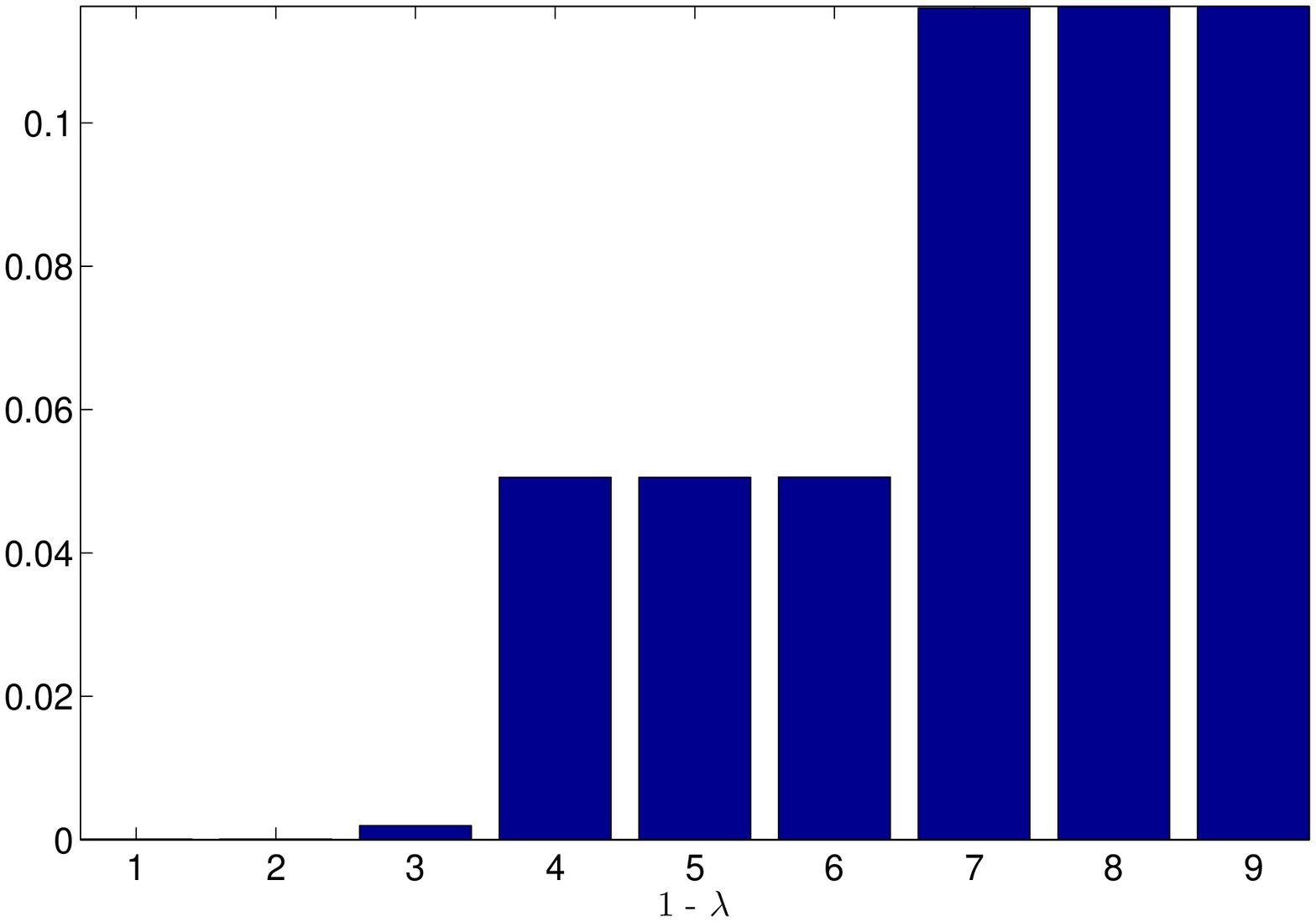}
}
\subfigure[$\eta=20\%, \; \tau = 0\%$, and $MSE = 0.05$]{
\includegraphics[width=0.31\columnwidth]{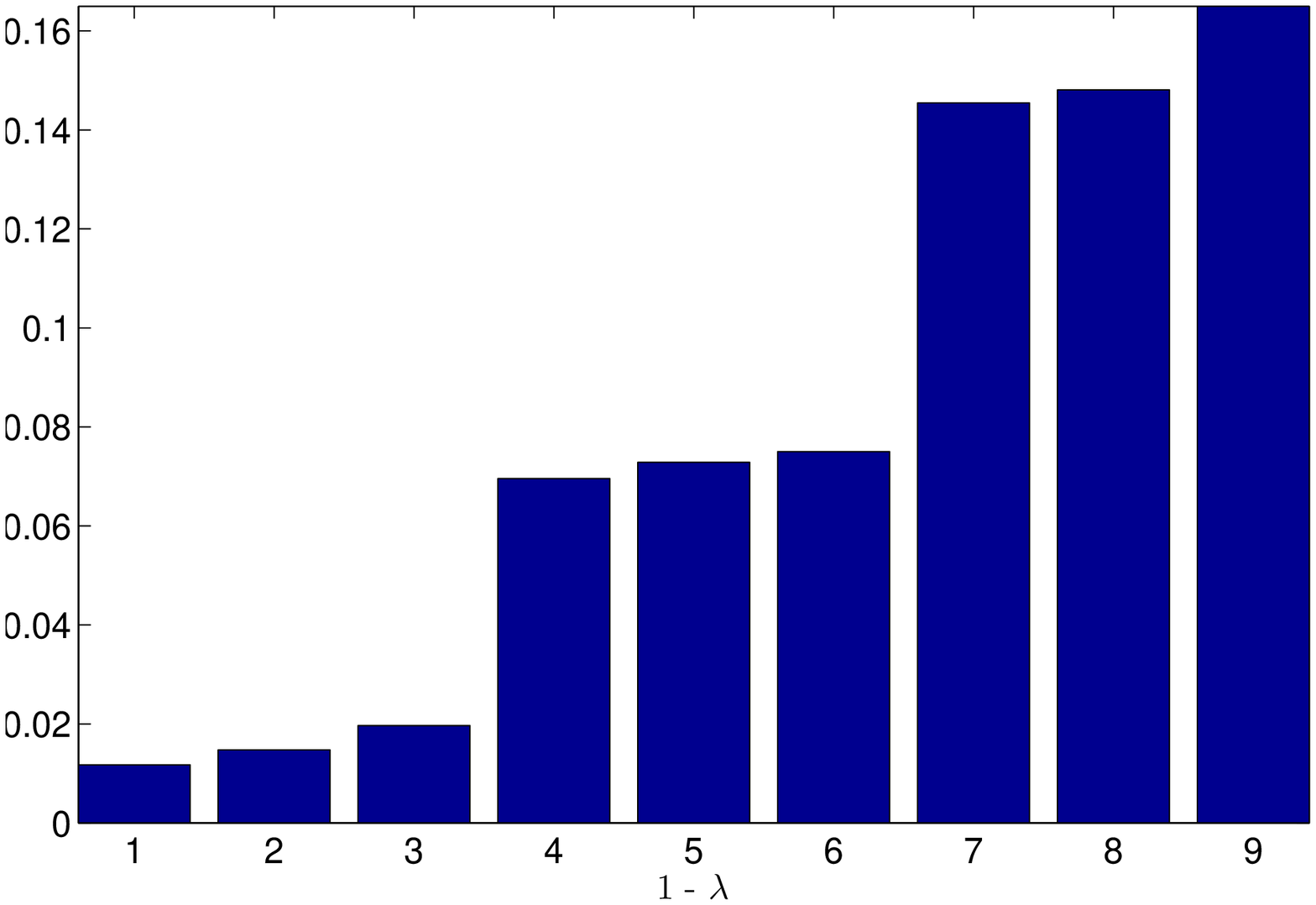}
}
\subfigure[$\eta=40\%, \; \tau = 4\%$, and $MSE = 0.36$]{
\includegraphics[width=0.31\columnwidth]{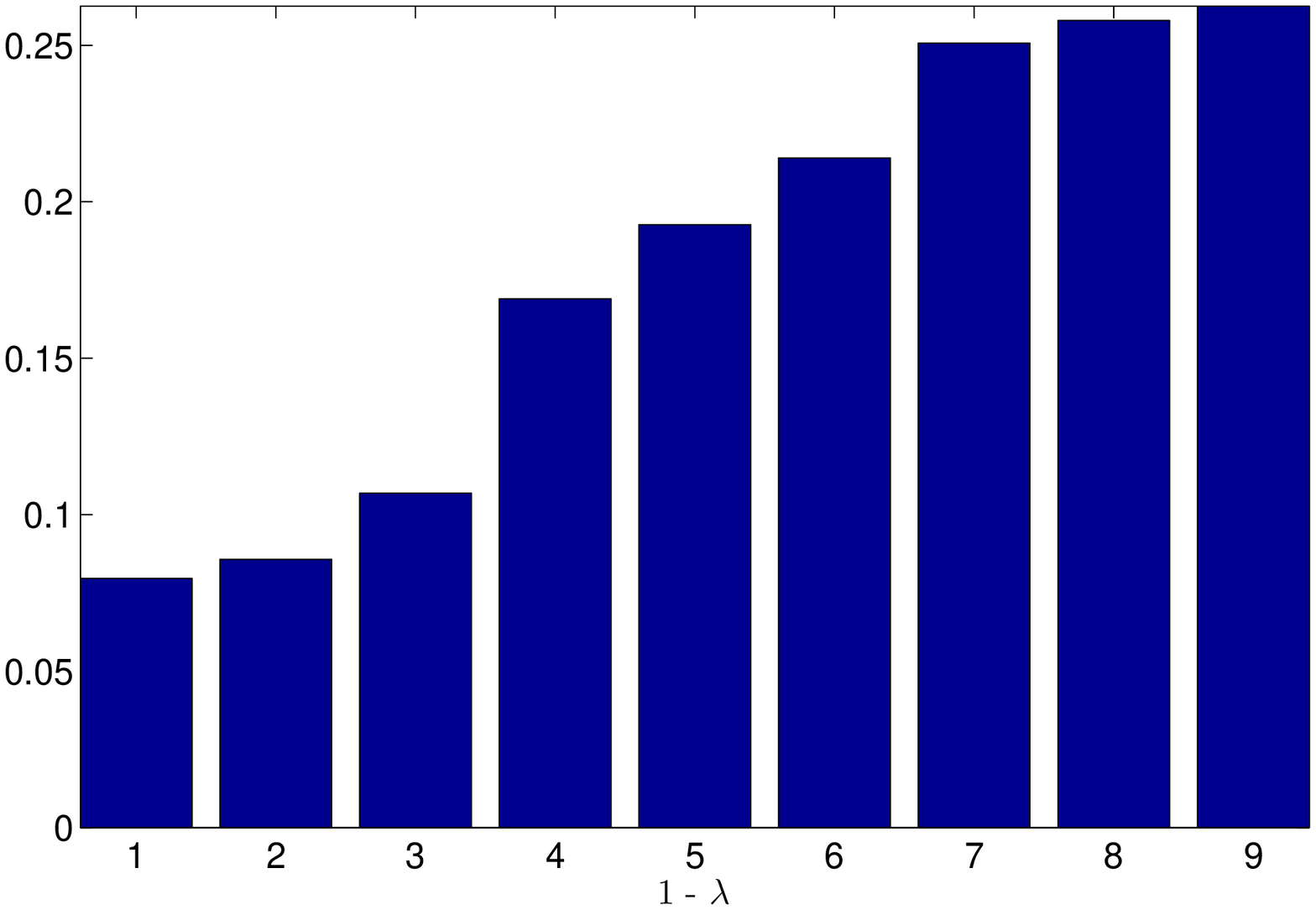}
}
\caption{Bar-plot of the top $9$ eigenvalues of $\mathcal{H}$ for the UNITCUBE and various noise levels $\eta$. The resulting error rate $\tau$ is the percentage of patches whose reflection was incorrectly estimated. To ease the visualization of the eigenvalues of $\mathcal{H}$, we choose to plot $1-\lambda^\mathcal{H}$ because the top eigenvalues of $\mathcal{H}$ tend to pile up near $1$, so it is difficult to differentiate between them by looking at the bar plot of $\lambda^\mathcal{H}$.
}
\label{fig:Spectra_Steps_12}
\end{figure}


We use the mean squared error (MSE) to measure the accuracy of this step of the algorithm in estimating the orthogonal transformations. To that end, we look for an optimal orthogonal transformation $\hat{O} \in $ O(3) that minimizes the sum of squared distances between the estimated orthogonal transformations and the true ones:
\begin{equation}
 \hat{O} = \underset{O \in O(3)}{\operatorname{argmin}}  \sum_{i=1}^{N} \|h_i - O \hat{h}_{i} \|_{F}^{2}
\end{equation}
In other words, $\hat{O}$ is the optimal solution to the registration problem between two sets of orthogonal transformations in the least squares sense. Following the analysis of
\cite{Singer_Shkolnisky}, we make use of properties of the trace such as $Tr(AB)=Tr(BA)$, $Tr(A)=Tr(A^T)$ and notice that

\begin{eqnarray}
 \sum_{i=1}^{N} \|h_i - O \hat{h}_{i} \|_{F}^{2}  &=& \sum_{i=1}^{N} Tr \left[ \left( h_i - O \hat{h}_{i} \right) \left( h_i - O \hat{h}_{i} \right)^T \right]     \nonumber \\
							&=&  \sum_{i=1}^{N} Tr \left[ 2 I - 2 O \hat{h}_i h_i^T \right] =   6 N - 2 Tr \left[ O  \sum_{i=1}^{N}  \hat{h}_i h_i^T \right]
\label{MSE_deriv}
\end{eqnarray}
If we let $Q$ denote the $3 \times 3$ matrix
\begin{equation}
  Q = \frac{1}{N} \sum_{i=1}^{N} \hat{h}_i h_i^T
\end{equation}
it follows from (\ref{MSE_deriv}) that the MSE is given by minimizing
\begin{equation}
   \frac{1}{N}  \sum_{i=1}^{N} \|h_i - O \hat{h}_{i} \|_{F}^{2} = 6 - 2 Tr(O Q).
\end{equation}
In \cite{arun} it is proven that $Tr(O Q) \leq Tr (V U^T Q)$, for all $O \in O(3)$, where $ Q = U \Sigma V^T$ is the singular
value decomposition of $Q$. Therefore, the MSE is minimized by the orthogonal matrix $ \hat{O} = VU^T$ and is given by
\begin{equation}
   \frac{1}{N}  \sum_{i=1}^{N} \|h_i - \hat{O} \hat{h}_{i} \|_{F}^{2} = 6 - 2 Tr( V U^T  U \Sigma V^T ) =
6 - 2 \sum_{k=1}^{3} \sigma_k
\end{equation}
where $\sigma_1, \sigma_2, \sigma_3$ are the singular values of $Q$. Therefore, whenever $Q$ is an orthogonal matrix for which $\sigma_1= \sigma_2= \sigma_3=1$, the MSE vanishes. Indeed, the numerical experiments in Table \ref{tab:Stats_12_UNIT} confirm that  for noiseless data, the MSE is very close to zero. To illustrate the success of the eigenvector method in estimating the reflections, we also compute $\tau$, the percentage of patches whose reflection was incorrectly estimated. Finally, the last two columns in Table \ref{tab:Stats_12_UNIT} show the recovery errors when, instead of doing synchronization over O(3), we first synchronize over $\mathbb{Z}_2$ followed by SO(3). 

\begin{table}[h]
\begin{minipage}[b]{0.99\linewidth}
\centering
\begin{tabular}{|c||c|c||c|c|}
\hline
&  \multicolumn{2}{c||}{O(3)} &  \multicolumn{2}{c|}{$\mathbb{Z}_2$ and SO(3)}  \\
\cline{2-5}
   $\eta$  & $\tau$ & MSE & $\tau$ & MSE  \\
\hline\hline
0\%  &   0\%  &  6e-4  &  0\% & 7e-4 \\
10\% &   0\%  &  0.01 &  0\% & 0.01 \\
20\% &   0\%  &   0.05 & 0\% & 0.05\\
30\% &   5.8\% & 0.35  & 5.3\% & 0.32 \\
40\% &   4\%   &   0.36 & 5\% & 0.40 \\
50\% &   7.4\%  & 0.65  & 9\% & 0.68 \\
\hline
\end{tabular}
\caption{The errors in estimating the reflections and rotations when aligning the $N=200$ patches resulting from for the UNITCUBE graph on $n=212$ vertices, at various levels of noise. We used $\tau$ to denote the percentage of patches whose reflection was incorrectly estimated.}
\label{tab:Stats_12_UNIT}
\end{minipage}
\end{table}


\subsection{Step 2: Synchronization over $\mathbb{R}^3$ to estimate translations}
The final step of the 3D-ASAP algorithm is computing the global translations of all patches and recovering the true coordinates. For each patch $P_k$, we denote by $G_k = (V_k,E_k)$\footnote{Not to be confused with $G(i)=(V(i),E(i))$ defined in the beginning of this section.}
the graph associated to patch $P_k$, where $V_k$ is the set of nodes in $P_k$, and $E_k$ is the set of edges induced by $V_k$ in the measurement graph $G=(V,E)$.
\begin{figure}[h]
\begin{center}
 \includegraphics[width=0.35\columnwidth, keepaspectratio = true]{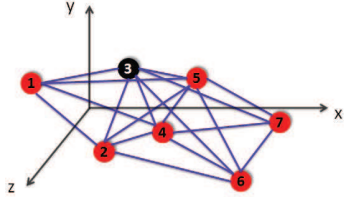}
\end{center}
\caption{ An embedding of a patch $P_k$ in its local coordinate system (frame) after it was appropriately reflected and rotated. In the noise-free case, the coordinates $p_i^{(k)} = (x_i^{(k)},y_i^{(k)},z_i^{(k)})^T$ agree with the global positioning $p_i = (x_i, y_i, z_i)^T$ up to some translation $t^{(k)}$ (unique to all $i$ in $V_k$).}
\label{fig:patch_xy_frame}
\end{figure}
We denote by $p^{(k)}_i = ( x_i^{(k)} , y_i^{(k)}, z_i^{(k)} )^T$ the known local frame coordinates of node $i \in V_k$ in the embedding of patch $P_k$ (see Figure \ref{fig:patch_xy_frame}).

At this stage of the algorithm, each patch $P_k$ has been properly reflected and rotated so that the local frame coordinates are consistent with the global coordinates, up to a translation $t^{(k)} \in \mathbb{R}^3$. In the noise-free case we should therefore have
\begin{equation}
\label{lsqr}
p_i = p_i^{(k)} + t^{(k)}, \quad i\in V_k, \quad k=1,\ldots,N.
\end{equation}
We can estimate the global coordinates $p_1,\ldots,p_n$ as the least squares solution to the overdetermined system of linear equations (\ref{lsqr}), while ignoring the by-product translations $t^{(1)},\ldots,t^{(N)}$. In practice, we write a linear system for the displacement vectors $p_i-p_j$ for which the translations have been eliminated. Indeed, from (\ref{lsqr}) it follows that each edge $(i,j) \in E_k$ contributes a linear equation of the form \footnote{In fact, we can write such equations for every $i,j\in V_k$ but choose to do so only for edges of the original measurement graph.}
\begin{equation}
 p_i - p_j = p^{(k)}_i - p^{(k)}_j, \quad  (i,j) \in E_k, \quad k=1,\ldots,N.
\label{st3const}
\end{equation}
In terms of the $x$, $y$ and $z$ global coordinates of nodes $i$ and $j$, (\ref{st3const}) is equivalent to
\begin{eqnarray}
	x_i - x_j &=& x^{(k)}_i - x^{(k)}_j,\quad  (i,j) \in E_k,\quad k=1,\ldots,N, \label{x_eq}
\end{eqnarray}
and similarly for the $y$ and $z$ equations. We solve these three linear systems separately, and recover the coordinates $x_1,\ldots,x_n$, $y_1,\ldots,y_n$, and $z_1,\ldots,z_n$.  Let $T$ be the least squares matrix associated with the overdetermined linear system in (\ref{x_eq}), $x$ be the $n\times 1$ vector representing the $x$-coordinates of all nodes, and $b^x$ be the vector with entries given by the right-hand side of (\ref{x_eq}). Using this notation, the system of equations given by (\ref{x_eq}) can be written as
\begin{equation}
T x = b^{x},
\end{equation}
and similarly for the $y$ and $z$ coordinates. Note that the matrix $T$ is sparse with only two non-zero entries per row and that the all-ones vector $\mb{1} = (1,1,\ldots,1)^T$ is in the null space of $T$, i.e., $T\mb{1} = 0$, so we can find the coordinates only up to a global translation.

To avoid building a very large least squares matrix, we combine the information provided by the same edges across different patches in only one equation, as opposed to having one equation per patch. In 2D-ASAP \cite{ASAP}, this was achieved by adding up all equations of the form (\ref{x_eq}) corresponding to the same edge $(i,j)$ from different patches, into a single equation, i.e.,
\begin{equation}
\sum_{k \in \{1,\ldots,N\} \; s.t. (i,j) \in E_k} x_i-x_j = \sum_{k \in \{1,\ldots,N\} \; s.t. (i,j) \in E_k}
x_i^{(k)} - x_j^{(k)}, \quad  (i,j) \in E,
\label{lsstep3}
\end{equation}
and similarly for the $y$ and $z$-coordinates. For very noisy distance measurements, the displacements $x_i^{(k)} - x_j^{(k)}$ will also be corrupted by noise and the motivation for (\ref{lsstep3}) was that adding up such noisy values will average out the noise. However, as the noise level increases, some of the embedded patches will be highly inaccurate and will thus generate outliers in the list of displacements above. To make this step of the algorithm more robust to outliers, instead of averaging over all displacements, we select the median value of the displacements and use it to build the least squares matrix
\begin{equation}
x_i-x_j = \underset{k \in \{1,\ldots,N\} \; s.t. (i,j) \in E_k}{\operatorname{median}}
                     \{  x_i^{(k)} - x_j^{(k)} \} , \quad  (i,j) \in E,
\label{lsstep3_3d}
\end{equation}
We denote the resulting $m\times n$ matrix by $\tilde{T}$, and its $m\times 1$ right-hand-side vector by $\tilde{b}^x$. Note that $\tilde{T}$ has only two nonzero entries per row \footnote{Note that some edges in $E$ may not be contained in any patch $P_k$, in which case the corresponding row in $\tilde{T}$ has only zero entries.}, $\tilde{T}_{e,i}=1, \tilde{T}_{e,j}=-1$, where $e$ is the row index corresponding to the edge $(i,j)$. The least squares solution
 $\hat{p} = \hat{p}_1,\ldots,\hat{p}_n$ to
\begin{equation}
\tilde{T} x = \tilde{b}^x,\; \mbox{ }\;\tilde{T} y = \tilde{b}^y, \; \mbox{and }\;\tilde{T} z = \tilde{b}^z,
\label{lsstep3final}
\end{equation}
is our estimate for the coordinates $p = p_1,\ldots,p_n$, up to a global rigid transformation.

Whenever the ground truth solution $p$ is available, we can compare our estimate $\hat{p}$ with $p$. To that end, we remove the global reflection, rotation and translation from $\hat{p}$, by computing the best procrustes alignment between $p$ and $\hat{p}$, i.e. $ \tilde{p} = O \hat{p} + t$, where $O$ is an orthogonal rotation and reflection matrix, and $t$ a translation vector, such that we minimize the distance between the original configuration $p$ and $\tilde{p}$, as measured by the least squares criterion $ \sum_{i=1}^{n} \| p_i - \tilde{p}_i \| ^2$. Figure \ref{fig:hist_coord_error} shows
the histogram of errors in the coordinates, where the error associated with node $i$ is given by $ \parallel p_i - \hat{p_i} \parallel$.

\begin{figure}[h]
\begin{center}
\subfigure[$\eta = 0\%, ERR_c  = 2e-3$]{\includegraphics[width=0.32\columnwidth]{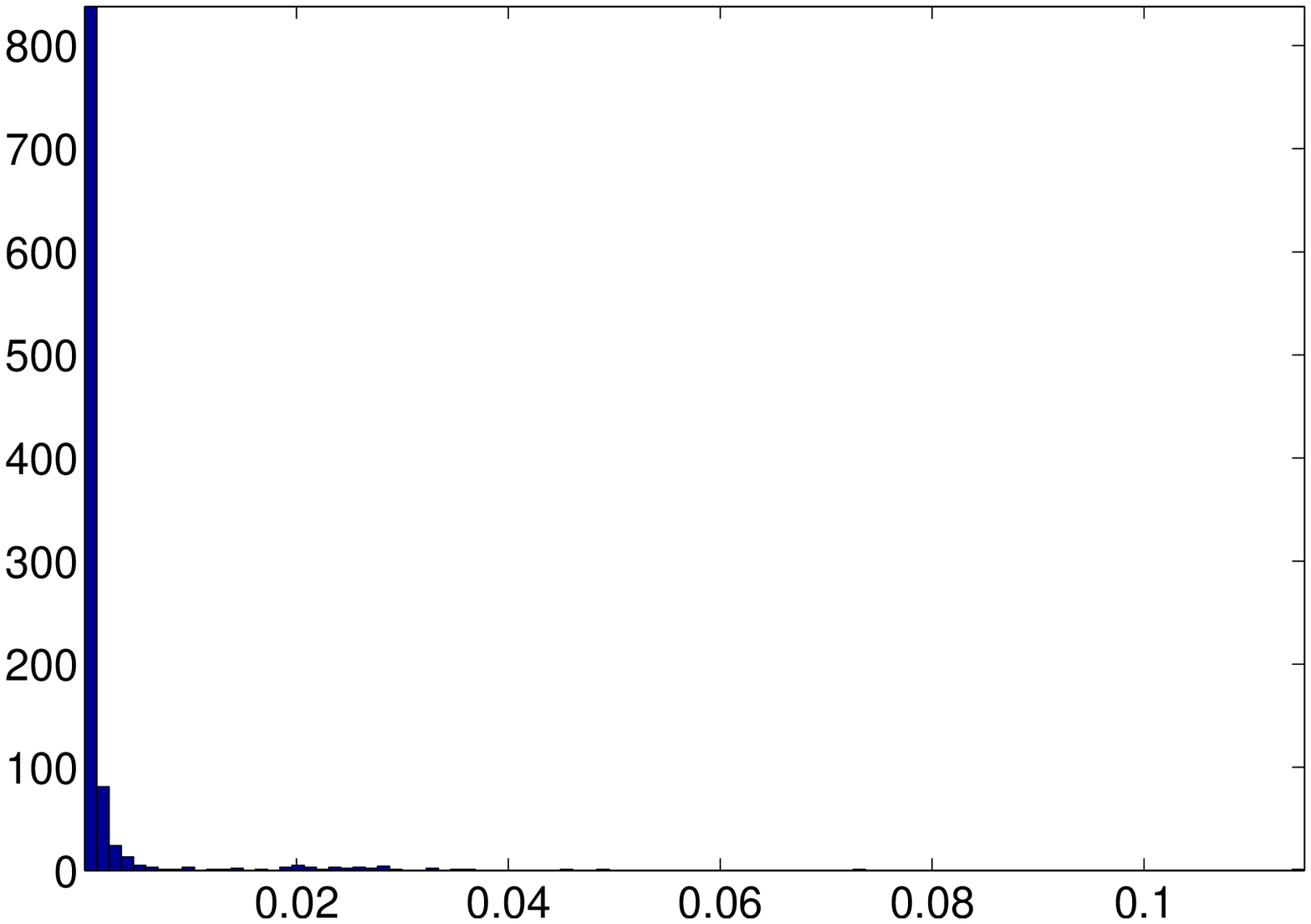}}
\subfigure[$\eta = 30\%, ERR_c =0.57$  ]{\includegraphics[width=0.32\columnwidth]{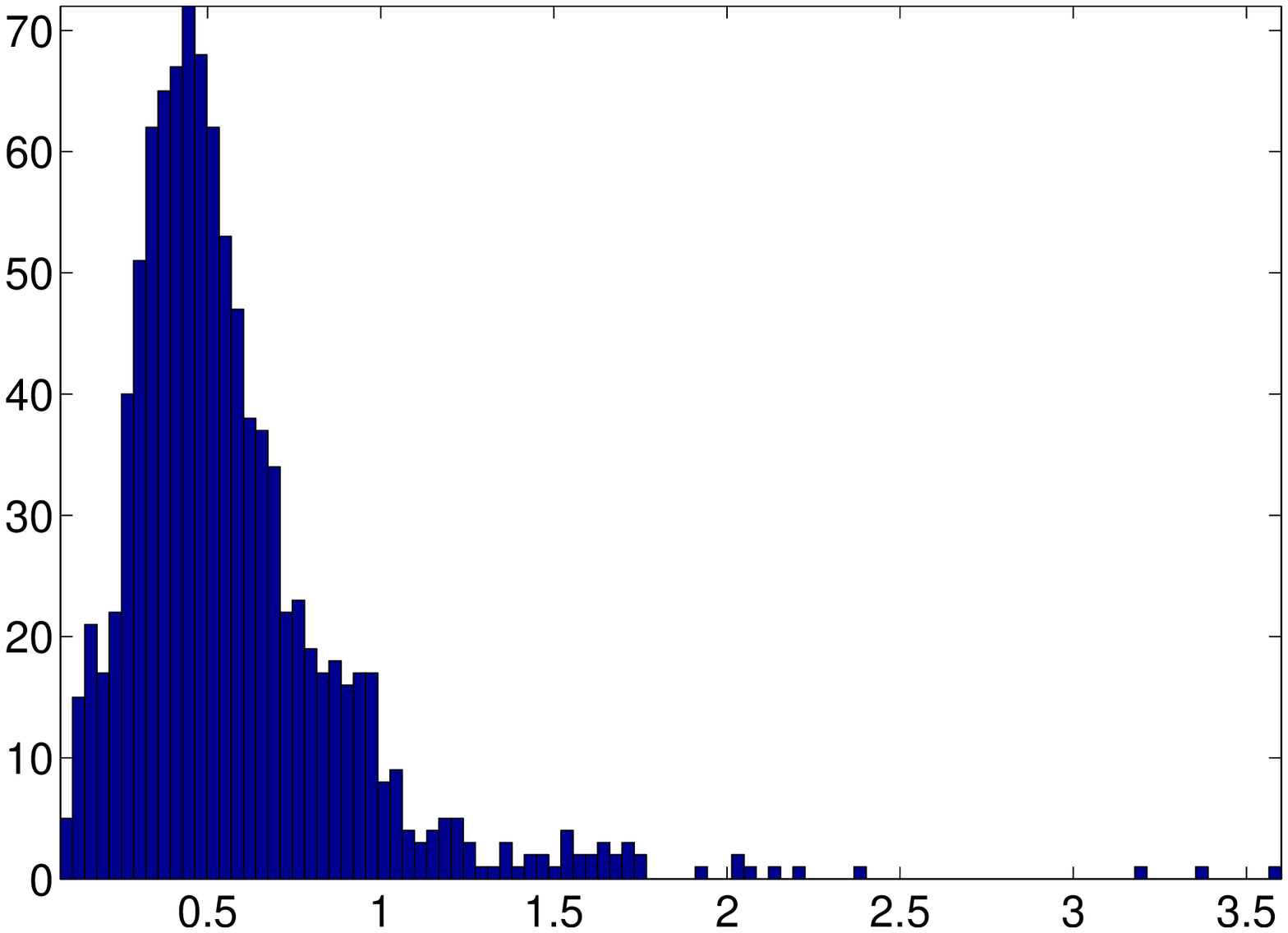}}
\subfigure[$\eta = 50\%, ERR_c =1.23$]{\includegraphics[width=0.32\columnwidth]{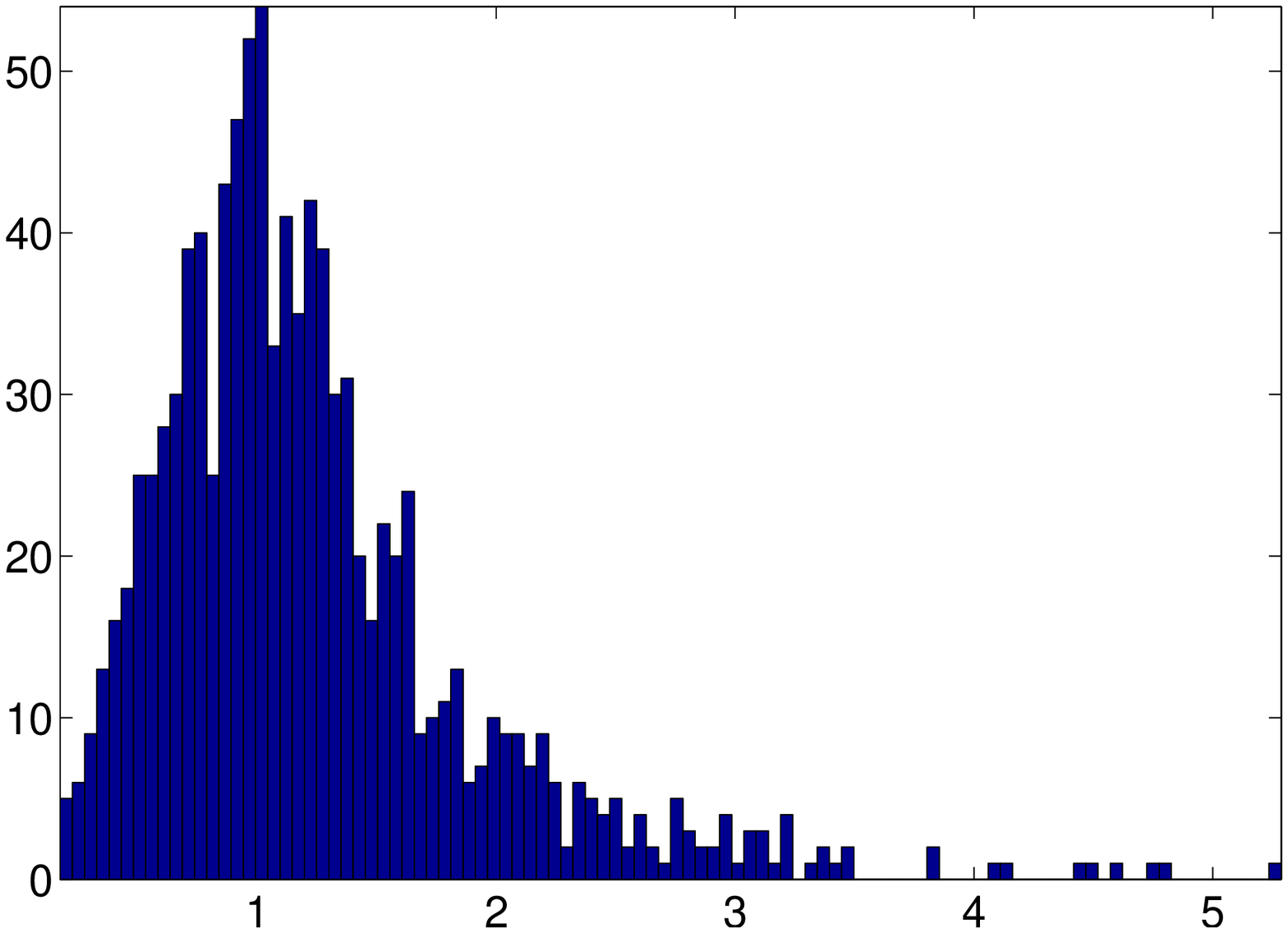}}
\end{center}
\caption{Histograms of coordinate errors $\|p_i - \hat{p}_i\|$ for all atoms in the 1d3z molecule,
for different levels of noise. In all figures, the x-axis is measured in Angstroms. Note the change of scale for Figure (a), and the fact that the largest error showed there is 0.12. We used $ERR_c$ to denote the average coordinate error of all atoms.}
\label{fig:hist_coord_error}
\end{figure}

We remark that in 3D-ASAP anchor information can be incorporated similarly to the 2D-ASAP algorithm \cite{ASAP}; however we do not elaborate on this here since there are no anchors in the molecule problem.

\section{Extracting, embedding, and aligning patches}
\label{WULS}

This section describes how to break up the measurement graph into patches, and how to embed and pairwise align the resulting patches. We start in Section \ref{sec:sub_kstar} with a description of how to extract globally rigid subgraphs from 1-hop neighborhood graphs, and discuss the advantages that  $k$-star graphs bring. However, in practice, we do not follow either approach  to extract patches, since SDP-based methods may still produce inaccurate localizations of globally rigid patches even in the case of noiseless data. Therefore, in Section \ref{sec:sub_wuls} we recall a recent result of \cite{SY05} on \textit{uniquely d-localizable} graphs, which can be accurately localized by SDP-based methods. We thus lay the ground for the notion of \textit{weakly uniquely localizable} (WUL) graphs, which we introduce with the purpose of being able to localize the resulting patches even when the distance measurements are noisy. Section \ref{sec:sub_pseudo} discusses the issue of finding ``pseudo-anchor" nodes, which are needed when extracting WUL subgraphs. In Section \ref{sec:sub_emb} we discuss several SDP-relaxations to the graph localization problem, which we use to embed the WUL patches. In Section \ref{sec:sub_infomol}  we remark on several additional constraints specific to the molecule problem, which are currently not incorporated in 3D-ASAP. Finally, Section \ref{sec:sub_align} explains the procedure for aligning a pair of overlapping patches.

\subsection{Extracting globally rigid subgraphs}
\label{sec:sub_kstar}

Although there is no combinatorial characterization of global rigidity in $\mathbb{R}^3$ (but only necessary conditions \cite{conditions_UGR}), one may exploit the fact that a 1-hop neighborhood subgraph is always a  ``star" graph, i.e., a graph with at least one central node connected to everybody else.
The process of coning a graph $G$ adds a new vertex $v$, and adds edges from $v$ to all original vertices in $G$, creating the cone graph $G * v$. A recent result of \cite{coning} states that a graph is generically globally rigid in $\mathbb{R}^{d-1}$ iff the cone graph is generically globally rigid in $\mathbb{R}^{d}$. This result allows us to reduce the notion of global rigidity in $\mathbb{R}^3$ to global rigidity in $\mathbb{R}^2$, after removing the center node. We recall that sufficient and necessary conditions for two dimensional combinatorial global rigidity have been recently established \cite{JacksonJordan,conditions_UGR}, i.e., 3-connectivity and redundant rigidity  (meaning the graph remains locally rigid after the removal of any single edge). Therefore, breaking a graph into maximally globally rigid components amounts to finding the maximally 3-connected components, and furthermore, extracting the maximally redundantly rigid components. The first $O(n + m)$ algorithm for finding the 3-connected components of a general graph was given by Hopcroft and Tarjan \cite{tarjan}.
A combinatorial characterization of redundant rigidity in dimension two was given in \cite{conditions_UGR}, together with an $O(n^2)$ algorithm.

The rest of this section presents an alternative method for extracting globally rigid subgraphs, while avoiding the notion of redundant rigidity. Motivated by the fact that 1-hop neighborhood graphs may have multiple centers (especially in random geometric graphs), we introduce the following family of graphs. A \textbf{$k$-star graph} is a graph which contains at least $k$ vertices (centers) that are connected to all remaining nodes.  For example, for each node $i$, the local graph $G(i)$ composed of the central node $i$ and all its neighbors takes the form of a 1-star graph (which we simply refer to as a star graph). Note that in our definition, unlike perhaps more conventional definitions  of star graphs, we allow edges between non-central nodes to exist.
 \begin{proposition}
    A $(k-1)$-star graph is generically globally rigid in  $\mathbb{R}^k$ iff it is $(k+1)$-vertex-connected.
\label{ggrprop}
\end{proposition}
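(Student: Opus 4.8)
The plan is to exploit the coning theorem of \cite{coning} to strip off the $k-1$ centers one at a time, reducing the $(k-1)$-star graph $G$ in $\mathbb{R}^k$ to a much smaller graph $H$ living in $\mathbb{R}^1$, and then to show that \emph{both} properties in the statement --- generic global rigidity in $\mathbb{R}^k$ and $(k+1)$-vertex-connectivity --- are each equivalent to the single condition that $H$ be $2$-vertex-connected. Concretely, I would fix any $k-1$ of the centers, call them $v_1,\ldots,v_{k-1}$, and let $H$ be the subgraph of $G$ induced on the remaining vertices. Since each center is adjacent to every other vertex, including the other centers, $G$ is exactly the iterated cone $H * v_1 * \cdots * v_{k-1}$, where $*$ denotes coning.

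First I would treat the rigidity side. Applying the result of \cite{coning} that $F$ is generically globally rigid in $\mathbb{R}^{d-1}$ iff $F*v$ is generically globally rigid in $\mathbb{R}^d$, once for each of the $k-1$ cone vertices, gives that $G$ is generically globally rigid in $\mathbb{R}^k$ if and only if $H$ is generically globally rigid in $\mathbb{R}^1$. The one-dimensional case is classical and elementary: a graph on at least three vertices is generically globally rigid on the line if and only if it is $2$-connected. Indeed, a cut vertex lets one reflect one side of the configuration about that vertex to produce an incongruent realization with identical edge lengths, whereas $2$-connectivity rules this out. Thus the rigidity side reduces to ``$H$ is $2$-connected.'' (Alternatively one may de-cone only $k-2$ times, land in $\mathbb{R}^2$, and invoke the combinatorial characterization of \cite{JacksonJordan,conditions_UGR}; the $1$D route is cleaner because it avoids having to separately verify redundant rigidity.)

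Next I would treat the connectivity side through the companion fact that coning raises vertex-connectivity by exactly one: for a graph $F$ on at least two vertices, $\kappa(F*v) = \min(\kappa(F)+1,\, |V(F)|)$, since the universal cone vertex can be separated from the rest only by deleting all of $V(F)$, or by deleting a separator of $F$ together with $v$ itself. Applying this $k-1$ times, and observing that the cap $|V(F)|$ is never active once $H$ has at least three vertices, yields $\kappa(G) = \kappa(H) + (k-1)$. Hence $G$ is $(k+1)$-connected if and only if $\kappa(H) \ge 2$, i.e. if and only if $H$ is $2$-connected. Combining this with the previous paragraph, both properties are equivalent to the same statement about $H$, which proves the proposition.

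The main care is needed in the boundary cases, and this is where I expect the only genuine subtlety to lie. The clean equivalence requires $G$ to have at least $k+2$ vertices, equivalently $|V(H)| \ge 3$; this is exactly what keeps the connectivity cap inactive and the $1$D characterization valid. For fewer vertices the statement degenerates to small complete graphs such as $K_{k+1}$, which are generically globally rigid in $\mathbb{R}^k$ yet only $k$-connected --- the familiar exception to Hendrickson's necessary condition --- so these must be excluded or handled separately. Apart from pinning down the one-dimensional characterization and the ``no-capping'' bookkeeping in the connectivity lemma, the coning reduction itself is purely mechanical once \cite{coning} is invoked.
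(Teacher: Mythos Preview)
Your proof is correct and rests on the same engine as the paper's --- the coning theorem of \cite{coning} --- but the packaging differs in two ways worth noting. The paper argues by induction on $k$: it peels off a single center to obtain a $(k-2)$-star graph, observes that $(k+1)$-connectivity of $G$ forces $k$-connectivity of the de-coned graph, applies the inductive hypothesis in $\mathbb{R}^{k-1}$, and then re-cones; its base case is $k=2$, which it cites from \cite{ASAP}. For the converse it does \emph{not} use coning at all but simply invokes Hendrickson's necessary condition \cite{conditions_UGR} that generic global rigidity in $\mathbb{R}^k$ implies $(k+1)$-connectivity. By contrast, you strip off all $k-1$ centers in one sweep, land in $\mathbb{R}^1$, and then prove the biconditional in a single stroke by showing that \emph{both} sides reduce to ``the de-coned graph $H$ is $2$-connected.'' This is slightly more self-contained: you avoid citing the $\mathbb{R}^2$ case separately, and you get the converse direction from the same coning/connectivity bookkeeping rather than appealing to Hendrickson's theorem as a black box. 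Your explicit treatment of the small-vertex boundary cases (where $K_{k+1}$ is globally rigid but only $k$-connected) is also something the paper leaves implicit.
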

\begin{proof}
We prove the statement in the proposition by induction on $k$. The case $k=2$ was previously shown in \cite{ASAP}. Assuming the statement holds true for $k-2$, we show it remains true for $k-1$ as well.

Let $H$ be a $(k+1)$-vertex-connected $(k-1)-$star graph, $\mathcal{S} =\{v_1,\ldots,v_{k-1} \}$ be its center nodes, and
$H^{*}$ the graph obtained by removing one of its center nodes.
Since $H$ is $(k+1)$-vertex-connected then $H^{*}$ must be $k$-vertex-connected, since otherwise, if $u$ is a
cut-vertex in $H^{*}$, then $\{ v_1,u\}$ is a vertex-cut of size 2 in $H$, a contradiction.
By induction, $H^{*}$ is generically globally rigid in  $\mathbb{R}^{k-1}$, since it is a $(k-2)$-star graph that is $k$-vertex-connected.
Using the coning theorem, the generic globally rigidity of $H^{*}$ in $\mathbb{R}^{k-1}$ implies that $H$ is generically globally rigid in $\mathbb{R}^k$.

The converse is a well known statement in rigidity theory \cite{conditions_UGR}. We first say that a framework in $\mathbb{R}^k$ allows a reflection if a separating set of vertices lies in $(k-1)$-dimensional subspace. However, realization for which more than $k$ vertices lie in a $(k-1)$-dimensional subspace are not generic, and therefore, for generic frameworks, reflections occur when there is a subset of $k$ or fewer vertices whose removal disconnects the graph, i.e., $G$ is not $(k+1)$-vertex-connected. In other words, if $H$ is generically globally rigid in  $\mathbb{R}^k$, then it must be $(k+1)$-vertex-connected.
\end{proof}

Using the above result for $k=3$ implies that if the 1-hop neighborhood $G(i)$ of node $i$ has another center node $j\neq i$, then one can break $G(i)$ into maximally globally rigid subgraphs (in $\mathbb{R}^{3} $) by simply finding the 4-connected components of $G(i)$. Since $G(i)$ has two center nodes $i$ and $j$, the problem amounts to finding the 2-connected components of the remaining graph.

This observation suggests another possible approach to solving the network localization problem.  Instead of breaking the initial graph $G$ into patches by first using 1-hop neighborhoods of each node, one can start by considering the 1-hop neighborhood $G(i,j)$ of a pair of nodes $(i,j) \in E(G)$, where vertices of graph $G(i,j)$ are the intersection of the 1-hop neighbors of nodes $i$ and $j$. This approach assures that $G(i,j)$ is a $2$-star graph, and by the above result, can be easily broken into maximally globally rigid subgraphs, without involving the notion of redundant rigidity.

Note that, in practice, we do not follow either of the two approaches introduced in this section, since SDP-based methods may compute inaccurate localizations of globally rigid graphs, even in the case of noiseless data. Instead, we use the notion of weakly uniquely localizable graphs, which we introduce in the next section.

\subsection{Extracting  Weakly Uniquely Localizable (WUL) subgraphs}
\label{sec:sub_wuls}

We first recall some of the notation introduced earlier, that is needed throughout this section and Section \ref{sync_anchors} on synchronization over $\mathbb{Z}_2$ with anchors. We consider a sensor network in $ \mathbb{R}^3$ with $k$ anchors denoted by $\mathcal{A}$, and $n$ sensors denoted by $\mathcal{S}$. An anchor is a node whose location $a_i \in \mathbb{R}^3$ is readily available, $i=1,\ldots,k$, and a sensor is a node whose location $x_j$ is to be determined, $j=1,\ldots,n$. Note that for aesthetic reasons and consistency with the notation used in \cite{SY05}, we use $x$ in this section to denote the true  coordinates, as opposed to $p$ used throughout the rest of the paper\footnote{Not to be confused with the $x$-axis projections of the distances in the least squares step.}.
We denote by $d_{ij}$ the Euclidean distance between a pair of nodes, $(i,j) \in \mathcal{A} \cup \mathcal{S}$. In most applications, not all pairwise distance measurements are available, therefore we denote by $E(\mathcal{S}, \mathcal{S})$ and $E(\mathcal{S},\mathcal{A})$ the set of edges denoting the measured sensor-sensor and sensor-anchor distances. We represent the available distance measurements in an undirected graph $G=(V,E)$ with vertex set $V = \mathcal{A} \cup \mathcal{S}$ of size $|V|=n+k$, and edge set of size $|E|=m$. An edge of the graph corresponds to a distance constraint, that is $(i,j)\in E$ iff the distance between nodes $i$ and $j$ is available and equals $d_{ij} = d_{ji}$, where $i,j \in \mathcal{A} \cup \mathcal{S}$. We denote the partial distance measurements matrix by $D = \{ d_{ij} : (i,j) \in E(\mathcal{S}, \mathcal{S}) \cup E(\mathcal{S},\mathcal{A}) \}$. A solution $x$ together with the anchor set $a$ comprise a \emph{localization} or \emph{realization} $p=(x,a)$ of $G$. A {\em framework} in $\mathbb{R}^d$ is the ensemble $(G,p)$, i.e., the graph $G$ together with the  realization $p$ which assigns a point $p_i$ in $\mathbb{R}^d$ to each vertex $i$ of the graph.


Given a partial set of noiseless distances and the anchor set $a$, the graph realization problem can be formulated as the following system
 \begin{eqnarray}
\| x_i - x_j \|_2^2 & = & d_{ij}^2 \;\;\;\; \text{ for } (i,j) \in E(\mathcal{S}, \mathcal{S}) \nonumber \\
\| a_i - x_j \|_2^2 & = & d_{ij}^2 \;\;\;\; \text{ for } (i,j)  \in E(\mathcal{S}, \mathcal{A}) \nonumber \\
x_i  & \in & \mathbb{R}^d          \;\;\;\; \text{ for } i=1,\ldots,n
\label{NLP1}
\end{eqnarray}
Unless the above system has enough constraints (i.e., the graph $G$ has sufficiently many edges), then $G$ is not globally rigid and there could be multiple solutions. However, if the graph $G$ is known to be (generically) globally rigid in $\mathbb{R}^3$, and there are at least four anchors (i.e., $k\geq 4)$, and $G$ admits a generic realization\footnote{A realization is {\em generic} if the coordinates do not satisfy any non-zero polynomial equation with integer coefficients.}, then (\ref{NLP1}) admits a unique solution. Due to recent results on the characterization of generic global rigidity, there now exists a randomized efficient algorithm that verifies if a given graph is generically globally rigid in $\mathbb{R}^d$ \cite{GortlerGR}. However, this efficient algorithm does not translate into an efficient method for actually computing a realization of $G$. Knowing that a graph is generically globally rigid in $\mathbb{R}^d$ still leaves the graph realization problem intractable, as shown in \cite{Yang2004}. Motivated by this gap between deciding if a graph is generically globally rigid and computing its realization (if it exists), So and Ye introduced the following notion of unique $d$-localizability \cite{SY05}. An instance $(G,p)$ of the graph localization problem is said to be \emph{uniquely d-localizable} if
\begin{enumerate}
  \item the system (\ref{NLP1}) has a unique solution $\tilde{x} = (\tilde{x}_1;\ldots;\tilde{x}_n) \in \mathbb{R}^{nd}$, and
  \item  for any $l > d$,  $\tilde{x} = ((\tilde{x}_1;  \mb{0}),\ldots,(\tilde{x}_n; \mb{0})) \in \mathbb{R}^{nl}$ is the unique solution to the following system:
 \begin{eqnarray}
\| x_i - x_j \|_2^2 & = & d_{ij}^2 \;\;\;\; \text{ for } (i,j) \in E(\mathcal{S}, \mathcal{S}) \nonumber \\
\| (a_i;\mb{0}) - x_j \|_2^2 & = & d_{ij}^2 \;\;\;\; \text{ for } (i,j)  \in E(\mathcal{S}, \mathcal{A})  \\
x_i  & \in & \mathbb{R}^l          \;\;\;\; \text{ for } i=1,\ldots,n \nonumber
\label{NLP_unq}
\end{eqnarray}
\end{enumerate}
where $(v; \mb{0}) $ denotes the concatenation of a vector $v$ of size $d$ with the all-zeros vector $\mb{0}$ of size $l-d$.
The second condition states that  the problem cannot have a non-trivial localization in some higher dimensional space $\mathbb{R}^{l}$ (i.e, a localization different from the one obtained by setting $x_j = ( \tilde{x}_j; \mb{0}) $ for $j=1,\ldots,n$), where anchor points are trivially augmented to $(a_i;\mb{0})$, for $i=1,\ldots,k$.  A crucial observation should now be made: unlike global rigidity, which is a generic property of the graph $G$, the notion of \emph{unique localizability} depends not only on the underlying graph $G$ but also on the particular realization $p$, i.e., it depends on the framework $(G,p)$.

We now introduce the notion of a weakly uniquely localizable graph, essential for the preprocessing step of the 3D-ASAP algorithm, where we break the original graph into overlapping patches. A graph is \emph{weakly uniquely d-localizable} if there exists at least one realization $p \in \mathbb{R}^{(n+k)d}$ (we call this a certificate realization) such that the framework $(G,p)$ is uniquely localizable. Note that if a framework $(G,p)$ is uniquely localizable, then $G$ is a weakly uniquely localizable graph; however the reverse is not necessarily true since unique localizability is not a generic property.

The advantage of working with uniquely localizable graphs becomes clear in light of the following result by \cite{SY05}, which states that the problem of deciding whether a given graph realization problem is uniquely localizable, as well as the problem of determining the node positions of such a uniquely localizable instance, can be solved efficiently by considering the following SDP
\begin{align}
	& {\text{maximize}} & & 0 \nonumber \\
	& \text{subject to}	& &  (\mb{0}; e_i - e_j)(\mb{0}; e_i - e_j)^T  \cdot Z  =  d_{ij}^2, \;\;\;\; \text{ for } (i,j) \in E(\mathcal{S}, \mathcal{S}) \nonumber \\
	&  & &  (a_i; -e_j) (a_i; -e_j)^T  \cdot Z  =  d_{ij}^2, \;\;\;\; \text{ for } (i,j)  \in E(\mathcal{S}, \mathcal{A}) \nonumber  \\
	& &  &  Z_{1:d,1:d}   =   I_d  \nonumber \\
	& &  &  Z  \in  \mathcal{K}^{n+d}
\label{NLP_SDP}
\end{align}
where $e_i$ denotes the all-zeros vector with a $1$ in the $i^{th}$ entry, and $\mathcal{K}^{n+d} = \{ Z_{(n+d) \times (n+d)} | Z= \left[ \begin{array}{cc}
                                                I_d & X  \\
                                                X^{T} & Y \\
\end{array} \right]  \succeq 0 \}$, where $Z \succeq 0 $ means that $Z$ is a positive semidefinite matrix. The SDP method relaxes the constraint $Y=XX^T$ to $Y \succeq  XX^T$, i.e., $Y- XX^T \succeq 0$, which is equivalent to the last condition in (\ref{NLP_SDP}). The following predictor for uniquely localizable graphs introduced in \cite{SY05}, established for the first time that the graph realization problem is uniquely localizable if and only if the relaxation solution computed by an interior point algorithm (which generates feasible solutions of max-rank) has rank $d$ and $Y= XX^{T}$.
\begin{theorem}[\protect{\cite[Theorem 2]{SY05}}]
\label{th_SY05}
Suppose G is a connected graph. Then the following statements are equivalent:
\begin{abcliste}
\abc Problem (\ref{NLP1}) is uniquely localizable
\abc  The max-rank solution matrix $Z$ of (\ref{NLP_SDP}) has rank $d$
\abc  The solution matrix $Z$ represented by (b) satisfies $Y= XX^{T}$.
\end{abcliste}
\label{SO_YE_THM_UL}
\end{theorem}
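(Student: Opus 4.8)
Since Theorem~\ref{th_SY05} is quoted from \cite{SY05}, the plan is to reconstruct the argument that links the algebraic rank of the SDP solution to the geometric notion of unique localizability. The backbone is a single correspondence: every feasible matrix $Z \in \mathcal{K}^{n+d}$ encodes a realization of $G$ in some dimension $d+r$, where $r = \mathrm{rank}(Y - X^T X)$. First I would dispose of the easy equivalence (b)$\Leftrightarrow$(c). Because the top-left block of $Z$ is pinned to $I_d \succ 0$, the Schur complement gives $\mathrm{rank}(Z) = d + \mathrm{rank}(Y - X^T X)$ together with $Z \succeq 0 \Leftrightarrow Y - X^T X \succeq 0$. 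Hence $\mathrm{rank}(Z)=d$ holds exactly when $Y - X^T X = 0$, i.e. $Y = X^T X$, which is precisely (c).

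For the substantive content I would make the realization correspondence explicit. Given feasible $Z$, factor $Y - X^T X = W^T W$ with $W \in \mathbb{R}^{r \times n}$ and set $\hat x_j = (x_j; w_j) \in \mathbb{R}^{d+r}$, where $w_j$ is the $j$-th column of $W$. A direct expansion using $Y_{ij} = x_i^T x_j + w_i^T w_j$ shows $Y_{ii} - 2 Y_{ij} + Y_{jj} = \|\hat x_i - \hat x_j\|^2$ and, for the anchor rows of (\ref{NLP_SDP}), that $\|(a_i;\mb{0}) - \hat x_j\|^2 = d_{ij}^2$; so the $\hat x_j$ together with augmented anchors $(a_i;\mb{0})$ solve the lifted system in $\mathbb{R}^{d+r}$. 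Conversely, any realization in $\mathbb{R}^l$ yields a feasible $Z$ by taking $X$ to be its first $d$ coordinate rows and $Y$ the Gram matrix of all coordinates. Thus feasible SDP points are exactly the realizations consistent with the prescribed distances, and the surplus rank $r$ measures how far such a realization can ``unfold'' out of $\mathbb{R}^d$.

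With this dictionary, the remaining equivalences follow from properties of the max-rank solution. The key structural fact is that for a nonempty spectrahedron the range of a maximal-rank element equals the sum of the ranges of all feasible elements, since $\mathrm{range}(\tfrac12(Z_1+Z_2)) = \mathrm{range}(Z_1)+\mathrm{range}(Z_2)$ on the convex feasible set; an interior-point method, which returns a point in the relative interior of the optimal face, produces exactly such a dominating solution. If the max-rank $Z$ has rank $d$, then every feasible $Z'$ has rank $d$ as well (the block $I_d$ forces $\mathrm{rank}\geq d$), so by the correspondence no realization escapes $\mathbb{R}^d$; moreover the $d$-dimensional solution is unique, because two distinct solutions $X^{(1)} \neq X^{(2)}$ would force the feasible average to satisfy $Y - X^T X = \tfrac14 (X^{(1)}-X^{(2)})^T(X^{(1)}-X^{(2)}) \neq 0$, contradicting maximality at rank $d$. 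This gives (c)$\Rightarrow$(a). Conversely, if the max-rank solution has $r \geq 1$, the correspondence produces a realization $\hat x_j = (x_j;w_j)$ with some $w_j \neq \mb{0}$, i.e. a lifted solution not of the trivial form $(\tilde x_j;\mb{0})$, so condition~2 of unique localizability fails; this is the contrapositive $\neg$(c)$\Rightarrow\neg$(a).

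The step I expect to be the main obstacle is making the notion of a ``max-rank solution'' rigorous and tying it to what an interior-point solver actually computes. One must show that the feasibility spectrahedron admits an element whose range contains every feasible range, and that this element is the limit of the relative-interior (analytic-center) path followed by interior-point methods. The hypothesis that $G$ is connected, together with the presence of anchors, is precisely what keeps feasibility well-posed and forces the recovered configuration to be pinned down outright rather than merely up to a rigid motion. Once the domination property of the max-rank solution is established, everything else reduces to the Schur-complement bookkeeping and the averaging identity above.
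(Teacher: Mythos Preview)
The paper does not supply its own proof of Theorem~\ref{th_SY05}; the result is quoted verbatim from \cite[Theorem~2]{SY05} and used as a black box to motivate Algorithm~\ref{algo}. Your reconstruction is faithful to the original argument in \cite{SY05}: the Schur-complement identity $\mathrm{rank}(Z)=d+\mathrm{rank}(Y-X^{T}X)$ disposes of (b)$\Leftrightarrow$(c); the Gram-factorization dictionary between feasible $Z$'s and lifted realizations in $\mathbb{R}^{d+r}$ is exactly the device So and Ye use; and the range-domination property of a relative-interior (max-rank) feasible point is the standard SDP fact that drives (a)$\Leftrightarrow$(c). Your averaging computation $Y-X^{T}X=\tfrac14(X^{(1)}-X^{(2)})^{T}(X^{(1)}-X^{(2)})$ to rule out two distinct $d$-dimensional solutions is likewise the argument in the cited source. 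In short, there is nothing in the present paper to compare against, and your sketch matches the proof in \cite{SY05}.
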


Algorithm \ref{algo} summarizes our approach for extracting a WUL subgraph of a given graph, motivated by the results of Theorem \ref{SO_YE_THM_UL} and the intuition and numerical experiments described in the next paragraph. Note,  however, that the statements in Theorem \ref{SO_YE_THM_UL} hold true as long as the graph $G$ has at least four anchor nodes. While this may seem a very restrictive condition (since in many real life applications anchor information is rarely available) there is an easy way to get around this, provided the graph contains a clique (complete subgraph) of size at least $4$. As discussed in Section \ref{sec:sub_pseudo}, a patch of size at least $10$ contains such a clique with very high probability. Once such a clique has been found, one may use cMDS to embed it and use the coordinates as anchors. We call such nodes \textit{pseudo-anchors}.

\begin{algorithm}[h]
\caption{Finding a weakly uniquely localizable (WUL) subgraph of a graph with four anchors or pseudo-anchors
}
\begin{algorithmic}[1]
\REQUIRE Simple graph $G=(V,E)$ with $n$ sensors, $k$ anchors and $m$ edges corresponding to known pairwise Euclidean distances, and $\epsilon$ a small positive constant (e.g. $10^{-4}$)
\STATE Randomize a realization $p_1,\ldots,p_n$ in $\mathbb{R}^3$
\STATE If $k<4$, find a complete subgraph of $G$ on $4$ vertices (i.e., $K_4$) and compute an embedding of it (using classical MDS). Denote the set of pseudo-anchors by $\mathcal{A}$
\STATE Solve the SDP relaxation problem formulated in (\ref{NLP_SDP}) using the anchor set $\mathcal{A}$ 
\STATE Denote by the vector $w$ the diagonal elements of the matrix $Y-XX^{T}$.
\STATE Find the subset of nodes $V_0 \in V \backslash \mathcal{A}$ such that $w_i  < \epsilon$
\STATE Denote $G_0 = (V_0 , E_0)$ the weakly uniquely localizable subgraph of $G$.
\end{algorithmic}
\label{algo}
\end{algorithm}

Two known necessary conditions for global rigidity in $\mathbb{R}^3$ are $4$-connectivity and redundant rigidity (meaning the graph remains locally rigid after the removal of any single edge)
\cite{conditions_UGR,Connelly}. One approach to breaking up a patch graph into globally rigid subgraphs, used by the ABBIE algorithm of \cite{molecule_problem}, is to recurse on each 4-connected component of the graph, and then on each redundantly rigid subcomponent; but even then we are still not sure that the resulting subgraphs are globally rigid. Our approach is to extract a WUL subgraph from the 4-connected components of each patch. It may not be clear to the reader at this point what is the motivation for using weak unique localizability since it is not a generic property, and hence attempting to extract a WUL subgraph of a 4-connected graph seems meaningless since we do not know a priori  what is the true realization of such a graph. However, we have observed in our numerical simulations that this approach significantly improves the accuracy of the embeddings. An intuitive motivation for this approach is the following. If the randomized realization in Algorithm \ref{algo} (or what remains of it after removing some of the nodes) is ``faithful", meaning close enough to the true realization, then the WUL subgraph is perhaps generically uniquely localizable, and hence its localization using the SDP in (\ref{NLP_SDP}) under the original distance constraints can be computed accurately, as predicted by Theorem \ref{SO_YE_THM_UL}. We also consider a slight variation of Algorithm \ref{algo}, where we replace step 3 with the SDP relaxation introduced in the FULL-SDP algorithm of \cite{BiswasYe}. We refer to this different approach as Algorithm 2.
Note that we also consider Algorithm 2 in our simulations only for computational reasons, since the running time of the FULL-SDP algorithm is significantly smaller compared to our \texttt{CVX}-based SDP implementation \cite{CVX,gb08} of problem (\ref{NLP_SDP}).

Our intuition about the usefulness of the WUL subgraphs is supported by several numerical simulations.
Figure \ref{fig:wuls_unit} and Table \ref{tab:ANE_UNIT_1_hop} show the reconstruction errors of the patches (in terms of ANE, an error measure introduced in Section \ref{numexp}) in the following scenarios. In the first scenario, we directly embed each 4-connected component, without any prior preprocessing. In the second, respectively third, scenario  we first extract a WUL subgraph from each 4-connected component using Algorithm \ref{algo}, respectively Algorithm 2, and then embed the resulting subgraphs. Note that the subgraph embeddings are computed using FULL-SDP, respectively SNL-SDP, for noiseless, respectively noisy data.
Figure \ref{fig:wuls_unit} contains numerical results from the UNITCUBE graph with noiseless data, in the three scenarios presented above. As expected, the FULL-SDP embedding in scenario 1 gives the highest reconstruction error, at least one order of magnitude larger when compared to Algorithms 1 and 2. Surprisingly, Algorithm 2 produced more accurate reconstructions than Algorithm 1, despite its lower running time. These numerical computations suggest\footnote{Personal communication by Yinyu Ye.} that Theorem \ref{th_SY05} remains true when the formulation in problem (\ref{NLP_SDP}) is replaced by the one considered in the FULL-SDP algorithm \cite{BiswasYe}.

\begin{figure}[h]
\begin{center}
\subfigure[Scenario 1: $\overline{ANE}=8.4e-4$]{\includegraphics[width=0.32\columnwidth]{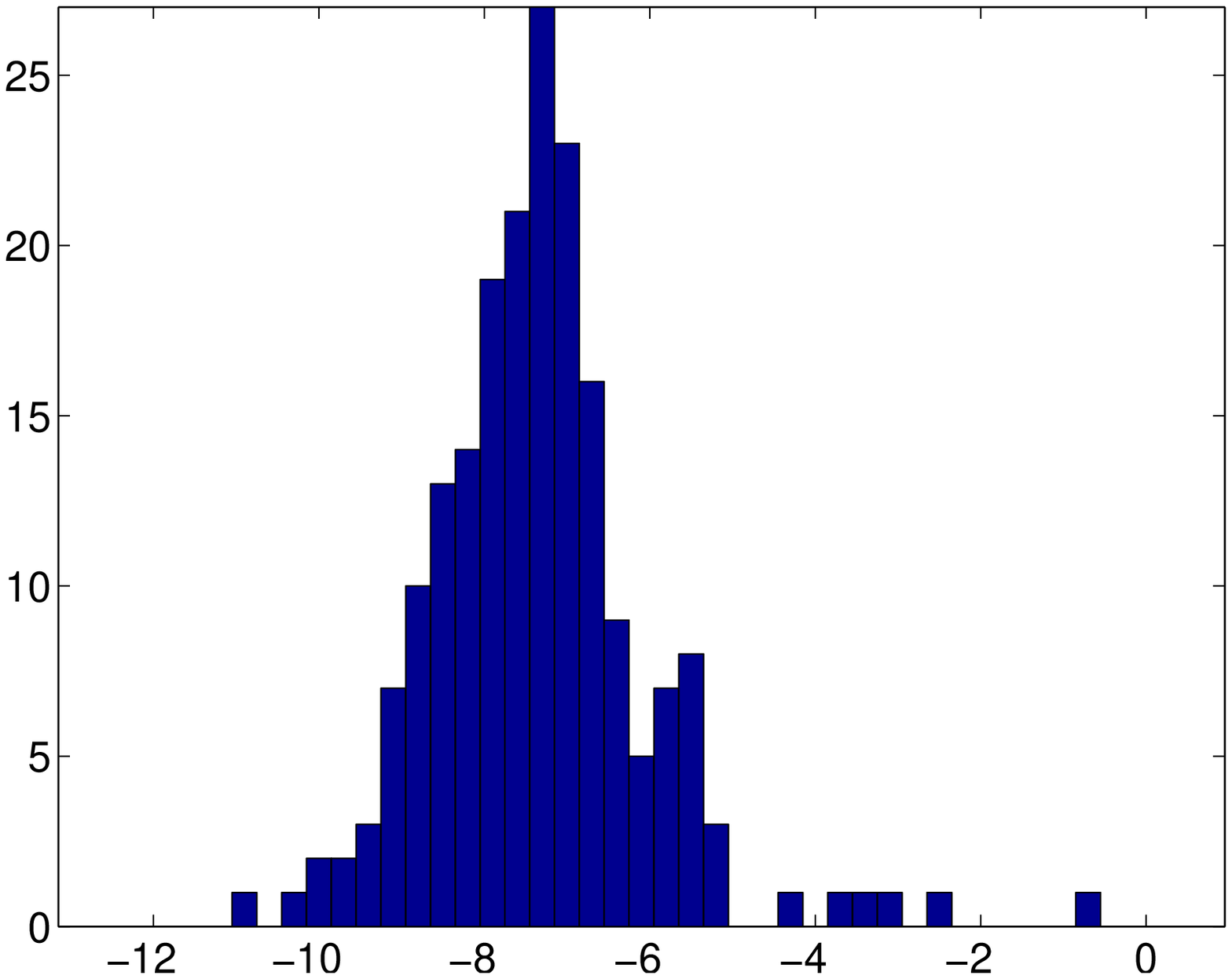}}
\subfigure[Scenario 2: $\overline{ANE}=2.3e-5$]{\includegraphics[width=0.32\columnwidth]{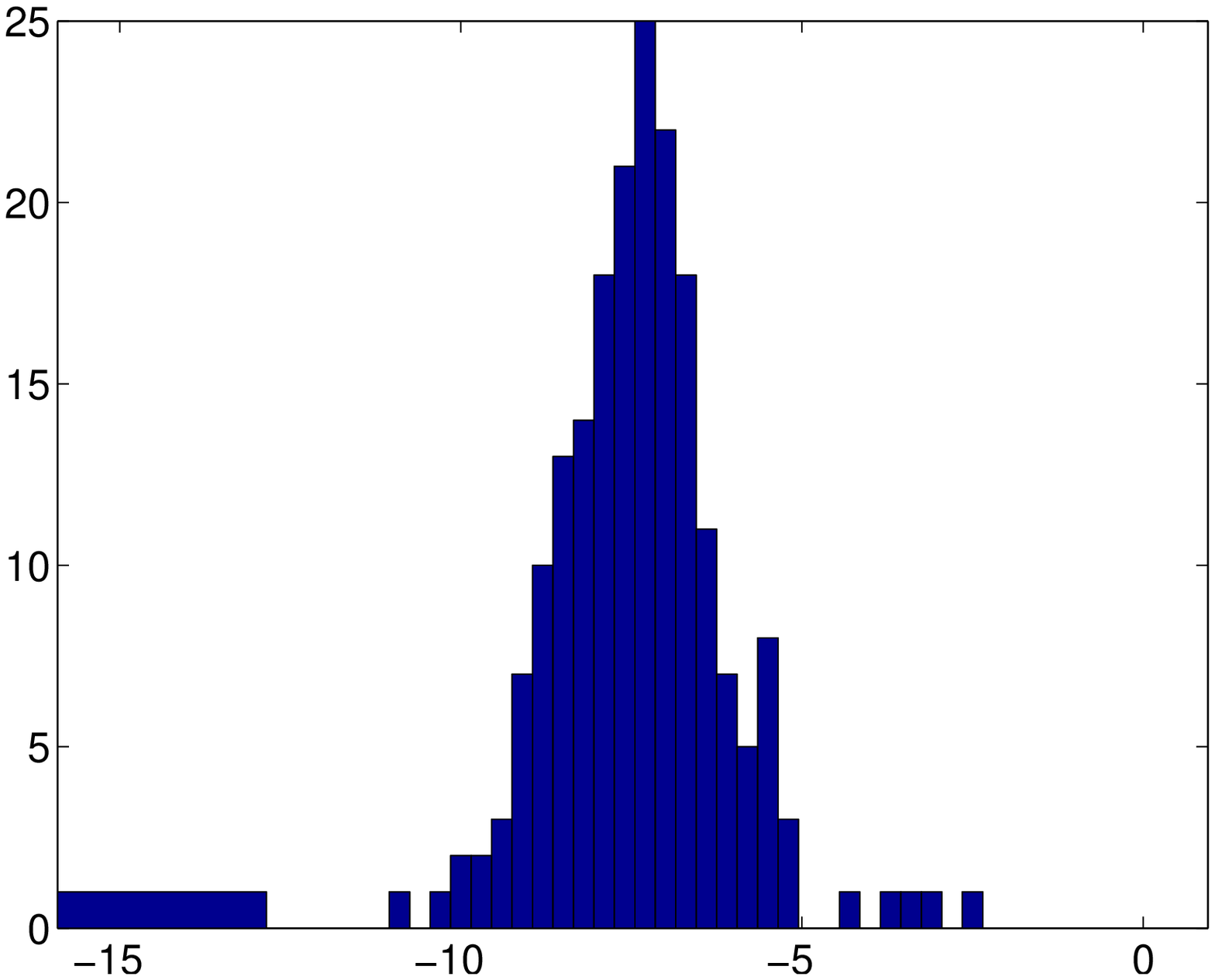}}
\subfigure[Scenario 3: $\overline{ANE}=7.2e-6$]{\includegraphics[width=0.32\columnwidth]{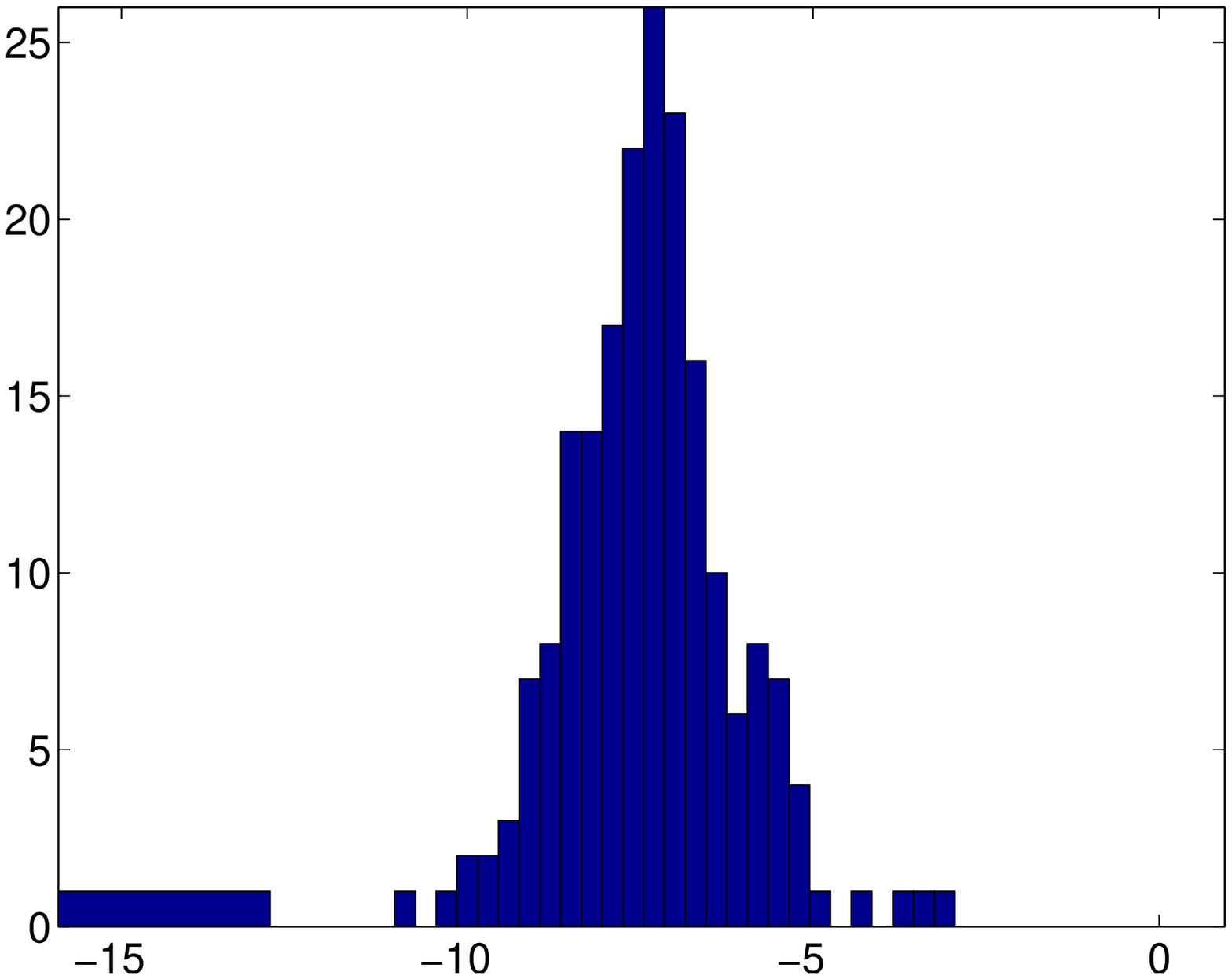}}
\end{center}
\caption{Histogram of reconstruction errors (measured in $ANE$) for the noiseless UNITCUBE graph with $n=212$ vertices, sensing radius $\rho = 0.3$ and average degree $deg = 17$. $\overline{ANE}$ denotes the average errors over all $N=197$ patches. Note that the x-axis shows the ANE in logarithmic scale. Scenario 1: directly embedding the 4-connected components. Scenario 2: embedding the WUL subgraphs extracted using Algorithm 1. Scenario 3: embedding the WUL subgraphs extracted using Algorithm 2. Note that for the subgraph embeddings we use FULL-SDP.}
\label{fig:wuls_unit}
\end{figure}

The results detailed in Figure \ref{fig:wuls_unit}, while showing improvements of the second and third scenarios over the first one, may not entirely convince the reader of the usefulness of our proposed randomized algorithm, since in the first scenario a direct embedding of the patches using FULL-SDP already gives a very good reconstruction, i.e. $8.4e-4$ on average. We regard $4$-connectivity a significant constraint that very likely renders a random geometric star graph to become globally rigid, thus diminishing the marginal improvements of the WUL extraction algorithm. To that end, we run experiments similar to those reported in Figure \ref{fig:wuls_unit}, but this time on the 1-hop neighborhood of each node in the UNITCUBE graph, without further extracting the 4-connected components. In addition, we sparsify the graph by reducing the sensing radius from $\rho=0.3$ to $\rho=0.26$. Table \ref{tab:ANE_UNIT_1_hop} shows the reconstruction errors, at various levels of noise. Note that in the noise free case, scenarios 2 and 3 yield results which are an order of magnitude better than that of scenario 1, which returns a rather poor average ANE of $5.3e-02$. However, for the noisy case, these marginal improvements are considerably smaller.


\begin{table}[h]
\begin{minipage}[b]{0.90\linewidth}
\centering
\begin{tabular}{|c||c|c|c|c|}
\hline
  $\eta$ & Scenario 1 & Scenario 2 &  Scenario 3  \\
\hline\hline
    0\% &  5.3e-02 &  4.9e-03 &  1.3e-03  \\
   10\% &  8.8e-02 &  5.2e-02 &  5.3e-02  \\
   20\% & 1.5e-01  & 1.1e-01  & 1.1e-01   \\
   30\% & 2.3e-01&   2.0e-01&  2.0e-01  \\
\hline
\end{tabular}
\caption{Average reconstruction errors (measured in $ANE$) for the UNITCUBE graph with $n=212$ vertices, sensing radius $\rho = 0.26$ and average degree $deg = 12$. Note that we only consider patches of size greater than or equal to 7, and there are 192 such patches. Scenario 1: directly embedding the 4-connected components. Scenario 2: embedding the WUL subgraphs extracted using Algorithm 1. Scenario 3: embedding the WUL subgraphs extracted using Algorithm 2. Note that for the subgraph embeddings we use FULL-SDP for noiseless data, and SNL-SDP for noisy data.  \;\;\;\;\;\;\;\;\; \;\;\;\;\;\;\;\;\;\;\;\;\;\;\;\;\;  \;\;\;\;\;\;\;\;
}
\label{tab:ANE_UNIT_1_hop}
\end{minipage}
\end{table}

Table \ref{tab:compA_lg_12} shows the total number of nodes removed from the patches by Algorithms 1 and 2, the number of 1-hop neighborhoods which are readily WUL, and the running times. Indeed, for the sparser UNITCUBE graph with $\rho=0.26$, the number of patches which are already WUL is almost half, compared to the case of the denser graph with $\rho=0.30$.

\begin{table}[h]
\begin{minipage}[b]{0.90\linewidth}
\centering
\begin{tabular}{|l|c|c||c|c|}
\hline
  &  \multicolumn{2}{c||}{ $\rho=0.30, N=197$} & \multicolumn{2}{c|}{$\rho=0.26, N=192$ }  \\
			       & Algorithm 1 & Algorithm 2  & Algorithm 1 & Algorithm 2  \\
\hline
Total nr of nodes removed			&  31        & 26     		&  258 &  285\\
Nr of WUL patches  	& 188 	& 191	&   104 &  101 \\
Running time (sec)	& 887 	& 48	&  632 & 26 \\
\hline
\end{tabular}
\caption{ Comparison of the two algorithms for extracting WUL subgraphs, for the UNICUBE graphs with sensing radius $ \rho=0.30$ and  $ \rho=0.26$, and noise level $\eta=0 \%$. The WUL patches are those patches for which the subgraph extraction algorithms did not remove any nodes.
}
\label{tab:compA_lg_12}
\end{minipage}
\end{table}

Finally, we remark on one of the consequences of our approach for breaking up the measurement graph. It is possible for a node not to be contained in any of the patches, even if it attaches in a globally rigid way to the rest of the measurement graph. An easy example is a star graph with four neighbors, no two of which are connected by an edge, as illustrated by the graph in Figure \ref{fig:ex_NSEW}. However, we expect such pathological examples to be very unlikely in the case of random geometric graphs.
\begin{figure}[h]
\begin{center}
\vspace{-4mm}
\includegraphics[width=0.48\columnwidth]{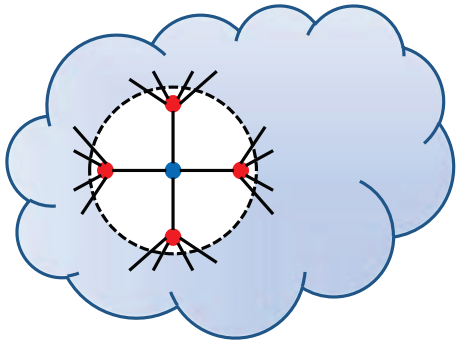}
\end{center}
\vspace{-2mm}
\caption{An example of a graph with a node that attaches globally rigidly to the rest of the graph, but is not contained in any patch, and thus it will be discarded by 3D-ASAP.}
\label{fig:ex_NSEW}
\vspace{-4mm}
\end{figure}

\subsection{Finding pseudo-anchors}
\label{sec:sub_pseudo}

To satisfy the conditions of Theorem \ref{SO_YE_THM_UL}, at least $d+1$ anchors are necessary for embedding a patch, hence for the molecule problem we need $k \geq 4$ such anchors in each patch. Since anchors are not usually available, one may ask whether it is still possible to find such a set of nodes that can be treated as anchors. If one were able to locate a clique of size at least $d+1$ inside a patch graph, then using cMDS it is possible to obtain accurate coordinates for the $d+1$ nodes and treat them as anchors
Whenever this is possible, we call such a set of nodes \textit{pseudo-anchors}. Intuitively, the geometric graph assumption  should lead one into thinking that if the patch graph is dense enough, it is very likely to find a complete subgraph on $d+1$ nodes. While a probabilistic analysis of random geometric graphs with forbidden  $K_{d+1}$ subgraphs is beyond of scope of this paper, we provide an intuitive connection with the problem of packing spheres inside a larger sphere, as well as numerical simulations that support the idea that a patch of size at least $\approx 10$ contains four such pseudo-anchors with some high probability.

To find pseudo-anchors for a given patch graph $G_i$, one needs to locate a complete subgraph (clique) containing at least $d+1$ vertices. Since any patch $G_i$ contains a center node that is connected to every other node in the patch, it suffices to find a clique of size at least $3$  in the 1-hop neighborhood of the center node, i.e., to find a triangle in  $G_i \backslash {i}$. Of course, if a graph is very dense (i.e., has high average degree) then it will be forced to contain such a triangle. To this end, we remind one of the first results in extremal graph theory (Mantel 1907), which states that any given graph on $s$ vertices and more than $ \frac{1}{4}s^2$ edges contains a triangle, the bipartite graph with $V_1 = V_2 = \frac{s}{2}$ being the unique extremal graph without a triangle and containing $ \frac{1}{4}s^2$ edges. However, this quadratic bound which holds for general graphs is very unsatisfactory for the case of random geometric graphs.

Recall that we are using the geometric graph model, where two vertices are adjacent if and only if they are less than distance $\rho$ apart. At a local level, one can think of the geometric graph model as placing an imaginary ball of radius $\rho$ centered at node $i$, and connecting $i$ to all nodes within this ball; and also connecting two neighbors $j,k$ of $i$ if and only if $j$ and $k$ are less than $\rho$ units apart. Ignoring the center node $i$, the question to ask becomes how many nodes can one fit into a ball of radius $\rho$ such that there exist at least $d$ nodes whose pairwise distances are all less than $\rho$. In other words, given a geometric graph $H$ inscribed in a sphere of radius $\rho$, what is the smallest number of nodes of $H$ that forces the existence of a $K_d$.

The astute reader might immediately be lead into thinking that the problem above can be formulated as a sphere packing problem. Denote by $x_1, x_2, \ldots x_m$ the set of $m$ nodes (ignoring the center node) contained in a sphere of radius $\rho$. We would like to know what is the smallest $m$ such that at least $d=3$ nodes are pairwise adjacent, i.e. their pairwise distances are all less than $\rho$.

To any node $x_i$ associate a smaller sphere $S_i$ of radius $\frac{\rho}{2}$. Two nodes $x_i,x_j$ are adjacent, meaning less than distance $\rho$ apart, if and only if their corresponding spheres $S_i$ and $S_j$  overlap. This line of thought leads one into thinking how many non-overlapping small spheres can one pack into a larger sphere. One detail not to be overlooked is that the radius of the larger sphere should be $\frac{3}{2} \rho$, and not $\rho$, since a node $x_i$ at distance $\rho$ from the center of the sphere has its corresponding sphere $S_i$ contained in a sphere of radius $\frac{3}{2} \rho$. We have thus reduced the problem of asking what is the minimum size of a patch that would guarantee the existence of four anchors, to the problem of determining the smallest number of spheres of radius $\frac{1}{2}\rho$ that can be ``packed" in a sphere of radius $\frac{3}{2} \rho$ such that at least three of the smaller spheres pairwise overlap. Rescaling the radii such that $\frac{3}{2} \rho =1$ (hence $\frac{1}{2} \rho =\frac{1}{3}$), we ask the equivalent problem: \textit{How many spheres of radius $\frac{1}{3}$ can be packed inside a sphere of radius $1$, such that at least three spheres pairwise overlap.}

A related and slightly simpler problem is that of finding the densest packing on $m$ equal spheres of radius $r$  in a sphere of radius 1, such that no two of the small spheres overlap. This problem has been recently considered in more depth, and analytical solutions have been obtained for several values of $m$. If $r=\frac{1}{3}$ (as in our case) then the answer is  $m=13$  and this constitutes a lower bound for our problem.

However, the arrangements of spheres that prevent the existence of three pairwise overlapping spheres are far from random, and motivated us to running the following experiment. For a given $m$, we generate $m$ random spheres of radius $\frac{1}{3}$ inside the unit sphere, and count the number of times when at least three spheres pairwise overlap. We ran this experiment $15,000$ times for different values of $m=5,6,7$, respectively $8$, and obtained the following success rates $69\%, 87\%, 96\%$, respectively $99\% $, i.e., the percentage of times when the random realizations of spheres of radius $\frac{1}{3}$ inside a unit sphere produced three pairwise overlapping spheres. The simulation results show that about $9$ spheres would guarantee, with very high probability, the existence of three pairwise overlapping spheres. In other words, for a patch of size $10$ including the center node, there exist with high probability at least $4$ nodes that are pairwise adjacent, i.e., the four pseudo-anchors we are looking for.

\subsection{Embedding patches}
\label{sec:sub_emb}
After extracting patches, i.e., WUL subgraphs of the 1-hop neighborhoods, it still remains to localize each patch in its own frame.
Under the assumptions of the geometric graph model, it is likely that 1-hop neighbors of the central node will also be interconnected, rendering a relatively high density of edges for the patches. Indeed, as indicated by Figure \ref{fig:patch_info_1} (right panel), most patches have at least half of the edges present.
For noiseless distances, we embed the patches using the FULL-SDP algorithm \cite{BiswasYe}, while for noisy distances we use the SNL-SDP algorithm of \cite{snlsdp}. To improve the overall localization result, the SDP solution is used as a starting point for a gradient-descent method.


\begin{figure}[h]
\begin{center}
\includegraphics[width=0.49\columnwidth]{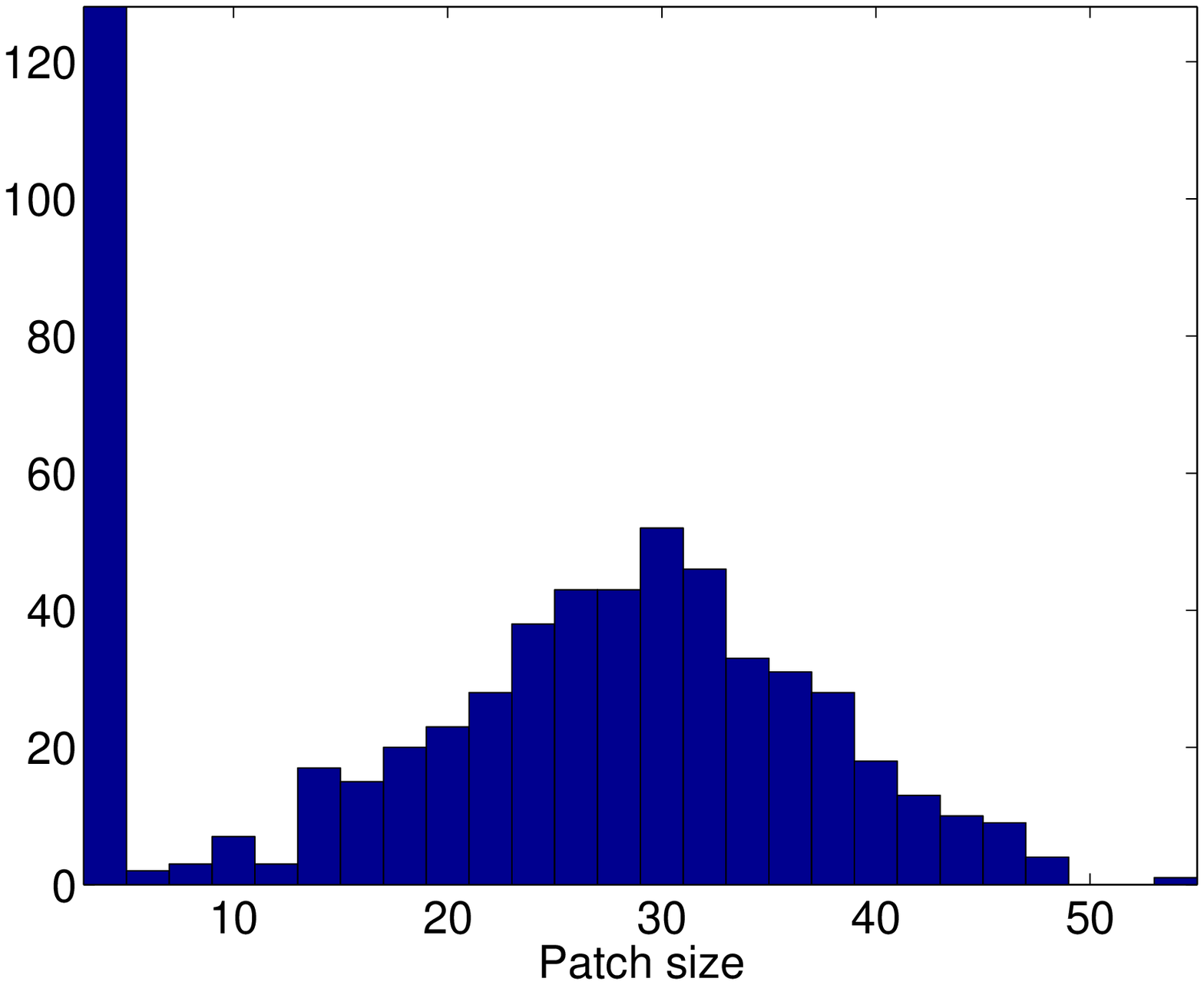}
\includegraphics[width=0.49\columnwidth]{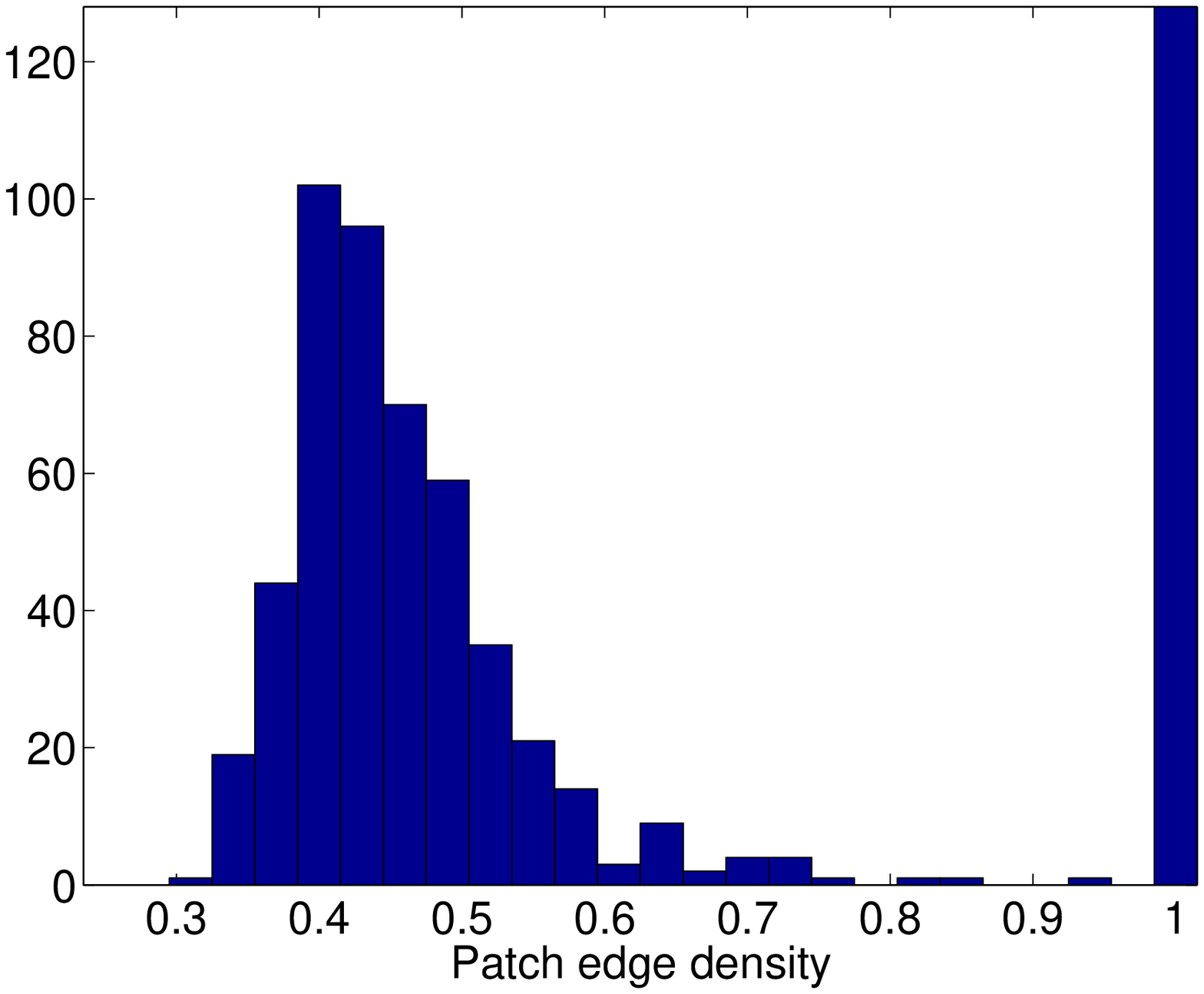}
\end{center}
\caption{Histogram of patch sizes (left) and edge density (right) in the BRIDGE-DONUT graph, $n=500$ and $deg = 18$. Note that a large number of the resulting patches are of size $4$, thus complete graphs on four nodes ($K_4$), which explains the same large number of patches with edge density $1$.}
\label{fig:patch_info_1}
\end{figure}

The remaining part of this subsection is a brief survey of recent SDP relaxations for the graph localization problem
\cite{BiswasYe,sdpAngle,biswas_stress_sdp,biswas,univRig}. A solution
$p_1, \ldots , p_n \in \mathbb{R}^3$ can be computed by minimizing the following error function
\begin{equation}
 \min_{p_1,\ldots,p_n\in \mathbb{R}^3 } \sum_{(i,j)\in E}  \left( \| p_i - p_j \|^2 - d_{ij}^2 \right)^2.
\label{E1}
\end{equation}
While the above objective function is not convex  over the constraint set, it can be relaxed into an SDP \cite{biswas_stress_sdp}. Although SDP can be generally solved (up to a given accuracy) in polynomial time, it was pointed out in \cite{biswas} that the objective function (\ref{E1}) leads to a rather expensive SDP, because it involves fourth order polynomials of the coordinates. Additionally, this approach is rather sensitive to noise, because large errors are amplified by the objective function in (\ref{E1}), compared to the objective function in (\ref{E2}) discussed below.

Instead of using the objective function in (\ref{E1}), \cite{biswas} considers the SDP relaxation of the following penalty function
\begin{equation}
   \min_{p_1,\ldots,p_n \in \mathbb{R}^3}  \sum_{(i,j) \in E} \left| \parallel p_i - p_j  \parallel^2 - d_{ij}^2  \right|.
\label{E2}
\end{equation}
In fact, \cite{biswas} also allows for possible non-equal weighting of the summands in (\ref{E2})  and for possible anchor points. The SDP relaxation of (\ref{E2}) is faster to solve than the relaxation of (\ref{E1}) and it is usually more robust to noise. Constraining the solution to be in $\mathbb{R}^3$ is non-convex, and its relaxation by the SDP often leads to solutions that belong to a higher dimensional Euclidean space, and thus need to be further projected to $\mathbb{R}^3$. This projection often results in large errors for the estimation of the coordinates. A regularization term for the objective function of the SDP was suggested in \cite{biswas} to assist it in finding solutions of lower dimensionality and preventing nodes from crowding together towards the center of the configuration.

\subsection{ Additional Information Specific to the Molecule Problem }
\label{sec:sub_infomol}

In this section we discuss several additional constraints specific to the molecule problem, which are currently not being exploited by 3D-ASAP. While our algorithm can benefit from any existing molecular fragments and their known reflection, there is still information that it does not take advantage of, and which can further improve its performance. Note that many of the remarks below can be incorporated in the pre-processing step of embedding the patches, described in the previous section.

The most important piece of information missing from our 3D-ASAP formulation is the distinction between the ``good" edges (bond lengths) and the ``bad" edges (noisy NOEs). The current implementations of the FULL-SDP and SNL-SDP algorithms do not incorporate such hard distance constraints.

One other important information which we are ignoring is given by the residual dipolar couplings (RDC) measurements that give noisy angle information ($cos^2(\theta)$) with respect to a global orientation \cite{Bax}.

Another approach is to consider an energy based formulation that captures the interaction between atoms in a readily computable fashion, such as the Lennard-Jones potential. One may then use this information to better localize the patches, and prefer patches that have lower energy.

The minimum distance constraint, also referred to as the ``hard sphere" constraint, comes from the fact that any two atoms cannot get closer than a certain distance $\kappa \approx 1$ Angstrom. Note that such lower bounds on distances can be easily incorporated into the SDP formulation.

Another observation one can make use of is set of non-edges of the measurement graph, i.e., the distances corresponding to the missing edges cannot be smaller than the sensing radius $\rho$. Two  remarks are in place however; under the current noise model it is possible for true distances smaller than the sensing radius not to be part of the set of available measurements, and vice-versa, it is possible for true distances larger than the sensing radius to become part of the distance set. However, since this constraint is not as certain as the hard sphere constraint, we recommend using the latter one.


Finally, one can envisage that significant other information can be reduced to distance constraints and incorporated into the approach described here for the calculation of structures and complexes. Such development could significantly speed such calculations if it incorporates larger molecular fragments based on modeling, similarly of chemical shift data etc., as done with computationally intensive experimental energy methods, e.g., HADDOCK \cite{HADDOCK}.


\subsection{Aligning patches}
\label{sec:sub_align}
Given two patches $P_k$ and $P_l$ that have at least four nodes in common, the registration process finds the optimal 3D rigid motion of $P_l$ that aligns the common points (as shown in Figure \ref{fig:registration_example}).
A closed form solution to the registration problem in any dimension was given in \cite{Horn}, where the best rigid transformation between two sets of points is obtained by various matrix manipulations and eigenvalue/eigenvector decomposition.

\begin{figure}[h]
\begin{center}
\includegraphics[width=1.0\columnwidth,keepaspectratio=true]{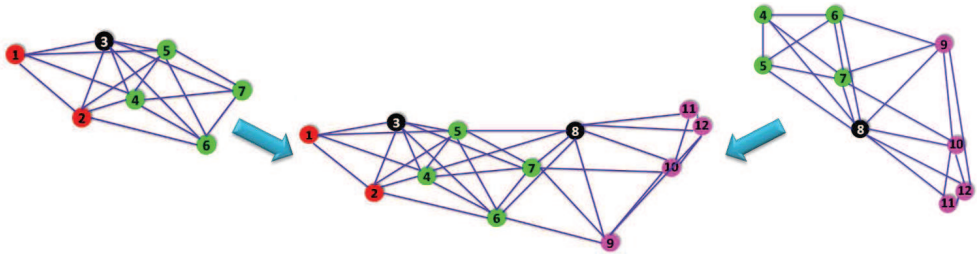}
\end{center}
\caption{Optimal alignment of two patches that overlap in four nodes. The alignment provides a measurement for the ratio of the two group elements in Euc(3). In this example we see that a reflection was required to properly align the patches.}
\label{fig:registration_example}
\end{figure}

Since alignment requires at least four overlapping nodes, the K4 patches that are fully contained in larger patches are initially discarded. Other patches may also be discarded if they do not intersect any other patch in at least four nodes. The nodes belonging to such patches but not to any other patch would not be localized by ASAP.

As expected, in the case of the geometric graph model, the overlap is often small, especially for small values of $\rho$. It is therefore crucial to have robust alignment methods even when the overlap size is small. We refer the reader to Section 6 of \cite{ASAP} for other methods of aligning patches with fewer common nodes in $\mathbb{R}^2$, i.e. the combinatorial method and the link method which can be adjusted for the three dimensional case. The combinatorial score method makes use of the underlying assumption of the geometric graph model. Specifically, we exploit the information in the non-edges that correspond to distances larger than the sensing radius $\rho$, and use this information for estimating both the relative reflection and rotation for a pair of patches that overlap in just three nodes (or more). The link method is useful whenever two patches have a small overlap, but there exist many cross edges in the measurement graph that connect the two patches. Suppose the two patches $P_k$ and $P_l$ overlap in at least one vertex, and call a \textit{link edge} an edge $(u,v) \in E$ that connects a vertex $u$ in patch $P_k$ (but not in $P_l$) with a vertex $v$ in patch $P_l$ (but not in $P_k$). Such link edges can be incorporated as additional information (besides the common nodes) into the registration problem that finds the best alignment between a pair of patches. The right plot in Figure \ref{fig:patch_info_2} shows a histogram of the intersection sizes between patches in the BRIDGE-DONUT graph that overlap in at least $4$ nodes.

\begin{figure}[h]
\begin{center}
\includegraphics[width=0.49\columnwidth]{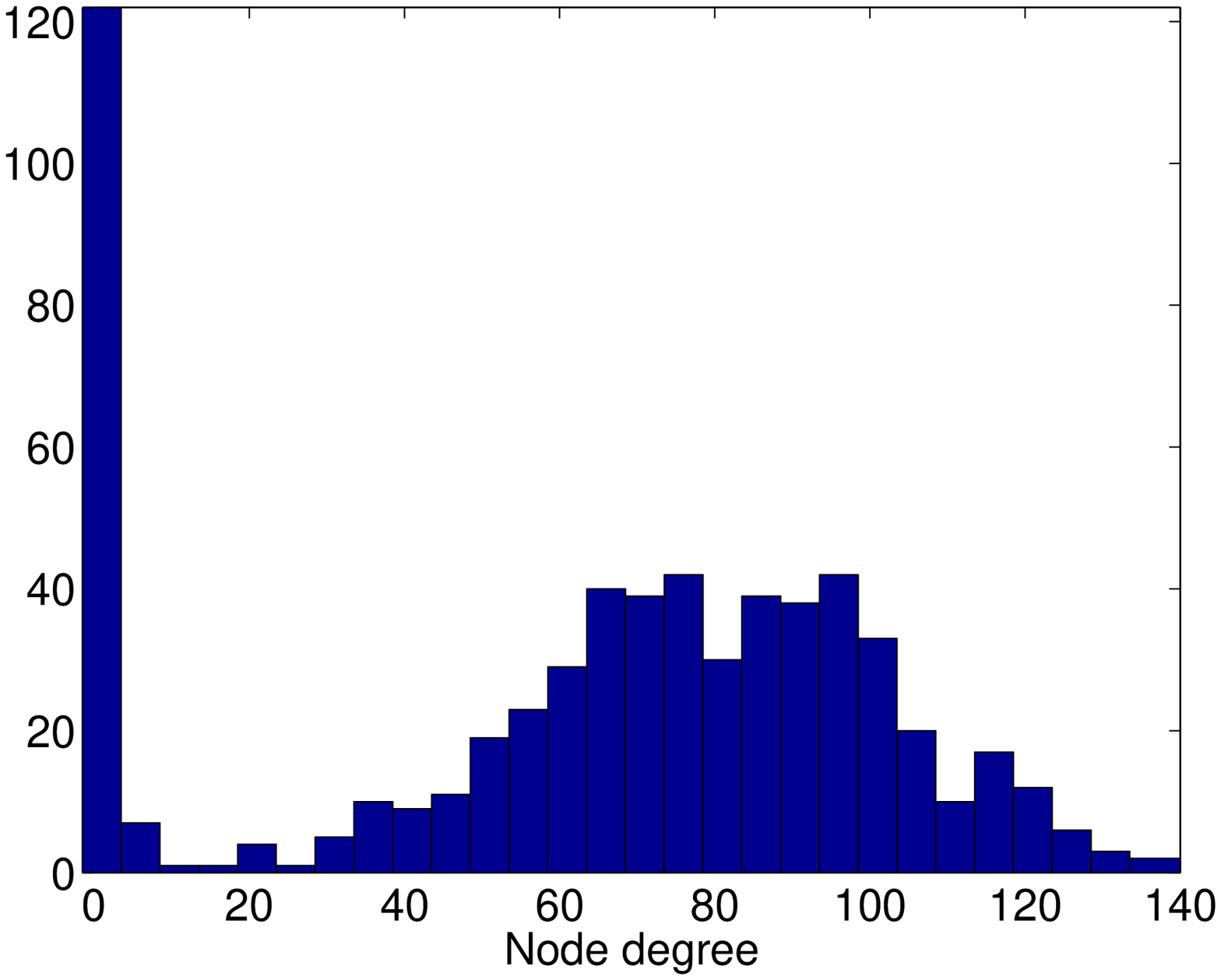}
\includegraphics[width=0.49\columnwidth]{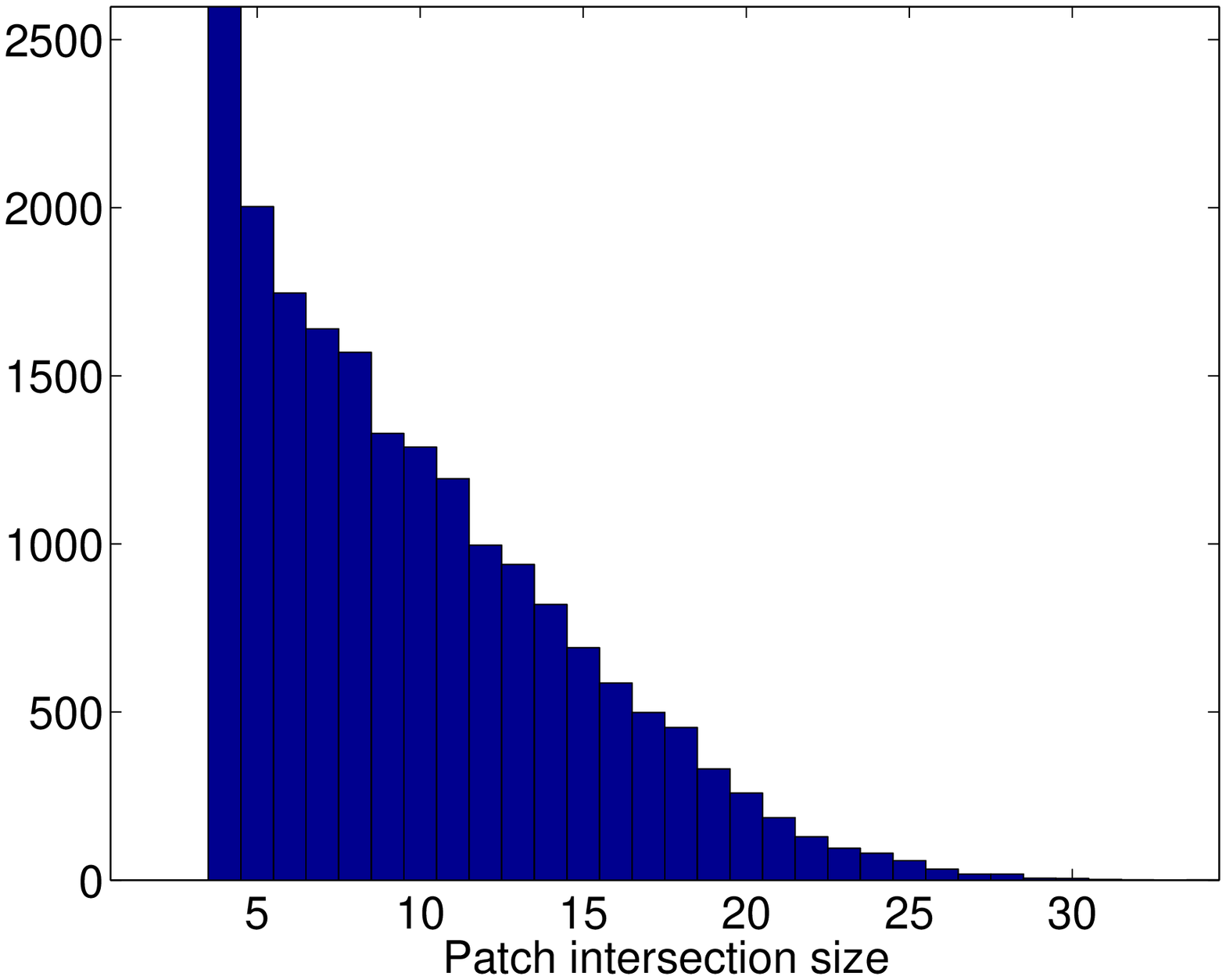}
\end{center}
\caption{Histogram of the node degrees of patches in the patch graph $G^P$ (left) and the intersection size of patches (right), in the BRIDGE-DONUT graph with $n=500$ and $deg = 18$. $G^P$ has $N=615$ nodes (i.e. patches) and average degree $24$, meaning that, on average, a patch overlaps (in at least 4 nodes) with $24$ other patches.}
\label{fig:patch_info_2}
\end{figure}

\section{Spectral-Partitioning-ASAP (3D-SP-ASAP)}
\label{sec:sp-asap}

In this section we introduce 3D-Spectral-Partitioning-ASAP (3D-SP-ASAP), a variation of the 3D-ASAP algorithm, which uses spectral partitioning as a preprocessing step for the localization process.

3D-SP-ASAP combines ideas from both DISCO \cite{DISCO} and ASAP.
The philosophy behind DISCO is to recursively divide large problems into smaller problems, which can ultimately be solved by the traditional SDP-based localization methods. If the number of atoms in the current group is not too large, DISCO solves the atom positions via SDP, and refines the coordinates by using gradient descent;  otherwise, it breaks the current group of atoms into smaller subgroups, solves each subgroup recursively, aligns and combines them together, and finally it improves the coordinates by applying gradient descent. The main question that arises is how to divide a given problem into smaller subproblems.  DISCO chooses to divide a large group of nodes into exactly two subproblems, solves each problem recursively and combines the two solutions. In other words, it builds a binary tree of problems, where the leaves are problems small enough to be embedded by SDP. However, not all available information is being used when considering only a single spanning tree of the graph of patches.  The 3D-ASAP approach fuses information from different spanning trees via the eigenvector computation. However, compared to the number of patches used in DISCO, 3D-ASAP generates many more patches, since the number of patches in 3D-ASAP is linear in the size of the network. This can be considered as a disadvantage, since localizing all the patches is often the most time consuming step of the algorithm. 3D-SP-ASAP tries to reduce the number of patches to be localized while using the patch graph connectivity in its full.

When dividing a graph into two smaller subgraphs, one wishes to minimize the number of edges between the two subgraphs, since in the localization of the two subgraphs the edges across are being left out. Simultaneously, one wishes to maximize the number of edges within the subgraphs, because this makes the subgraphs more likely to be globally rigid and easier to localize. In general, the graph partitioning problem seeks to decompose a graph into $K$ disjoint subgraphs (clusters), while minimizing the number of cut edges, i.e., edges with endpoints in different clusters.  Given the number of clusters $K$, the Min-Cut problem is an optimization problem that computes a partition $\mathcal{P}_1, \ldots ,\mathcal{P}_K$ of the vertex set, by minimizing the number of cut edges
\begin{equation}
 \mbox{Cut}(\mathcal{P}_1, \ldots ,\mathcal{P}_K) = \sum_{i=1}^{K} E(\mathcal{P}_i, \overline{\mathcal{P}_i}),
\end{equation}
where $E(X,Y) = \sum_{i \in X, j \in Y} A_{ij}$, and $\overline{X}$ denotes the complement of $X$. However, it is well known that trying to minimize $\mbox{Cut}(\mathcal{P}_1, \ldots ,\mathcal{P}_K)$ favors cutting off weakly connected individual vertices from the graph, which leads to poor partitioning quality since we would like clusters to consist of a relatively large number of nodes. To penalize clusters $\mathcal{P}_i$  of small size, Shi and Malik \cite{shimalik} suggested minimizing the normalized cut defined as
\begin{equation}
 \mbox{NCut}( \mathcal{P}_1, \ldots ,\mathcal{P}_K ) = \sum_{i=1}^K \frac{\mbox{Cut}(\mathcal{P}_i,\overline{\mathcal{P}_i})}{\mbox{Vol}(\mathcal{P}_i)},
\end{equation}
where
 $\mbox{Vol}(P_i) = \sum_{i \in P_i} deg(i)$,
and $deg(i)$ denotes the degree of node $i$ in $G$.

Although minimizing NCut over all possible partitions of the vertex set $V$ is an NP-hard combinatorial optimization problem  \cite{wagner}, there exists a spectral relaxation that can be computed efficiently \cite{shimalik}. We use this spectral clustering method to partition the measurement graph in the molecule problem. The gist of the approach is to use the classical K-means clustering algorithm on the Laplacian eigenmap embedding of the set of nodes. If $A$ is the adjacency matrix of the graph $G$, and $D$ is a diagonal matrix with $D_{i,i} =  deg(i), i = 1,\ldots,n$, then the Laplacian eigenmap embedding of node $i$ in $\mathbb{R}^k$ is given by $ (\phi_1(i),\phi_2(i),\ldots,\phi_K(i))$, where $\phi_j$ is the $j^{th}$ eigenvector of the matrix $D^{-1} A$. For an extensive literature survey on spectral clustering algorithms we refer the reader to \cite{luxburg}. We remark that other clustering algorithms (e.g., \cite{Hespanha}) may also be used to partition the graph.

The approach we used for localization in conjunction with the above normalized spectral clustering algorithm is as follows (3D-SP-ASAP):
\begin{enumerate}
 \item  We first decompose the measurement graph into $K$ partitions $\mathcal{P}_1, \ldots ,\mathcal{P}_K$, using the normalized spectral clustering algorithm.
 \item  We extend each partition $\mathcal{P}_i$, $i=1,\ldots,K$ to include its 1-hop neighborhood, and denote the new patches by $ P_i$, $i=1,\ldots,K$.
 \item  For every pair of patches $P_i$ and $P_j$ which have nodes in common or are connected by link edges \footnote{edges with endpoints in different patches}, we build a new (link) patch which contains all the common points and link edges.  The vertex set of the new patch consists of the nodes that are common to both $P_i$ and  $P_j$, together with the endpoints of the link edges that span across the two patches. Note that the new list of patches contains the extended patches built in Step (2) $P_1, \ldots ,P_K$, as well as the newly built patches $P_{K+1},\ldots,P_{L}$.
 \item  We extract from each patch $P_i$, $i=1,\ldots,L$ the WUL subgraph, and embed it using the FULL-SDP algorithm for noiseless data, and the SNL-SDP algorithm for noisy data.
 \item  Synchronize all available patches using the eigenvector synchronization algorithm used in ASAP.
\end{enumerate}

Note that when the extended patches from Step (2) are highly overlapping, Step (3) of the algorithm  should be omitted, for reasons detailed below related to the robustness of the embedding. The reason for having Steps (2), and possibly (3), is to be able to align nearby patches. Without Step (2) patches will have disjoint sets of nodes, and the alignment will be based only on the link edges, which is not robust for high levels of noise. Without Step (3) the existing patches may have little overlap, in which case  we expect the alignment involving link edges not to be robust at high levels of noise. By building the link patches, we provide 3D-SP-ASAP more accurate pairwise alignments. Note that embedding link patches is less robust to noise, due to their bipartite-like structure, especially when the bipartitions are very loosely connected to each other (also confirmed by our computations involving link patches). In addition, having to localize a larger number of patches may significantly increase the running time of the algorithm. However, for our numerical experiments with 3D-SP-ASAP conducted on the BRIDGE-DONUT graph, the extended partitions built in Step (2) were highly overlapping, and allowed us to localize the entire network without the need to build the link patches in Step (3).

The advantage of combining a spectral partitioning algorithm with 3D-ASAP is a decrease in running time as shown in Tables \ref{tab:SP_times} and \ref{tab:times}, due to a significantly smaller number of patches that need to be localized. Note that the graph partitioning algorithm is extremely fast, and partitions the BRIDGE-DONUT measurement graph in less than half a second.  Table \ref{tab:ANE_DONUT} shows the ANE reconstruction errors for the BRIDGE-DONUT graph, when we partition the measurement graph into $K=8$ and $K=25$ clusters. For $K=8$, some of the extended partitions become very large, containing as many as $150$ nodes, and SNL-SDP does a very poor job at embedding such large patches when the distance measurements are noisy. By increasing the number of partitions to $K=25$, the extended partitions contain less nodes, and are small enough for SNL-SDP to localize accurately even for high levels of noise. Note that the ANE errors for 3D-SP-ASAP with $K=25$ are comparable with those of ASAP, while the running time is dramatically reduced (by an order of magnitude, for the BRIDGE-DONUT example with $\eta=35\%$). Figures  \ref{fig:spectral_part_PACM} and \ref{fig:spectral_part_DONUT} show various $K$-partitions of the PACM and the BRIDGE-DONUT graphs.


\begin{figure}[h]
\begin{center}
\subfigure[K=2]{\includegraphics[width=0.24\columnwidth]{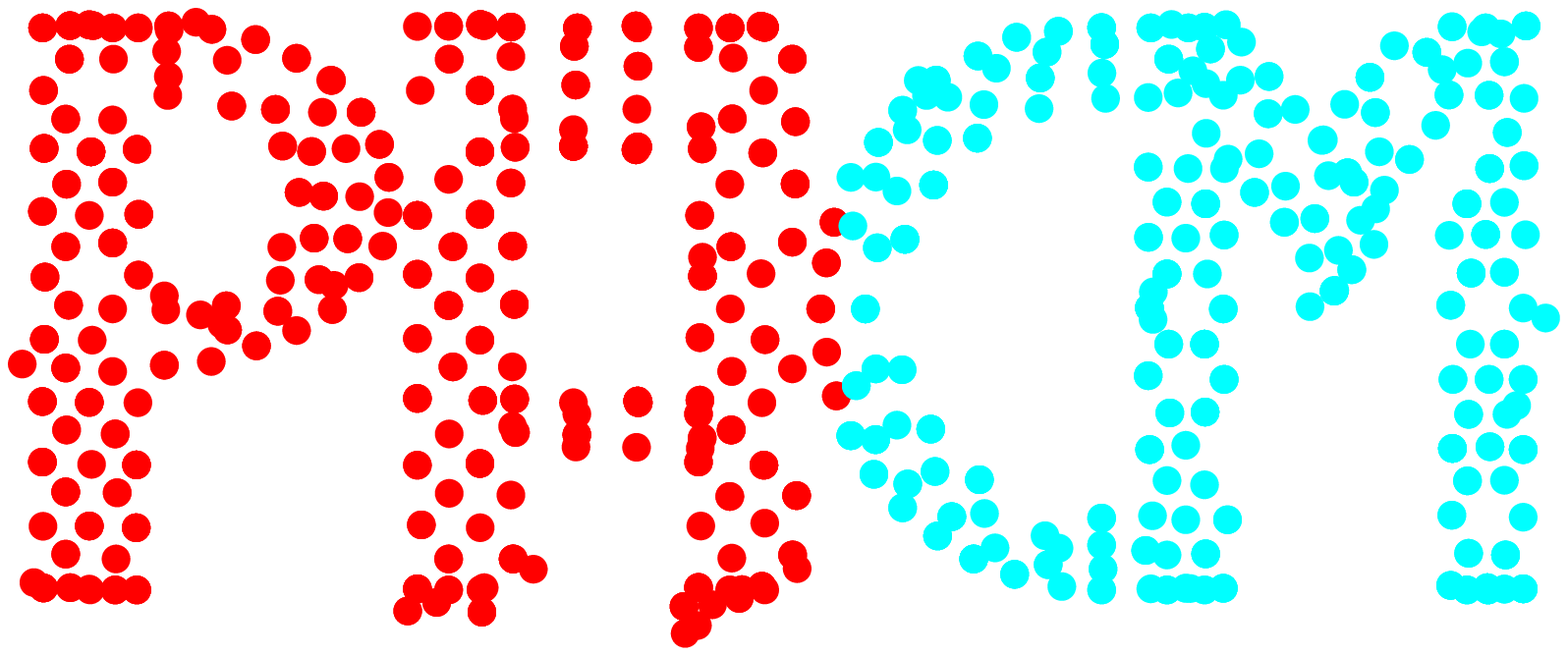}}
\subfigure[K=3]{\includegraphics[width=0.24\columnwidth]{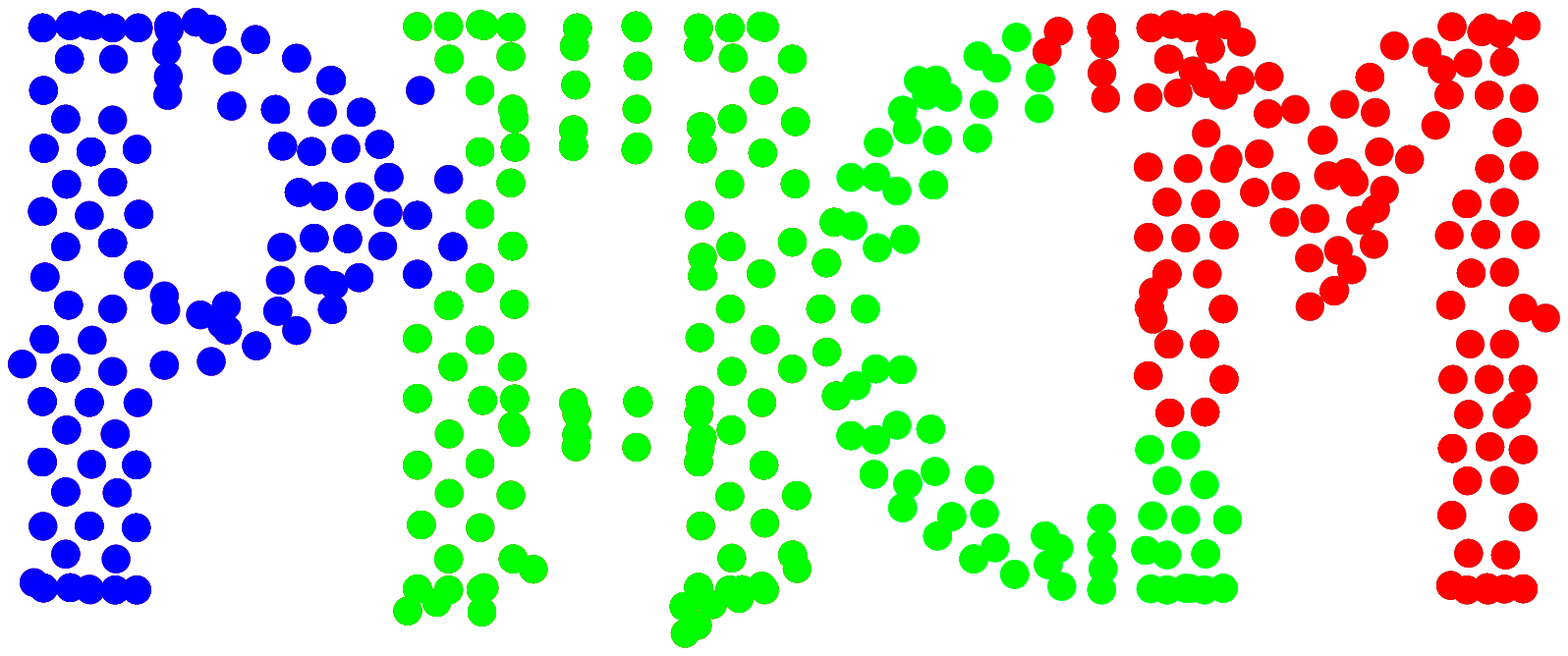}}
\subfigure[K=4]{\includegraphics[width=0.24\columnwidth]{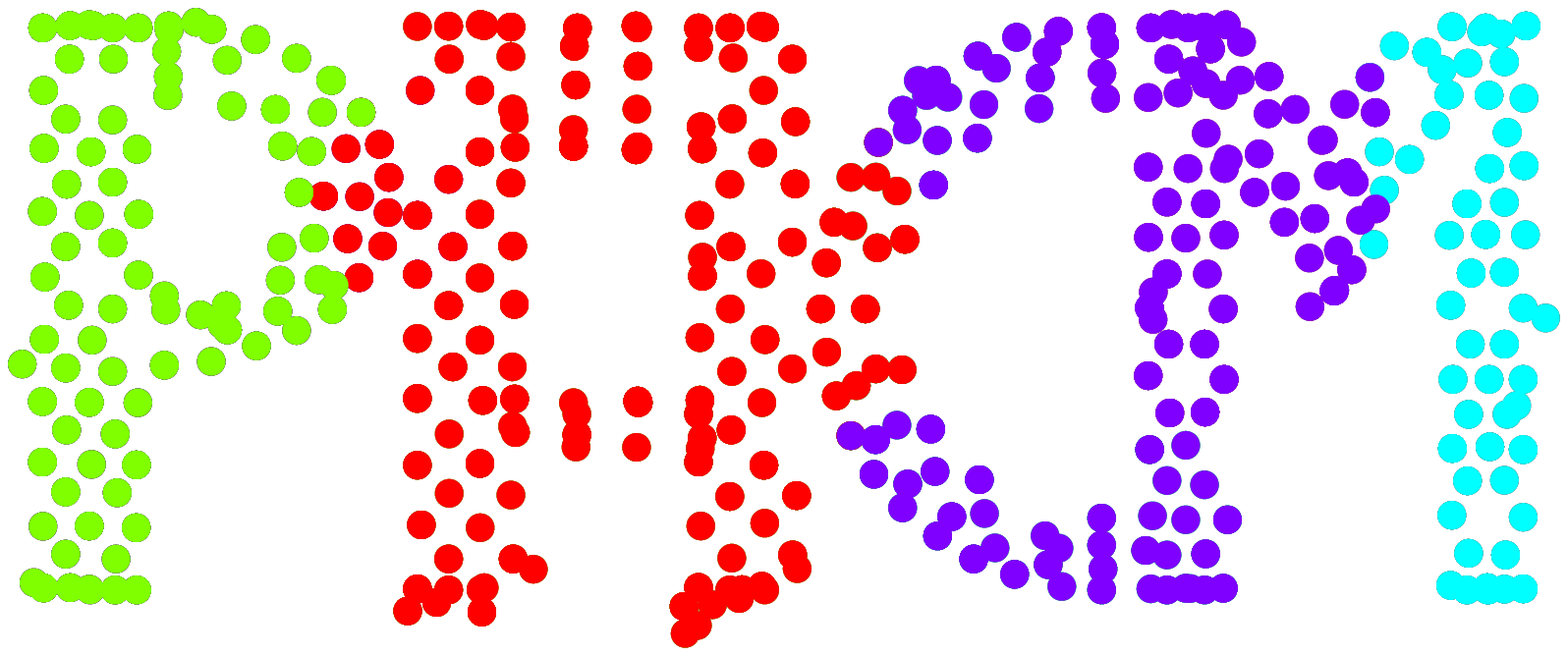}}
\subfigure[K=5]{\includegraphics[width=0.24\columnwidth]{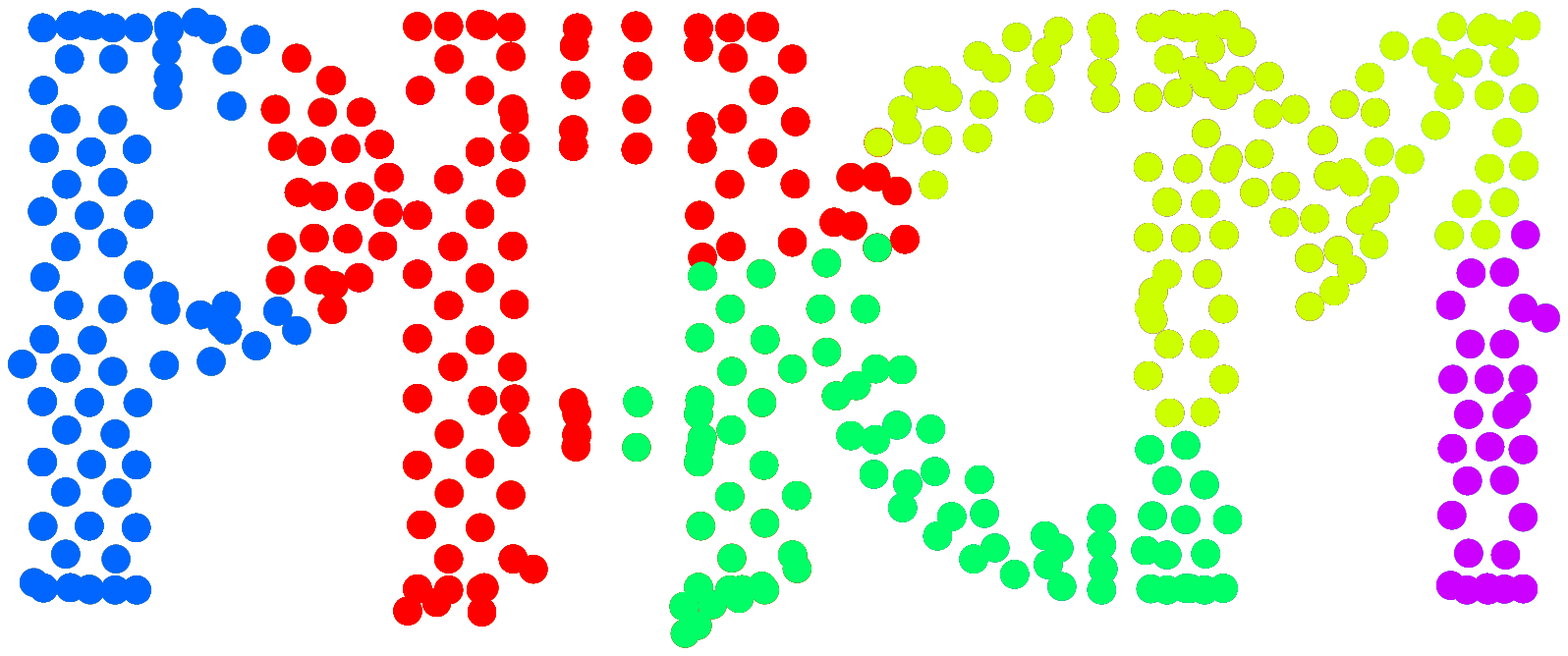}}
\subfigure[K=6]{\includegraphics[width=0.24\columnwidth]{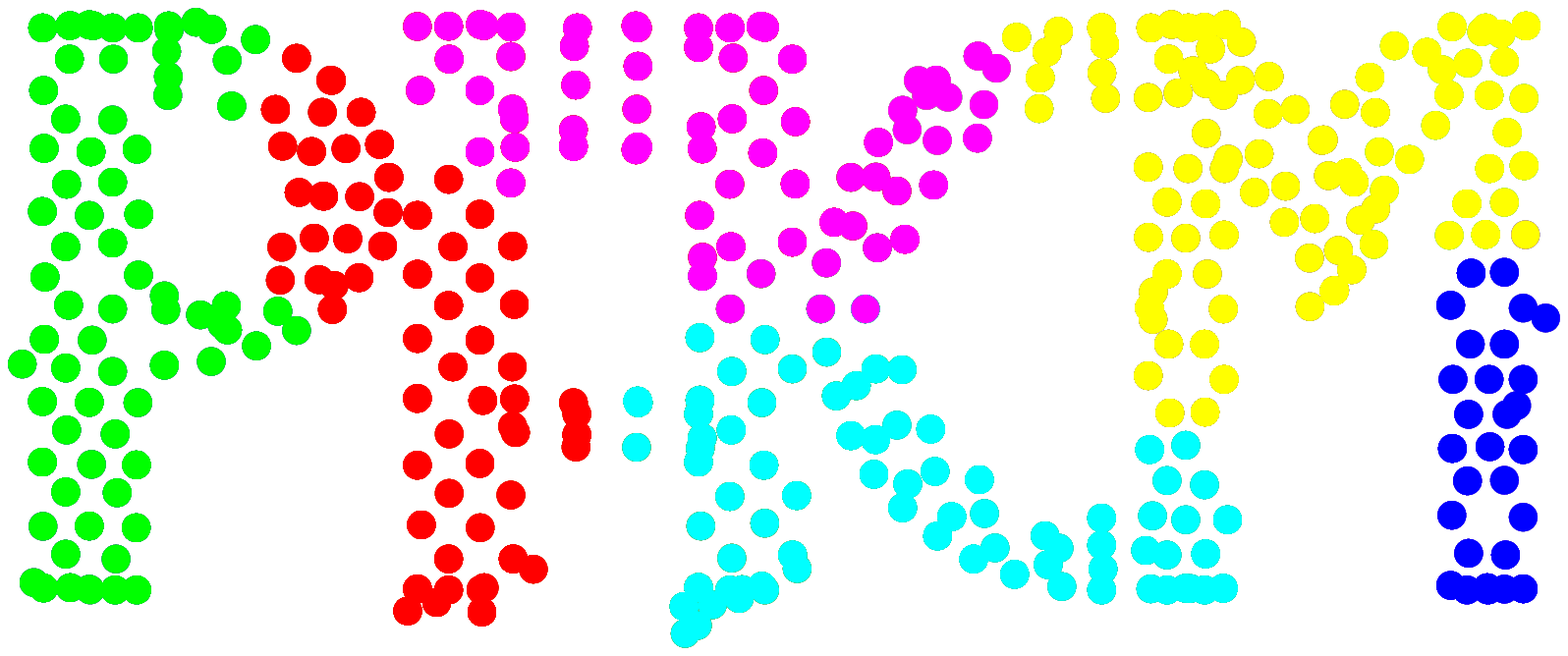}}
\subfigure[K=7]{\includegraphics[width=0.24\columnwidth]{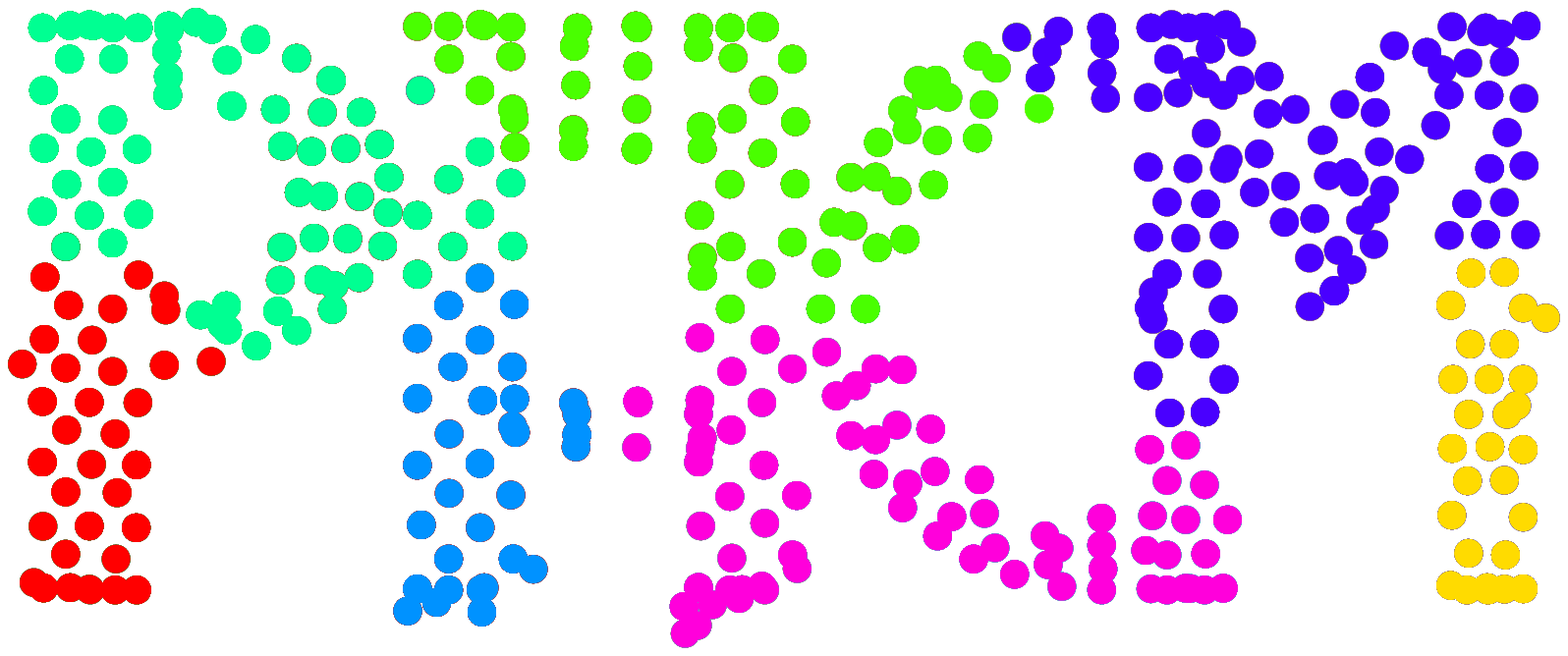}}
\subfigure[K=8]{\includegraphics[width=0.24\columnwidth]{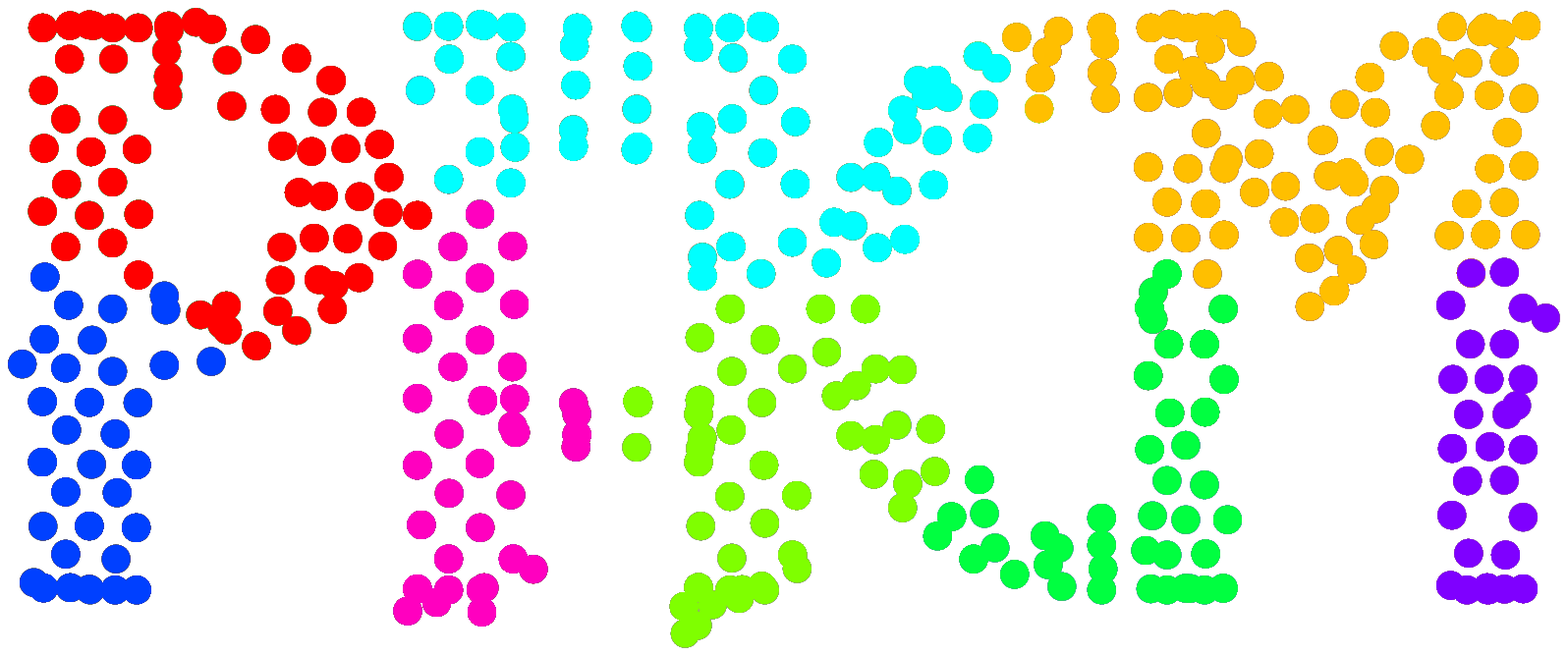}}
\subfigure[K=9]{\includegraphics[width=0.24\columnwidth]{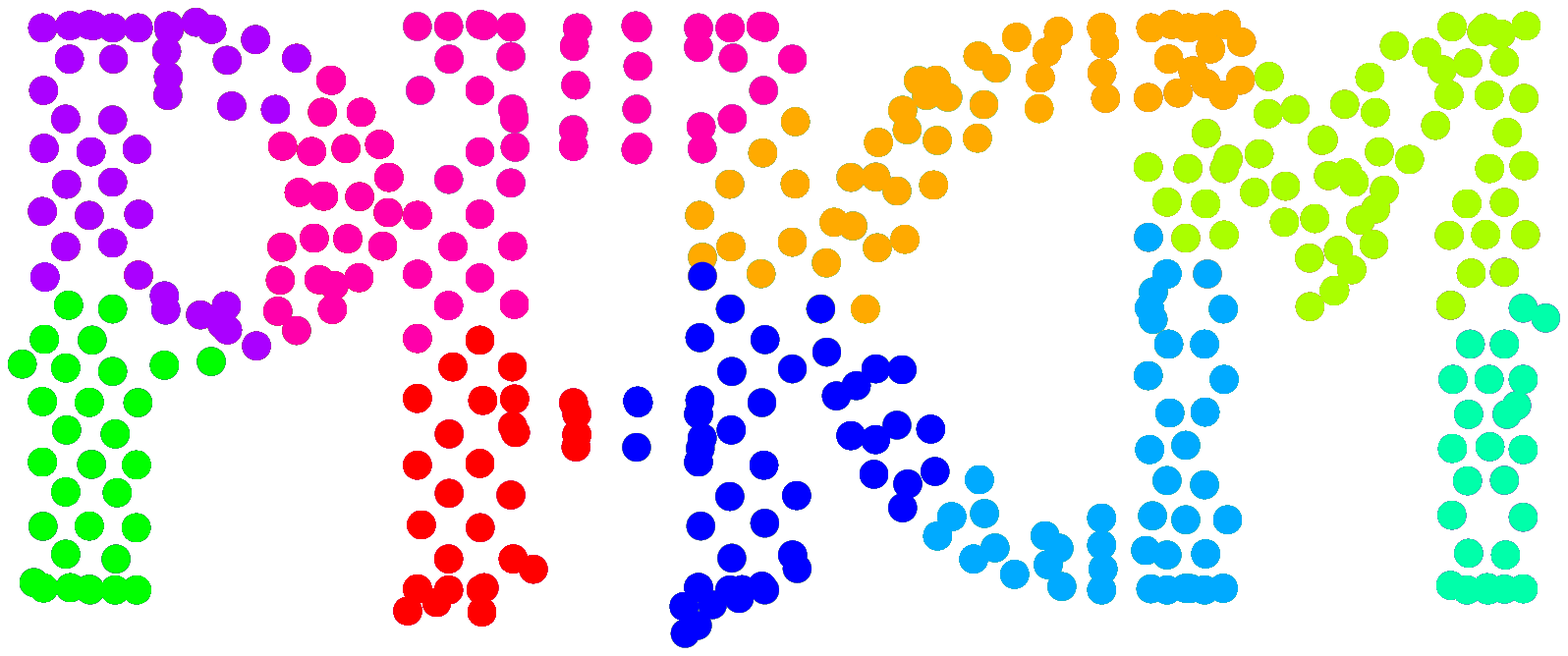}}
\end{center}
\caption{ Partitions of the PACM graph ($K$ is the number of partitions)}
\label{fig:spectral_part_PACM}
\end{figure}

\begin{figure}[h]
\begin{center}
\subfigure[K=2]{\includegraphics[width=0.24\columnwidth]{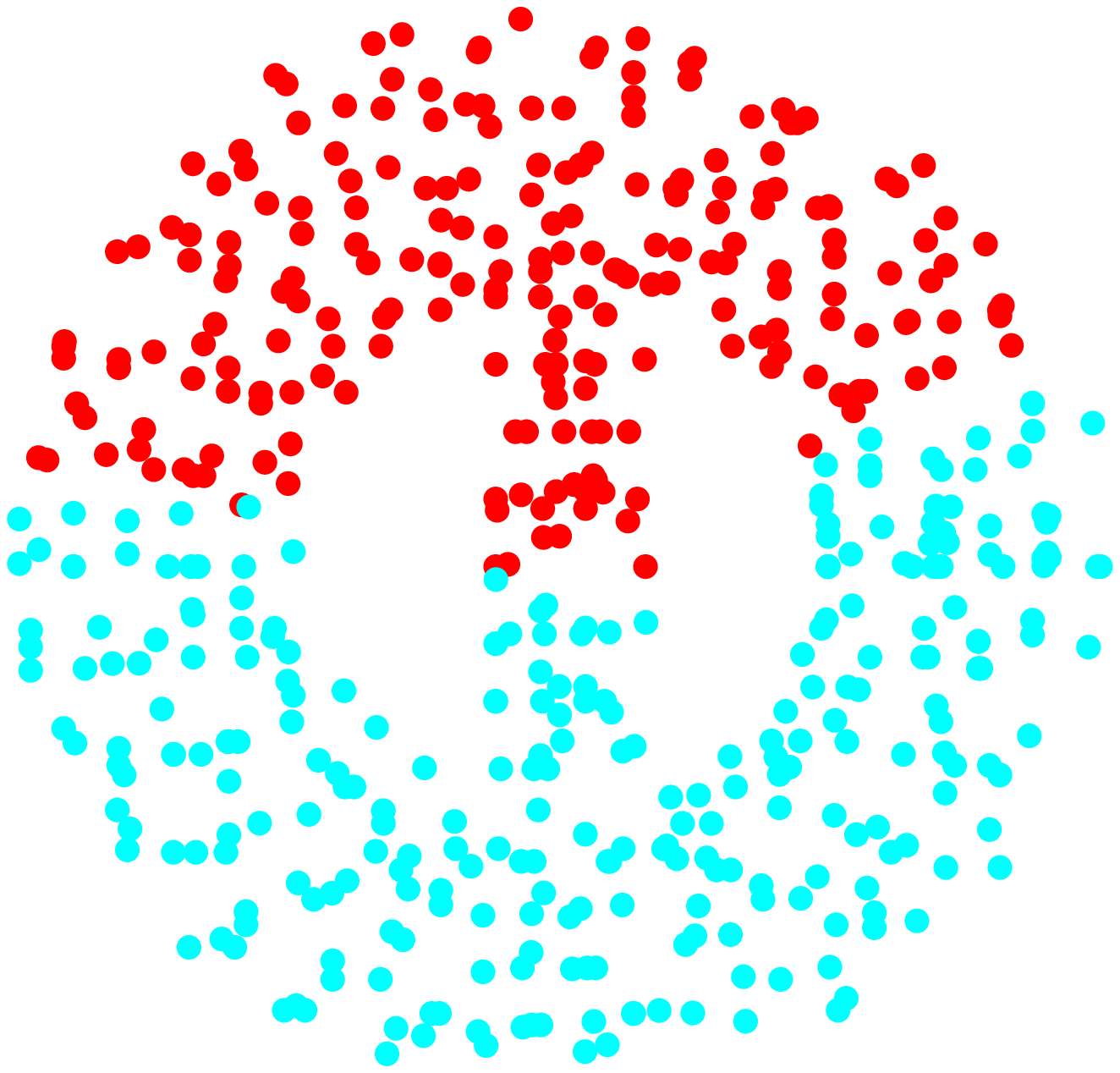}}
\subfigure[K=3]{\includegraphics[width=0.24\columnwidth]{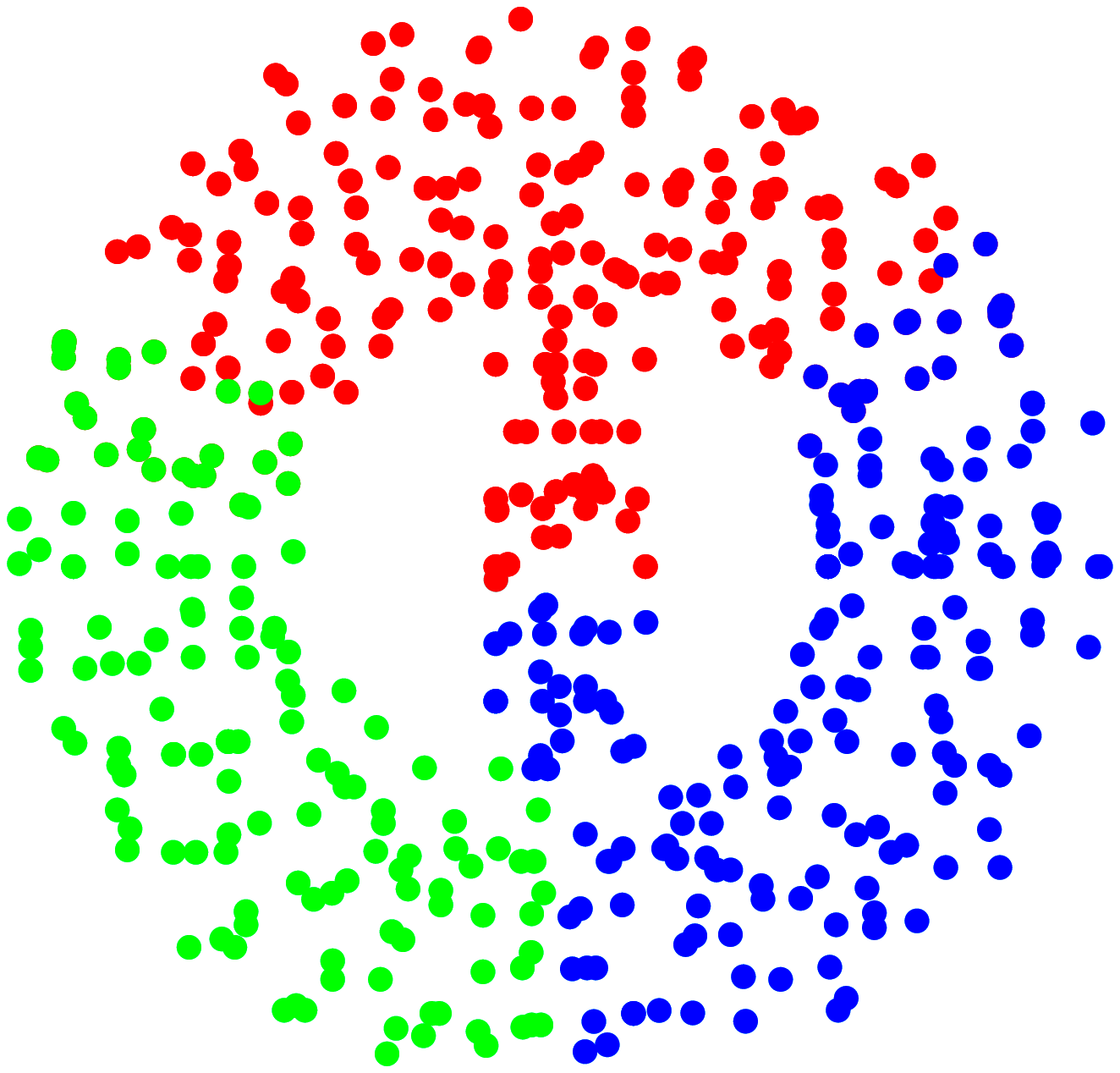}}
\subfigure[K=4]{\includegraphics[width=0.24\columnwidth]{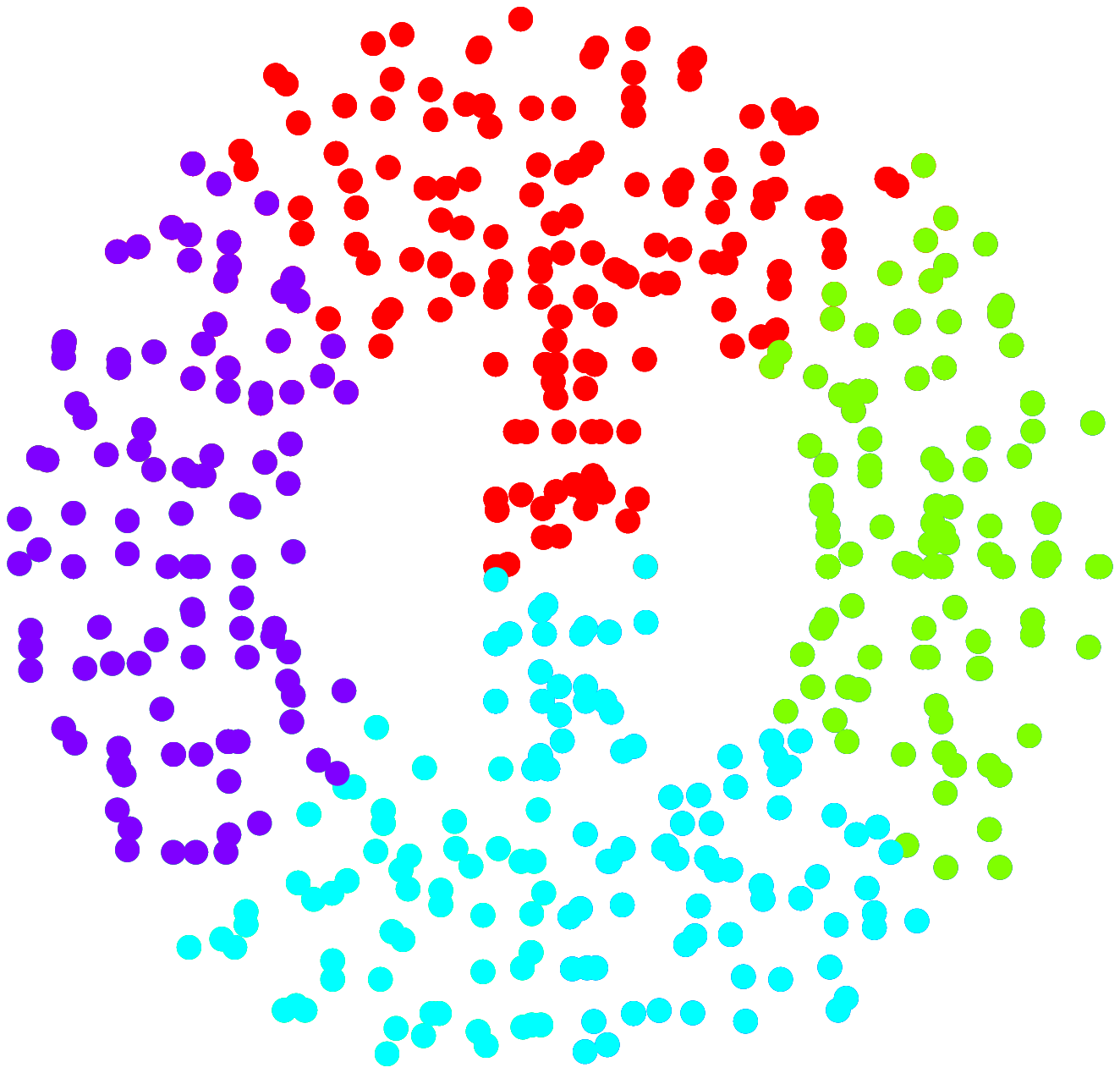}}
\subfigure[K=5]{\includegraphics[width=0.24\columnwidth]{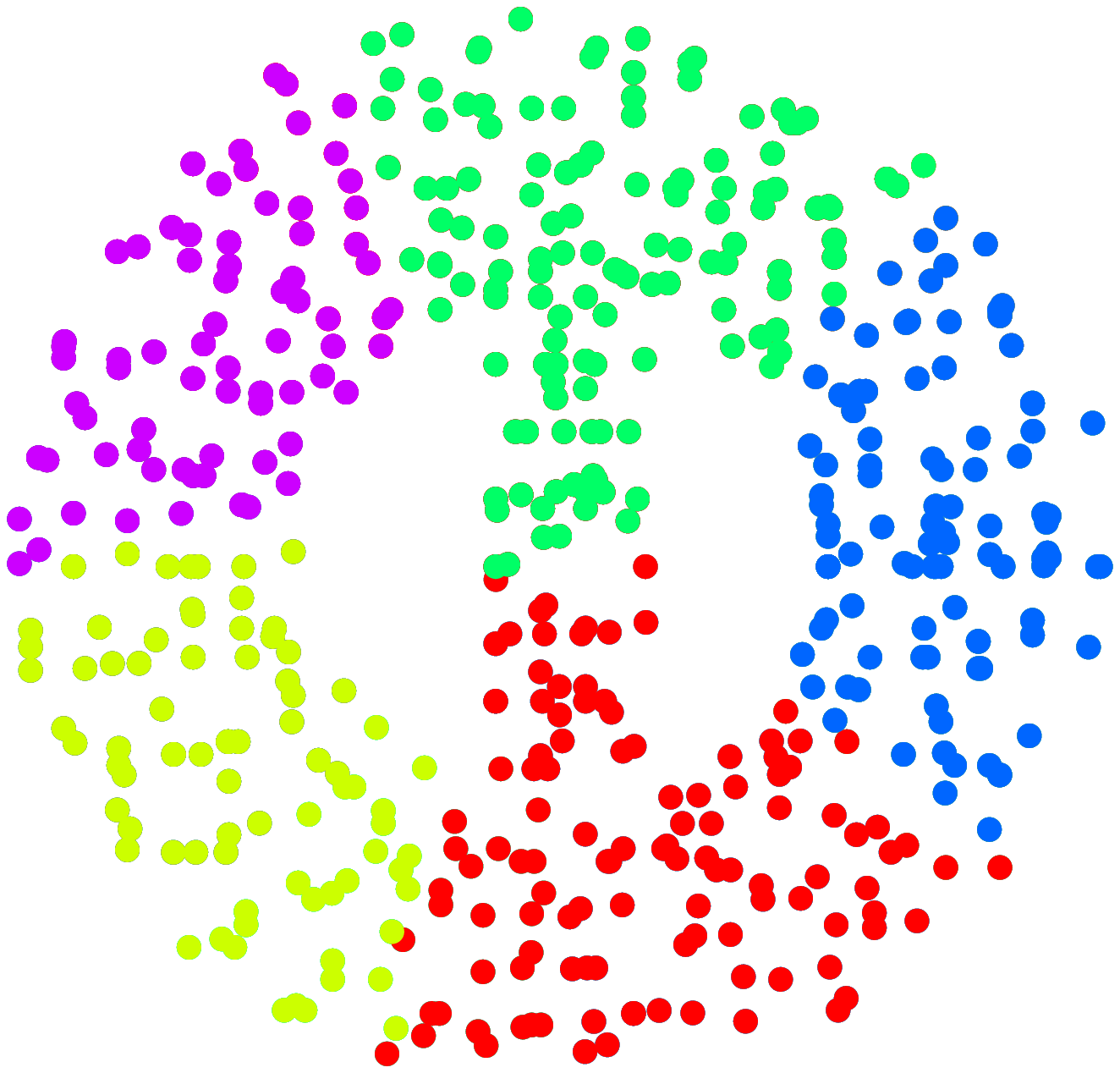}}
\subfigure[K=6]{\includegraphics[width=0.24\columnwidth]{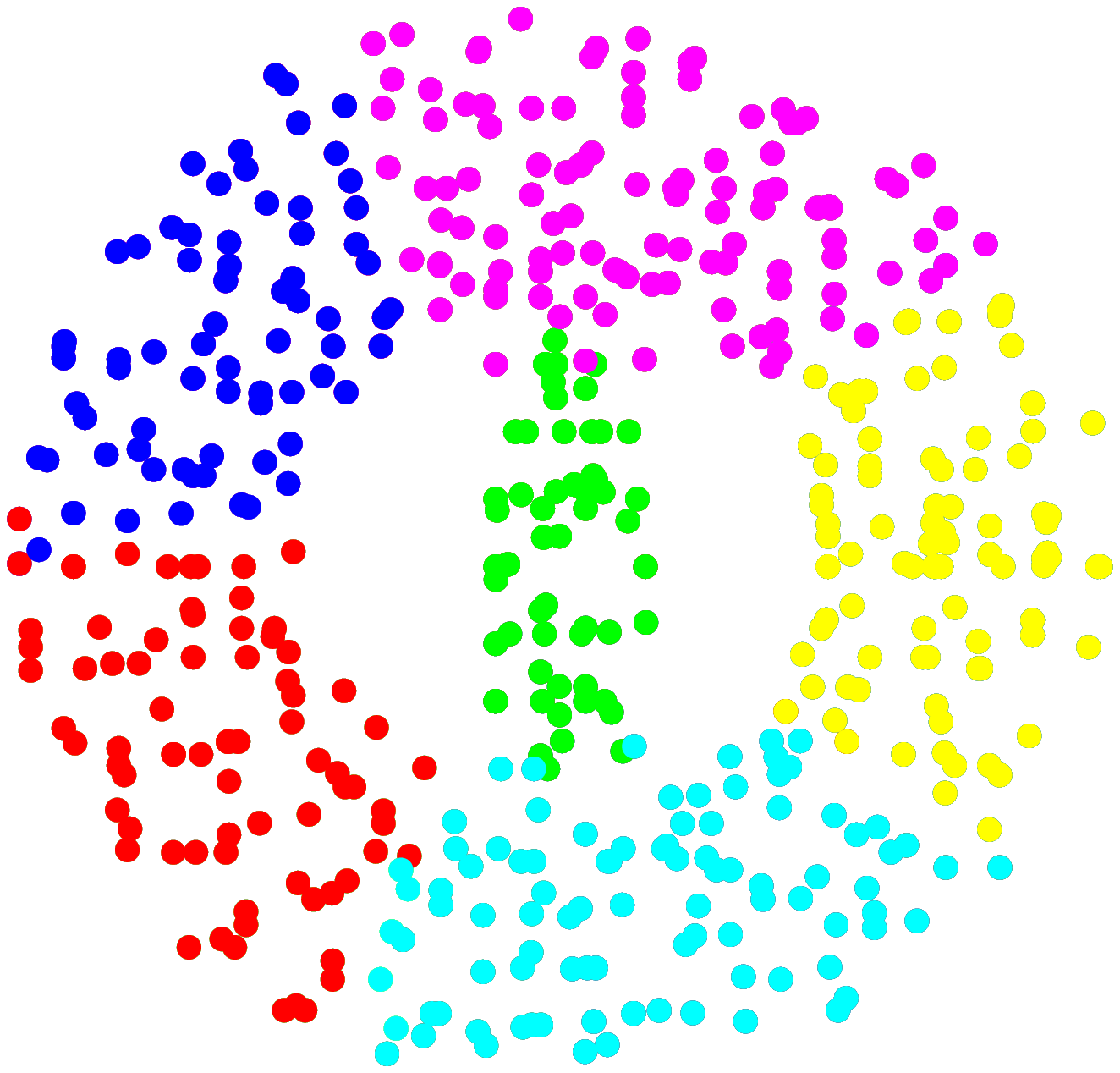}}
\subfigure[K=7]{\includegraphics[width=0.24\columnwidth]{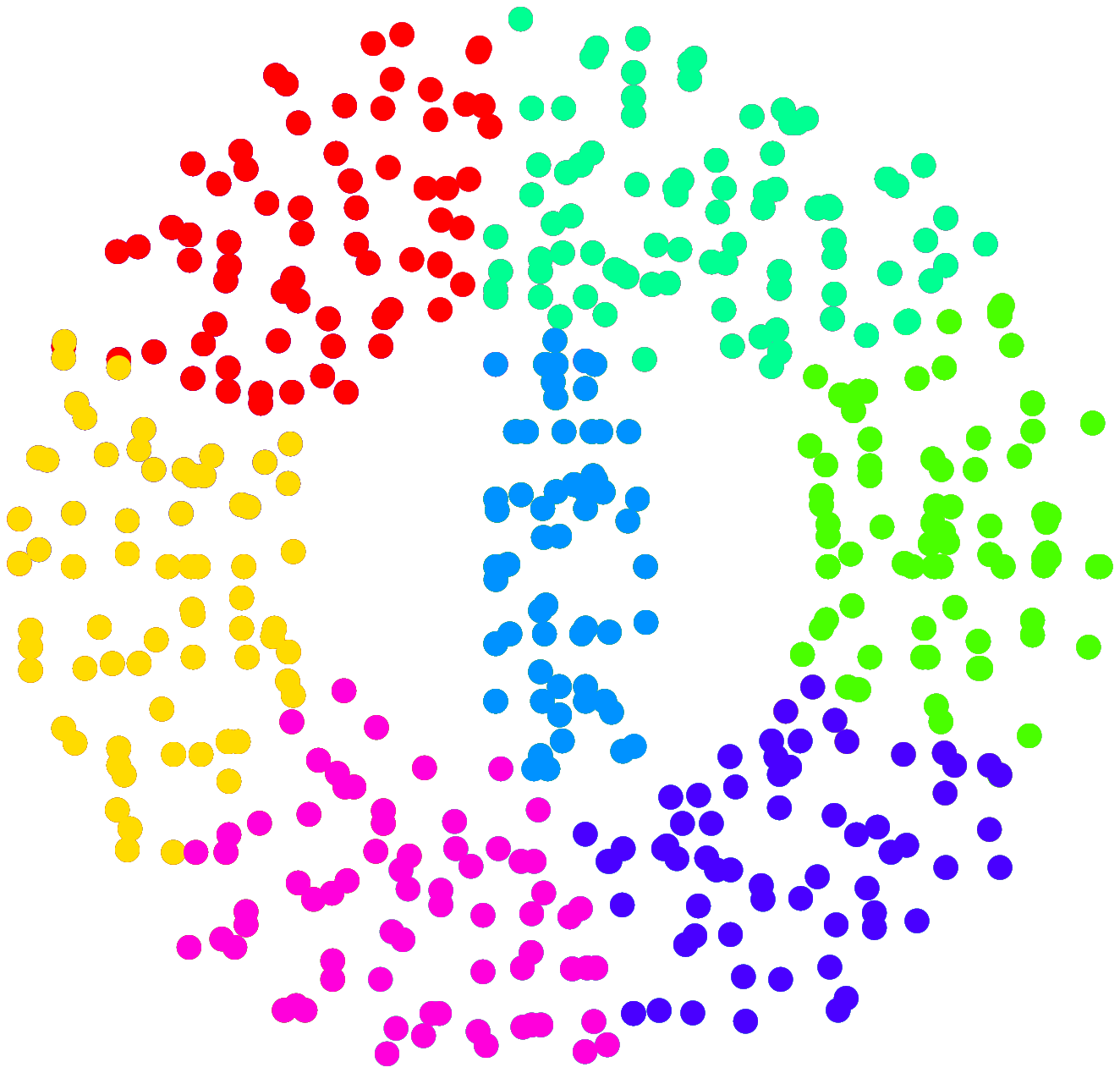}}
\subfigure[K=8]{\includegraphics[width=0.24\columnwidth]{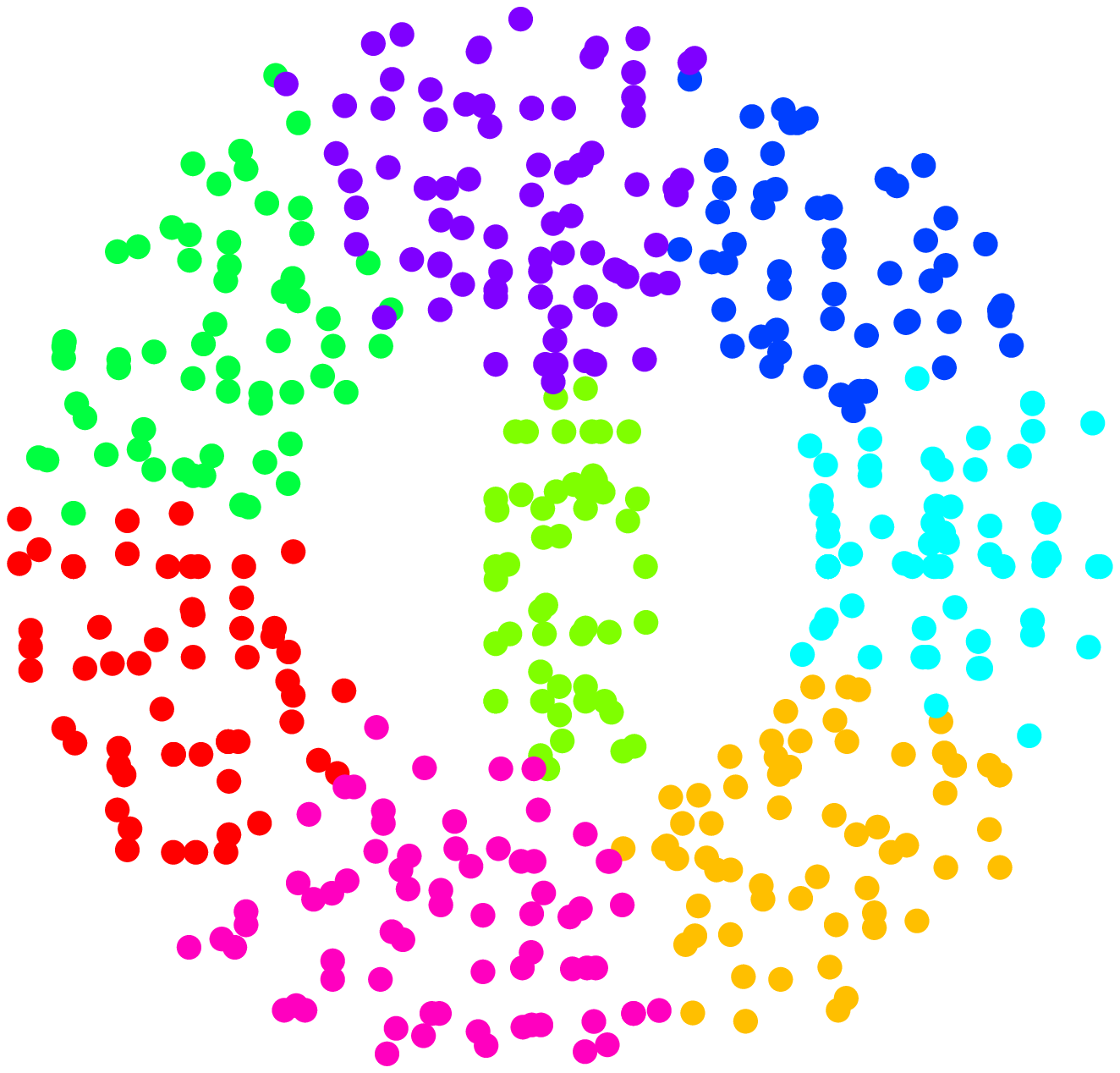}}
\subfigure[K=9]{\includegraphics[width=0.24\columnwidth]{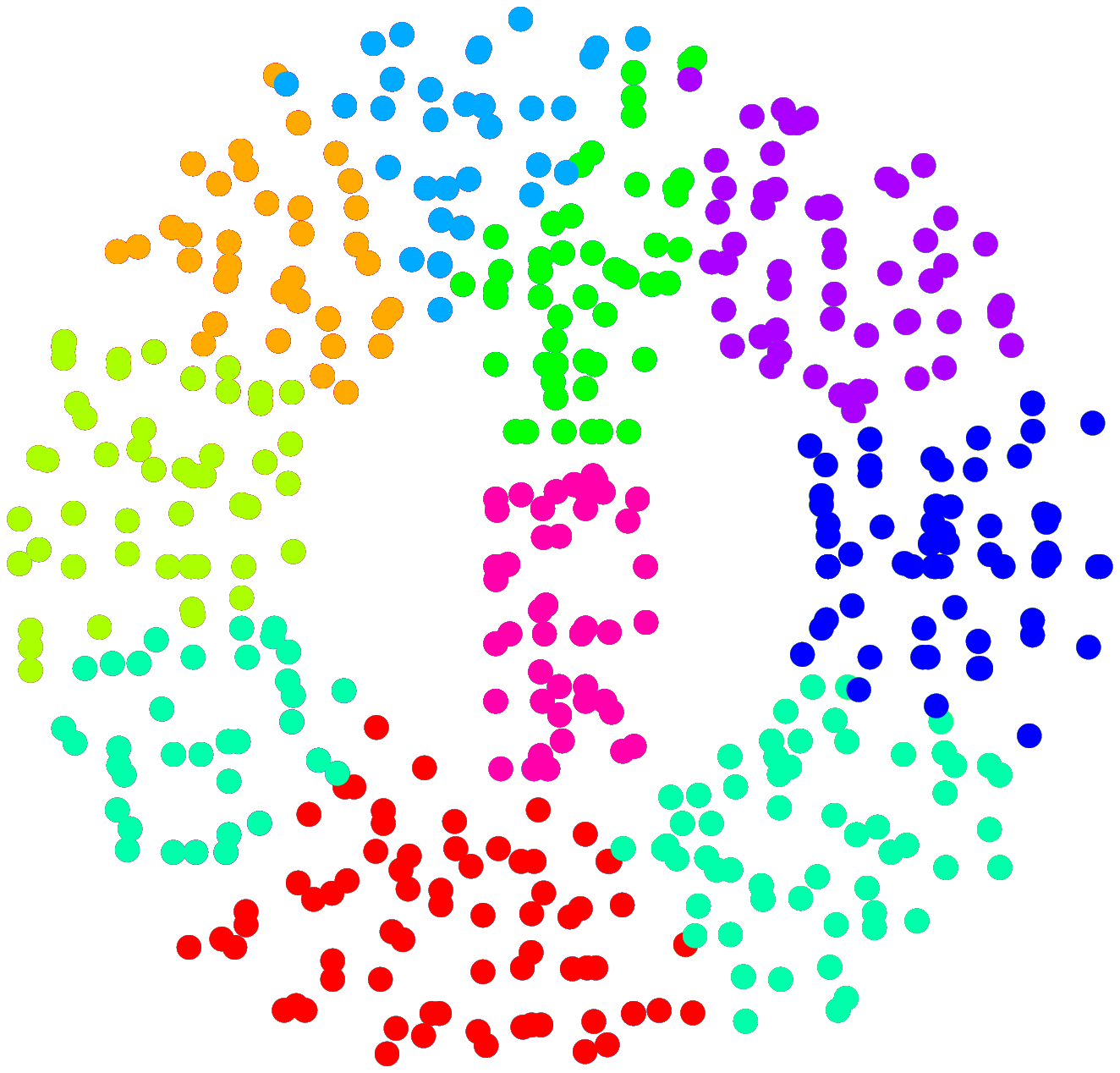}}
\end{center}
\caption{ Partitions of the BRIDGE-DONUT graph ($K$ is the number of partitions)}
\label{fig:spectral_part_DONUT}
\end{figure}

\section{The Median Denoising Algorithm (MDA) and a Rescaling Heuristic}
\label{MDA}

In this section we describe a method to improve the localization of the individual patches prior to the patch alignment and global synchronization steps. The main observation here is the following: suppose a pair of nodes appear in several different patches, then each patch provides a different estimation for their distance, and all these estimators can be averaged together to provide a perhaps more accurate estimator. The improved reestimated distances are then used to localize the individual patches again, this time using the cMDS algorithm since patches no longer contain any missing distances. The second part of this section introduces a simple rescaling procedure, that increases the accuracy of the final reconstruction.

Denote by $C_k$ ($k=1,\ldots,N$) the set of all pairwise edges within patch $P_k$ (so if $P_k$ has $t$ nodes then $C_k$ contains all  ${t\choose 2}$ edges). Let $\mathcal{C}$ denote the set of all edges that appear in at least one patch, i.e., $ \mathcal{C} = \cup_{k=1}^{N} C_k$. Note that $ \mathcal{C}$ contains edges $(i,j)$ that were initially missing from the measurement set (i.e. $(i,j) \notin E$) but for which we now have an estimate due to the initial localization of the patches (using, say, SDP). After the initial embedding, we have at least one estimated distance for each edge $(i,j) \in \mathcal{C}$, but because patches overlap, most of the edges appear in more than one patch. Denote by $d_{ij}^{1}, d_{ij}^{2}, \ldots $ the set of estimates for the (possibly missing) distance $(i,j)$, where $ d_{ij}^{k}$ is the
distance between nodes $i$ and $j$ in the SDP embedding of patch $P_k$. To obtain a more accurate estimate of $d_{ij}$ we take the median of all the above estimates, and denote the updated value by
\begin{equation}
\widetilde{d}_{ij} =  \operatorname{median} (d_{ij}^{1}, d_{ij}^{2}, \ldots)
\end{equation}

If the noise level in the originally measured distances is small, then we expect the estimates $d_{ij}^{1}, d_{ij}^{2}, \ldots $ to have small variance and take values close to the true distance. However, for higher levels of noise, many of the estimates can be very inaccurate (either highly overestimating or underestimating the true distances), and we choose the median value to approximate the true distance. Tables  \ref{tab:denoising_UNITCUBE} and \ref{tab:denoising_1d3z} show that MDA is indeed effective and reduces the noise level in the distances, usually by at least a few percentages. Note that the SDP improves by itself on the initial distance measurements (those that are available), and the MDA decreases their noise even further.

To take advantage of the more accurate updated distance values, we recompute the embeddings of all $N$ patches using the cMDS algorithm, since there exists an estimate for all pairwise distances within a patch. Whenever a hard constrained distance (a ``good" edge) is present, such as a bond length, we use its true distance in the cMDS embeddings, and ignore the noisy value returned by the SDP since hard constrained distances are not enforced in the SDP.

Tables  \ref{tab:denoising_UNITCUBE} and  \ref{tab:denoising_1d3z}
show the average ANE of the patches after the SDP embeddings ($\overline{ANE}_{old}$), and after the cMDS embeddings ran on the updated distances ($\overline{ANE}_{new}$).
Note that the new patch embeddings are significantly more accurate, in some cases the ANE decreasing by as much as $25\%$. We also experimented with an iterative version of this denoising algorithm, where at each round
we run the cMDS and recompute the median of the updated distances. However, subsequent iterations of this procedure did not improve the accuracy furthermore.

\begin{table}[h]
\begin{minipage}[b]{0.99\linewidth}
\centering
\begin{tabular}{|c||c|c||c|c|c||c|c|}
\hline
& & &  \multicolumn{3}{c||}{Existing distances} &  \multicolumn{2}{c|}{Missing distances}  \\
\cline{4-8}
$\eta$  & $\overline{ANE}_{old}$ & $\overline{ANE}_{new}$& $\bar{\eta}$  & $\bar{\eta}_{SDP}$ & $\bar{\eta}_{MDA}$ & $\bar{\eta}_{SDP}$ & $\bar{\eta}_{MDA}$ \\
\hline\hline
	   $10\%$   &   0.04 &    0.03 &        5.06   &                3.75  &      3.42  &       3.37  &       3.13 \\
           $20\%$   &   0.10  &   0.078  &      10.28    &               8.22   &       7.5  &       8.91  &       8.12 \\
           $30\%$   &   0.18  &     0.14  &      15.95    &              14.09   &     13.03  &      17.99  &      16.83 \\
           $40\%$   &   0.27   &   0.22   &     21.74     &             20.66    &    19.31   &     29.09   &     27.72  \\
           $50\%$   &   0.37  &    0.32  &      29.11    &              28.76   &      27.5  &      42.14  &      40.92 \\
\hline
\end{tabular}
\caption{ Denoising distances at various levels of noise, for the UNITCUBE data set. $\overline{ANE}_{old}$ and $\overline{ANE}_{new}$ denote the average ANE of all patches after the SDP embedding respectively after the cMDS embedding with the distances denoised by MDA. $\bar{\eta}$ is the empirical noise level (averaged over all noisy distances), $\bar{\eta}_{SDP}$ is the noise level after the SDP embedding, $\bar{\eta}_{MDA}$ is the noise level after running MDA. We show the noise levels individually for both the existing and missing distances. Note that $\bar{\eta} \neq \eta $ since the former denotes the average absolute value deviation from the true distance as a percentage of the true distance, while the latter is the parameter in the uniform distribution in (\ref{eta}) that is used to define the noise model (in fact, $\mathbb{E}[\bar{\eta}] = \frac{\eta}{2}$).}
\label{tab:denoising_UNITCUBE}
\end{minipage}
\hspace{1.5cm}
\end{table}

\begin{table}[h]
\begin{minipage}[b]{0.99\linewidth}
\centering
\begin{tabular}{|c||c|c||c|c|c|c||c|c|}
\hline
& & &  \multicolumn{4}{c||}{Existing distances} &  \multicolumn{2}{c|}{Missing distances}  \\
\cline{4-9}
$\eta$  & $\overline{ANE}_{old} $ & $\overline{ANE}_{new}$ & $\bar{\eta}$  & $\bar{\eta}_{SDP}$ & $\bar{\eta}_{SDP,GOOD}$ & $\bar{\eta}_{MDA}$ & $\bar{\eta}_{SDP}$ & $\bar{\eta}_{MDA}$ \\
\hline\hline
           $0\% $ &  0.00048  &  0.00039   &        0 &         0.02   &      0.14    &        0 &        0.38     &   0.32 \\
           $10\%$ &  0.00316  &  0.00266    &     4.91 &        3.42    &     1.79     &    3.13  &        3.2      &   2.99 \\
           $20\%$ &  0.00590  &  0.00506    &     9.81 &        7.01    &     3.68     &    6.44  &       6.18      &   5.81 \\
           $30\%$ &  0.00957  &  0.00803    &    14.74 &       10.75    &     5.59     &    9.88  &       9.51      &   8.95 \\
           $40\%$ &  0.01292  &  0.01120     &   19.65  &       14.5     &    7.62      &  13.42   &     13.55       &   12.8 \\
           $50\%$ &  0.01639  &  0.01462     &   24.57  &       18.1     &    9.82      &  16.93   &     18.24       &   17.35 \\
\hline
\end{tabular}
\caption{ Denoising distances at various levels of noise for the ubiqutin (PDG 1d3z, \cite{cornilescu}). $\bar{\eta}_{SDP,GOOD}$ denotes the average noise of the hard constraint (``good") distances after the SDP embeddings. Note that the current implementation of the SNL-SDP algorithm used for the patch embeddings does not allow for hard distance constraints.}
\label{tab:denoising_1d3z}
\end{minipage}
\hspace{1.5cm}
\end{table}

In the remaining part of this section, we discuss the issue of scaling of the distances after different steps of the algorithm, and propose a simple rescaling  heuristic that improves the overall reconstruction.

We denote the true distance measurements by $l_{ij} = \| p_i - p_j \|, (i,j) \in E$, and the empirical measurements by $d_{ij} = l_{ij} + \epsilon_{ij}, (i,j) \in E$, where $ \epsilon_{ij}$ is random independent noise, as introduced in (\ref{eta}). We generically refer to $\tilde{d}_{ij}$ as the distance between nodes $i$ and $j$, after different steps of the algorithm, as indicated by the columns of Tables \ref{tab:rescaling_UNITCUBE} and \ref{tab:rescaling_1d3z}. We denote by $\delta$ the average scaling of the distances with respect to their ground truth values, and similarly by $\kappa$ the average empirical noise of the distance measurements
$$ \delta = \operatorname{mean} \left(  \frac{l_{ij}}{\tilde{d}_{ij}}, (i,j) \in E \right),\;\;\;\;\;\;\;
    \kappa = \operatorname{mean} \left( \left| \frac{ \tilde{d}_{ij} - l_{ij}}{l_{ij}} \right|, (i,j) \in E \right)$$

The first column of the tables contains the scaling and noise values of the initial distance measurements, taken as input by ASAP. The fact that $\kappa$ is approximately half the value of the noise level $\eta$ stems from the fact that the measurements are uniformly distributed in $[ (1-\eta) l_{ij},(1+\eta) l_{ij}]$. Note that the initial scaling is quite significant, and this is a consequence of the same noise model and the geometric graph assumption. Distances that are scaled down by the noise will still be available to 3D-ASAP, while distances that are scaled up and become larger than the sensing radius will be ignored. Therefore, on average, the empirical distances are scaled down and $\delta$ is greater than 1.

The second column gives the scaling and noise values after the MDA algorithm followed by cMDS to recompute the patches. Note that after this denoising heuristic, while the scaling $\delta$ remains very similar to the original values, the noise $\kappa$ decreases considerably especially for the lower levels of noise.

The third column computes the scaling, noise and ANE values after the least squares step for estimating the translations, where by ANE we mean the average normalized error introduced in (\ref{AAvgNE}). Note that both the scaling and the noise levels significantly increase after integrating all patches in a globally consistent framework. To correct the scaling issue, we scale up all the available distances (thus taking into account only the edges of the initial measurement graph) by $\delta^{*}$, computed as
$$ \delta^{*} = \operatorname{mean} \left(  \frac{d_{ij}}{ \tilde{d}_{ij}^{LS} },  (i,j) \in E \right)$$
where $d_{ij}$ denotes the initial distances, and $\tilde{d}_{ij}^{LS}$ the distances after the least-square step. Note that, as a consequence, the ANE error of the reconstruction decreases significantly.

Finally, we refine the solution by running gradient descent, on the initial distances $d_{ij}$ scaled up by $\delta^{*}$, which further improves the scaling, noise and ANE values.
\begin{table}[h]
\begin{minipage}[b]{0.99\linewidth}
\centering
\begin{tabular}{|c|c|c||c|c||c|c|c||c|c||c|c|c|}
\hline
& \multicolumn{2}{c||}{Original} &  \multicolumn{2}{c||}{MDA-cMDS}  &  \multicolumn{3}{c||}{Least squares} & \multicolumn{2}{c||}{Rescaled}  & \multicolumn{3}{c|}{Gradient Descent}  \\
\cline{2-13}
$\eta  \%$ & $\delta$ & $\kappa \%$ & $\delta$& $ \kappa \%$ & $\delta$& $\kappa \%$ &   ANE & $\delta^{*}$&    ANE & $\delta$& $\kappa \%$ & ANE \\
\hline\hline
$10$ &      1.01 &   5.1 &   1.01 &  3.4 &   1.02 &   3.6 &   0.05 &   1.01 &   0.05 &   1.00 &   3.0 &   0.04 \\
$20$ &     1.05 &  10.3 &   1.04 &   7.5 &   1.08  &   8.8 &   0.12 &   1.04 &   0.10 &   1.02 &   6.1 &   0.07 \\
$30$ &     1.12 &  16.0 &   1.11 &  13.0 &   1.21  &  17.0 &   0.25 &   1.09 &   0.20 &   1.04 &   9.9 &   0.16 \\
$40$ &     1.22 &  21.8 &   1.21 &  19.3 &   1.43  &  27.8 &   0.36 &   1.21 &   0.26 &   1.04 &  13.9 &   0.19 \\
$45$ &      1.29 &  26.0 &   1.28 &  23.8 &   1.65 &  35.6 &   0.45 &   1.32 &   0.30 &   1.02 &  17.8 &   0.26 \\
$50$ &      1.38 &  29.2 &   1.36 &  27.5 &   1.92 &  43.8 &   0.53 &   1.47 &   0.35 &   0.99 &  22.4 &   0.32 \\
\hline
\end{tabular}
\caption{ Average scaling and noise of the existing edges (and where appropriate the ANE of the reconstruction) for the UNITCUBE graph, at various steps of the algorithm: after the original measurements,
after the SDP-MDA-cMDS denoising step, after the least squares in Step 3, after rescaling of the available distances by the empirical scaling factor $\delta^{*}$,
and after running gradient descent on the rescaled (originally available) distances.}
\label{tab:rescaling_UNITCUBE}
\end{minipage}
\hspace{1.5cm}
\end{table}

\begin{table}[h]
\begin{minipage}[b]{0.99\linewidth}
\centering
\begin{tabular}{|c|c|c||c|c||c|c|c||c|c||c|c|c|}
\hline
& \multicolumn{2}{c||}{Original} &  \multicolumn{2}{c||}{MDA-cMDS}  &  \multicolumn{3}{c||}{Least squares} & \multicolumn{2}{c||}{Rescaled}  & \multicolumn{3}{c|}{Gradient Descent}  \\
\cline{2-13}
$\eta \%$ & $\delta$ & $\kappa \%$ & $\delta$& $\kappa \%$ & $\delta$& $\kappa \%$ &   ANE & $\delta^{*}$&    ANE & $\delta$& $\kappa \%$ & ANE \\
\hline\hline
$10$ &     1.00   &   0.03  &    1.00   &   0.02   &   1.02    &  0.04   &   0.01    &  1.02   &        4e-3 &   0.99  &    0.01   &       3e-3 \\
$20$ &    1.01    &  0.06   &   1.00    &  0.04    &  1.04     & 0.07    &  0.01     & 1.03    &  0.01    &  0.99   &   0.02    &  0.01   \\
$30$ &     1.02   &   0.08  &    1.01   &   0.06   &   1.06    &  0.10 &   0.02    &  1.05   &   0.01   &   0.99  &    0.03   &   0.01   \\
$40$ &    1.03    &  0.11   &   1.01    &  0.08    &  1.10   & 0.14    &  0.02     & 1.09    &  0.02    &  0.99   &   0.05    &  0.01   \\
$50$ &    1.06    &  0.14   &   1.02    &  0.10  &  1.18     & 0.19    &  0.04     & 1.16    &  0.03    &  0.98   &   0.08    &  0.01   \\
\hline
\end{tabular}
\caption{ Average scaling and noise of all edges (good+NOE) of the ubiqutin (PDG 1d3z) (and where appropriate the ANE of the reconstruction) at various steps of the algorithm: after the original measurements,
after the SDP-MDA-cMDS denoising step, after the least squares in Step 3, after rescaling of the available distances by the empirical scaling factor $\delta^{*}$,
and after running gradient descent on the rescaled NOE distances.}
\label{tab:rescaling_1d3z}
\end{minipage}
\hspace{1.5cm}
\end{table}

\section{Synchronization with molecular fragments}
\label{sync_anchors}

In the molecule problem, as mentioned earlier, there are molecular fragments whose local configuration is known in advance up to an element of the special Euclidean group SE(3) rather than the Euclidean group E(3). In other words, these are small structures that need to be translated and rotated with respect to the global coordinate system, but no reflection is required. As a result, for their corresponding patches we know the corresponding group element in $\mathbb{Z}_2$. This motivates us to consider the problem of synchronization over $\mathbb{Z}_2$ when ``molecular fragment" information is available, and refer to it
from now on as SYNC($\mathbb{Z}_2$). We propose and compare four methods for solving  SYNC($\mathbb{Z}_2$): two relaxations to a quadratically constrained quadratic program (QCQP), and two semidefinite programming (SDP) formulations.

Mathematically, the synchronization problem over the group $\mathbb{Z}_2 = \{\pm 1\}$ can be stated as follows: given $k$ group elements (anchors) $\mathcal{A} = \{ a_1,\ldots,a_k\}$ and a set of (possibly) incomplete (noisy) pairwise group measurements $z_{ij} = a_i x_j^{-1}$ and $z_{ij} = x_i x_j^{-1}$, $(i,j) \in E(G)$, find the unknown group elements (sensors) $\mathcal{S} = \{ x_{1}, \ldots ,x_{l} \} $. We may sometimes abuse notation and denote all nodes by  $ x = \{ x_{1}, \ldots , x_{l}, x_{l+1}, \ldots, x_{N}\}$, where $N=l+k$, with the understanding that the last $k$ elements denote the anchors. Whenever we say that indices $i \in \mathcal{S}$ and $j \in \mathcal{A} $, it should be understood that we refer to sensor $x_i \in \mathcal{S}$, respectively anchor $a_j \in \mathcal{A}$. In the molecule problem, $k$ denotes the number of patches whose reflection is known a priori , and the goal is to estimate the reflection of the remaining $l=N-k$ patches.

In the absence of anchors (when $k=0$ and $N=l$), the synchronization problem over $\mathbb{Z}_2$ was considered in \cite{ASAP}, following the approach for angular synchronization introduced in \cite{sync}. The goal is to estimate $N$ unknown group elements $z_1,\ldots,z_N  \in \mathbb{Z}_2$, whose pairwise group relations are captured in a sparse $N \times N$ matrix $Z=(z_{ij})$ where
\begin{equation}
z_{ij} = \left\{
     \begin{array}{rl}
   z_i  z_j^{-1} & \;\; \text{ if the measurement is correct}    \\
 - z_i  z_j^{-1} & \;\; \text{ if the measurement is incorrect}  \\
     \end{array}
   \right.
\label{noiseModelZ22}
\end{equation}
The set $E^P$ of pairs $(i,j)$ for which a pairwise measurement exists (correct or incorrect) can be realized as the edge set of the patch graph $G^P=(V^P,E^P)$, where vertices correspond to patches, and edges to the measurements. We denote the original solution by $z_1, \ldots, z_N$ and our approximated solution by $ x_1, \ldots, x_N$. Our task is to estimate $ x_1, \ldots, x_N$ such that we satisfy as many pairwise group relations as possible. To that end, we start by considering the problem of maximizing the following quadratic form
\begin{equation} \label{maxZ2}
 \max_{ x_1,\ldots,x_N \in \mathbb{Z}_2^N} \sum_{i,j=1}^{N} x_i {Z}_{ij} x_j = \max_{ x_1,\ldots,x_N \in \mathbb{Z}_2^N} x^T Z x,
\end{equation}
whose maximum,  in the noise-free case, is attained when $x=z$.
Note that for the correct set of reflections $z_1, \ldots, z_N$, each correct group measurement $z_{ij} = z_i z_j^{-1}$ contributes $ z_i {Z}_{ij} z_j = + 1 $ to the sum in (\ref{maxZ2}), while each incorrect measurement will add a $-1$ to the summation.
Our task now is to solve the quadratic integer optimization problem in (\ref{maxZ2}), which is unfortunately a non-convex optimization problem. Since NP-hard problems, such as the maximal clique problem, can be formulated as an integer quadratic problem, the maximization in (\ref{maxZ2}) is itself NP-hard. We therefore make use of the following relaxation
\begin{equation} \label{relmaxZ2_2}
\max_{ {\sum_{i=1}^{N} |x_i|^2=N}} \; \sum_{i,j=1}^{N} x_i {Z}_{ij} x_j = \max_{ \parallel x \parallel^2 = N} x^T Z x
\end{equation}
whose maximum is achieved when $x=v_1$, where $v_1$ is the normalized top eigenvector of $Z$, satisfying $Z v_1 = \lambda_1 v_1$ and $ \| v_1 \|^2=N$, with $\lambda_1$ being the largest eigenvalue. Thus an approximate solution to the maximization problem in (\ref{maxZ2}) is given by the top eigenvector of the symmetric matrix $Z$. In practice, for increased robustness to noise, we use the following alternative relaxation
\begin{equation} \label{relmaxZ2_D}
\max_{  x^T D x = 2m } x^T Z x,
\end{equation}
where $D_{ii} = \sum_{j=1}^{N} |Z_{ij}|$, $m$ denotes the number of edges, and hence $2m$ is the sum of the node degrees. The solution to the maximization problem in (\ref{relmaxZ2_D}) is given by the top eigenvector of the normalized matrix $ D^{-1}Z$. We refer the reader to \cite{sync,ASAP} for an analysis of the eigenvector method and its  connection with the normalized discrete graph Laplacian.



\subsection{Synchronization by relaxing a QCQP}
\label{sub:qcqp}

When anchor information is available, meaning that we know a priori  some of the group elements which we refer to as anchors, we follow a similar approach to equations (\ref{maxZ2}, \ref{relmaxZ2_2})  that motivated the eigenvector synchronization method. Similarly, we are interested in finding the set of unknown elements that maximize the number of satisfied pairwise measurements, but this time, under the additional  constraints imposed by the anchors, i.e. $x_i = a_i, i \in \mathcal{A}$. Unfortunately, maximizing the quadratic form $x^{T}Zx$ under the anchor constraints, is no longer an eigenvector problem. Our approach is to combine under the same objective function both the contribution of the sensor-sensor pairwise measurements (as a quadratic term) and the contribution of the anchors-sensor pairwise measurements (as a linear term). To that end, we start by formulating the synchronization problem as a least squares problem, by minimizing the following quadratic form
\begin{eqnarray}
\min_{ x }  \sum_{(i,j) \in E} (x_i - Z_{ij}x_j)^2  \nonumber
& = &  \min_{ x }  \sum_{(i,j) \in E} z_i^2 + Z_{ij}^2 x_j^2  - 2 Z_{ij} x_i x_j \nonumber \\
& = &  \min_{ x }  \sum_{(i,j) \in E}  x_i^2 + x_j^2  - 2 Z_{ij} x_i x_j \nonumber \\
& = &  \min_{ x }  \sum_{i=1}^n d_i x_i^2  -  \sum_{(i,j) \in E} 2 Z_{ij} x_i x_j \nonumber \\
& = &  \min_{ x }  x^T D x - x^T Z x  \nonumber \\
& = &  \min_{ x } x^T( D - Z) x
\label{LS_DZ}
\end{eqnarray}
To account for the existence of anchors, we first write the matrices $Z$ and $D$ in the following block format \\
$$ Z = \left[ \begin{array}{cc}
S & U  \\
U^{T} & V \\
\end{array} \right],\;\;\;\;\;\;\;\;\;\;\;\;\; D =  \left[ \begin{array}{cc}
D_S & 0  \\
0 & D_V \\
\end{array} \right]  $$
where $S_{l \times l}$, $U_{l \times k}$ and $V_{ k \times k}$ denote the sensor-sensor, sensor-anchor respectively anchor-anchor measurements, and $D$ is a diagonal matrix with $D_{ii} = \sum_{j=1}^{N} |Z_{ij}|$. Note that $V$ is a matrix with all nonzero entries, since the measurement between any two anchors is readily available. Similarly, we write (with a slight abuse of notation) the solution vector in the form $x = [s \; a]^T$, where $s$ denotes the signs of the sensor nodes, and $a$ the signs of the anchor nodes. The quadratic function minimized in (\ref{LS_DZ}) can now be written in the following form
\begin{equation}
 \left
[ \begin{array}{cc} s^T  &  a^T  \\  \end{array} \right]
\left[ \begin{array}{cc}
D_S - S & - U  \\
- U^{T} & D_V - V \\
\end{array} \right]
\left[ \begin{array}{c}
				s  \\
				a \\
				\end{array} \right]
 = s^T (D_S - S)  s - 2 s^T U a + a^T (D_V - V) a
\label{blockform}
\end{equation}
The vector $(Ua)_{l \times 1}$  can be interpreted as the anchor contribution in the estimation of the  sensors. Note that in the case when the sensor-anchors measurements should be trusted more than the sensor-sensor measurements, the anchor contribution can be weighted accordingly, and equation (\ref{blockform}) becomes $z^T (D_S-S) z - 2 \gamma  z^T U a + a^T (D_V - V) a $, for a given weight $\gamma$. Since $ a^T (D_V - V) a $ is a (nonnegative) constant,
we are interested in minimizing the integer quadratic form $$ \underset{z \in \mathbb{Z}_2^l}{\text{ minimize }} z^T (D_S - S)  z - 2 z^T U a.$$ Unfortunately, solving such a problem is NP-hard, and thus we introduce the following relaxation to a quadratically constrained quadratic program (QCQP)
\begin{equation}
	\begin{aligned}
	& \underset{z = (z_1, \ldots, z_l)}{\text{minimize}}
	& & z^T (D_S - S)  z - 2 z^T U a  &  \\
	& \text{subject to}
	& &  z^T  z = l
	\end{aligned}
\label{maximization_zTz}
\end{equation}
We proceed by considering the Lagrangian function
\begin{equation}
	\phi(z,\lambda) =  z^T (D_S - S)  z -  2 z^T U a + \lambda (z^T z - l)
\label{lagrange}
\end{equation}
where $\lambda$ is the Lagrange multiplier. Differentiating (\ref{lagrange}) with respect to $z$, we are led to the following equation
\begin{equation}
	 2 (D_S - S) z -  2  U a + 2 \lambda z = 0  \nonumber
\label{diff_lagrang1}
\end{equation}
which can be written as
\begin{equation}
	 (D_S - S + \lambda I) z = U a
\label{diff_lagrang}
\end{equation}
Using the fact that $z^T z = l$, (\ref{diff_lagrang}) becomes
\begin{equation}
      (U a)^T  (D_S - S + \lambda I)^{-2} (U a) = l
\label{lambda_eq}
\end{equation}
which we solve for $\lambda$. We refer the reader to \cite{golub} for a more detailed analysis of eigenvalue solutions to such quadratically constrained optimization problems.
Finally, the solution to the minimization problem formulated in (\ref{maximization_zTz}) is given by
\begin{equation}
  z^* = ( D_S - S + \lambda I )^{-1} (U a)
\label{solquadlin}
\end{equation}
For simplicity, we choose to solve the nonlinear matrix equality (\ref{lambda_eq}) in MATLAB using the $lsqnonlin$ command, whose success in finding $\lambda$ is contingent upon a good initialization. To that end, we consider the eigendecomposition of the symmetric positive definite matrix
    $$D_S-S = \sum_{i=1}^{l} \lambda_i \phi_i \phi_i^T,$$
where $(D-S) \phi_i = \lambda_i \phi_i$, for $i =1,\ldots,l$, and expand the vectors  $Ua$ and $z$ in the form  $Ua = \sum_{i=1}^l \beta_i \phi_i $ and $z =  \sum_{i=1}^l \alpha_i \phi_i$.
Rewriting equation (\ref{diff_lagrang}),  $(D_S - S) z + \lambda z = U a$,  in terms of the above expansion yields
\begin{equation}
\sum_{i=1}^l \lambda_i  \alpha_i \phi_i + \lambda \sum_{i=1}^l \alpha_i \phi_i = \sum_{i=1}^l \beta_i \phi_i.
\end{equation}
Thus, for every $i=1,\dots,l$ it must hold true that
\begin{equation}
\lambda_i  \alpha_i  + \lambda  \alpha_i = \beta_i,
\label{constr}
\end{equation}
i.e. $\alpha_i = \frac{\beta_i}{\lambda + \lambda_i}$. Since $z^Tz=l$ and $\sum_{i=1}^l \alpha_i^2 = l$, the Lagrange multiplier $\lambda$ must satisfy
\begin{equation}
 \sum_{i=1}^l  \frac{\beta_i^2}{ (\lambda + \lambda_i)^2} = l
\label{funclam}
\end{equation}
The condition that $\lambda + \lambda_i > 0$ ensures that the solution search for $\lambda$ lies outside the set of singularities of (\ref{funclam}). Thus $\lambda >  - \lambda_0$, where $\lambda_0$ is the smallest eigenvalue of the matrix $D_S-S$. Note that the row sums of $D_S-S$ are non-negative since the diagonal entries in $D_S$ are the degree of the sensor nodes in the entire graph, taking into account the sensor-anchor edges as well, not just the sensor-sensor edges. Furthermore, $D_S-S$ is a positive semidefinite matrix (thus $\lambda_0 \geq 0$), as it can be seen from the following identity
\begin{eqnarray}  \label{LS_zZ}
x^T(D_S-S)x
& = &  \sum_{i=1}^l x_i^2 d_i - \sum_{(i,j) \in E_{SS}} 2 x_i x_j S_{ij} \geq \sum_{(i,j) \in E_{SS} , S_{ij}= \pm 1} (x_i-S_{ij} x_j)^2  \geq 0  \nonumber
\end{eqnarray}
where $E_{SS}$ denotes the set of sensor-sensor edges.

We also consider the following formulation similar to (\ref{maximization_zTz}), but we replace the constraint
 $z^T z = l$ with $z^T D_S z = \Delta$,
where $\Delta = \sum_{i=1}^l d_i $ is the sum of the degrees of all sensor nodes. Note that the following change of variable $\bar{z}=D_S^{1/2}z$ brings the new optimization problem to a form similar to (\ref{maximization_zTz})
\begin{equation}
	\begin{aligned}
	& \underset{\bar{z} }{\text{minimize}}
	& & \bar{z}^T D_S^{-1/2} (D_S - S) D_S^{-1/2} \bar{z} - 2 \bar{z}^T D_S^{-1/2} U a  &  \\
	& \text{subject to}
	& &  \bar{z}^T \bar{z} = \Delta
	\end{aligned}
\label{maximization_zTDz}
\end{equation}
with the corresponding Lagrangian $\bar{\lambda}$ satisfying $\bar{\lambda} > - \bar{\lambda}_0$, where $\bar{\lambda}_0$ is the smallest eigenvalue of the matrix $D_S^{-1/2} (D_S - S) D_S^{-1/2}$.

\subsection{Synchronization by SDP}
An alternative approach to solving SYNC($\mathbb{Z}_2$) is by using semidefinite programming. The objective function in (\ref{maxZ2}) can be written as
\begin{equation}
  \sum_{i,j=1}^{N} x_i {Z}_{ij} x_j = Trace(Z \Upsilon)
\end{equation}
where $\Upsilon$ is the $N \times N$ symmetric rank-one matrix with $\pm 1$ entries
\begin{equation}
\Upsilon_{ij} = \left\{
     \begin{array}{rl}
x_i  x_j^{-1} & \;\; \text{ if } i,j \in \mathcal{S} \\
x_i  a_j^{-1} & \;\; \text{ if } i \in \mathcal{S}, j \in \mathcal{A} \\
a_i  a_j^{-1} & \;\; \text{ if } i,j \in \mathcal{A} \\
     \end{array}
   \right.
\label{upsilondef}
\end{equation}
Note that $\Upsilon$ has ones on its diagonal $ \Upsilon_{ii} =1, \forall i=1,\ldots,N$, and the anchor information gives another layer of hard constraints. The SDP relaxation of (\ref{maxZ2}) in the presence of anchors becomes
\begin{equation}
	\begin{aligned}
	& \underset{\Upsilon \in \mathbb{R}^{N \times N}}{\text{maximize}}
	& & Trace(Z \Upsilon) \\
	& \text{subject to}
	& & \Upsilon_{ii} =1, i=1,\ldots,N \\
	& & &  \Upsilon_{ij} = a_i  a_j^{-1}, \;\; \text{ if } i,j \in \mathcal{A} \\
		& & &   \Upsilon \succeq 0
	\end{aligned}
\label{SDP_max}
\end{equation}
where the maximization is taken over all semidefinite positive real-valued matrices  $\Upsilon \succeq 0$.  While $\Upsilon$ as defined in (\ref{SDP_max}) has rank one, the solution of the SDP is not necessarily of rank one. We therefore compute the top eigenvector of that matrix, and estimate $x_1,\ldots,x_s$ based on the sign of its first $s$ entries.

Alternatively, to reduce the number of unknowns in (\ref{SDP_max}) from $N=l+k$ to $l$, one may consider the following relaxation
 \begin{equation}
		\begin{aligned}
		& \underset{ \Upsilon \in \mathbb{R}^{l \times l}; x \in \mathbb{R}^{l} }{\text{maximize}}
		& & Trace(S \Upsilon)  + 2 x^T U a \\
		& \text{subject to}
		& & \Upsilon_{ii} =1, \forall i=1,\ldots,l  \\
		& & &  \left[ \begin{array}{cc}
				\Upsilon  & x  \\
				 x^T & 1 \\
				\end{array} \right]  \succeq 0
		\end{aligned}
	\label{SDP_max_XY}
\end{equation}
Ideally, we would like to enforce the constraint $\Upsilon = x x^T$, which guarantees that  $\Upsilon$ is indeed a rank-one solution. However, since rank constrains do not lead to convex optimization problems, we relax this constraint via Schur's lemma to $\Upsilon \succeq z z^T$. This last matrix inequality is equivalent \cite{boyd94} to the last constraint in the SDP formulation in (\ref{SDP_max_XY}). Finally, we obtain estimators $\hat{z}_1, \ldots,\hat{z}_l$ for the sensors by setting $\hat{z}_i= \text{sign}(x_{i}), \forall i=1,\ldots,l$.

\subsection{Comparison of algorithms for SYNC($\mathbb{Z}_2$)}
Figure \ref{fig:comp_sync_anch} compares the performance of the algorithms we proposed for synchronization in the presence of molecular fragment information. The adjacency graph of available pairwise measurements is an Erd\H{o}s-R\'{e}nyi graph $G(N,p)$ with $N=75$ and $p=0.2$ (i.e., a graph with $N$ nodes, where each edge is present with probability $p$, independent of the other edges). We show numerical experiments for four scenarios, where we vary the number of anchors $k=\{5,15,30,50\}$ chosen uniformly at random from the $N$ nodes. As the number of anchors increases, compared to the number of sensors $s=N-k$, the performance of the four algorithms is essentially similar. Only when the number of anchor nodes is small ($k=5$), the SDP-Y formulation shows superior results, together with SDP-XY and QCQP with constraint $z^T D z = \Delta$, while the QCQP with constraint $z^T z=s$
performs less well. In practice, one would choose the QCQP formulation with constraint $z^T D z = \Delta$, since the SDP based methods are computationally expensive as the size of the problem increases. This was also our method of choice for computing the reflection of the patches in the localization of the ubiqutin (PDG 1d3z), when molecular fragment information was available and the reflection of many of the patches was known a priori .

\begin{figure}[h]
\centering
\includegraphics[width=0.49\columnwidth]{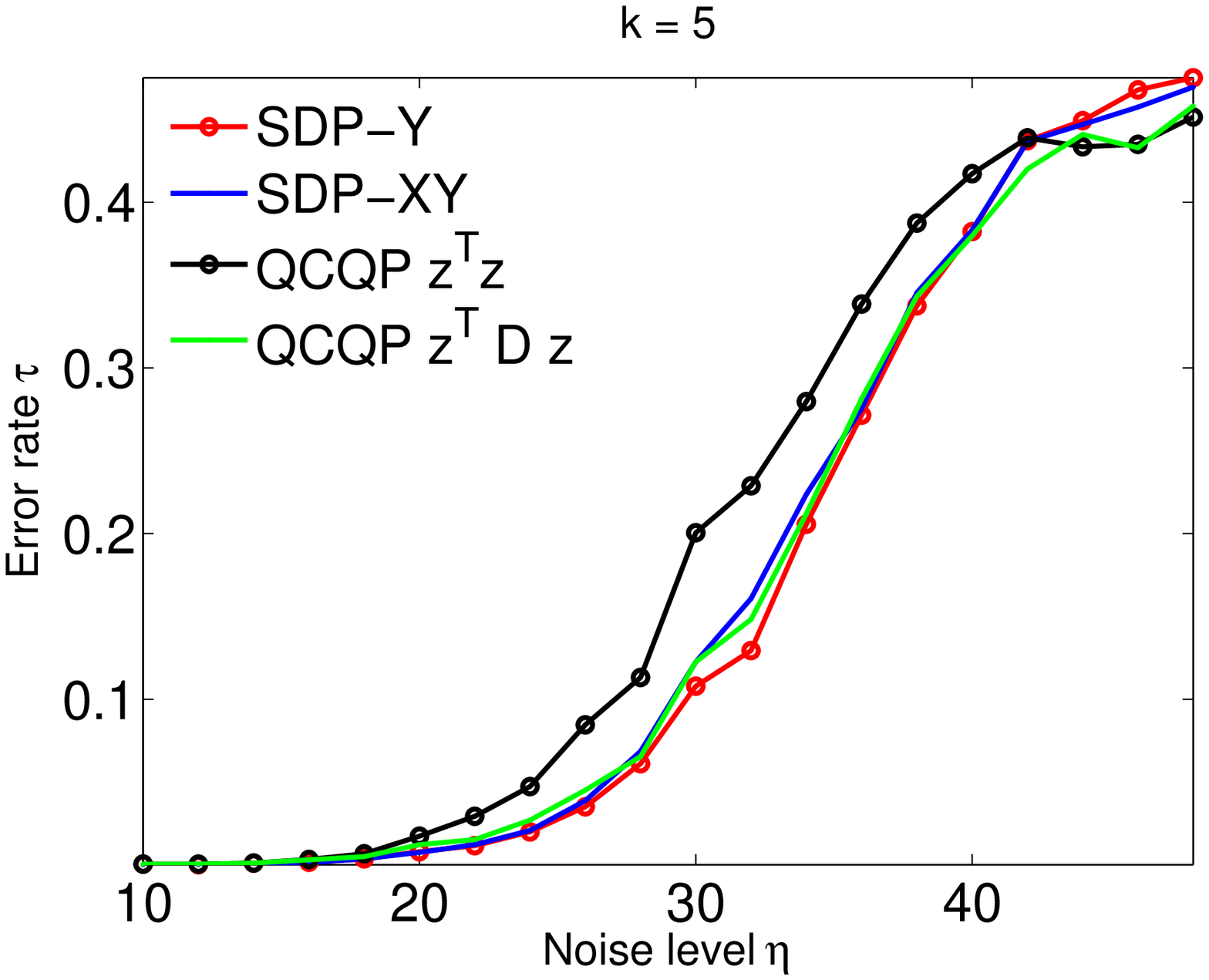}
\includegraphics[width=0.49\columnwidth]{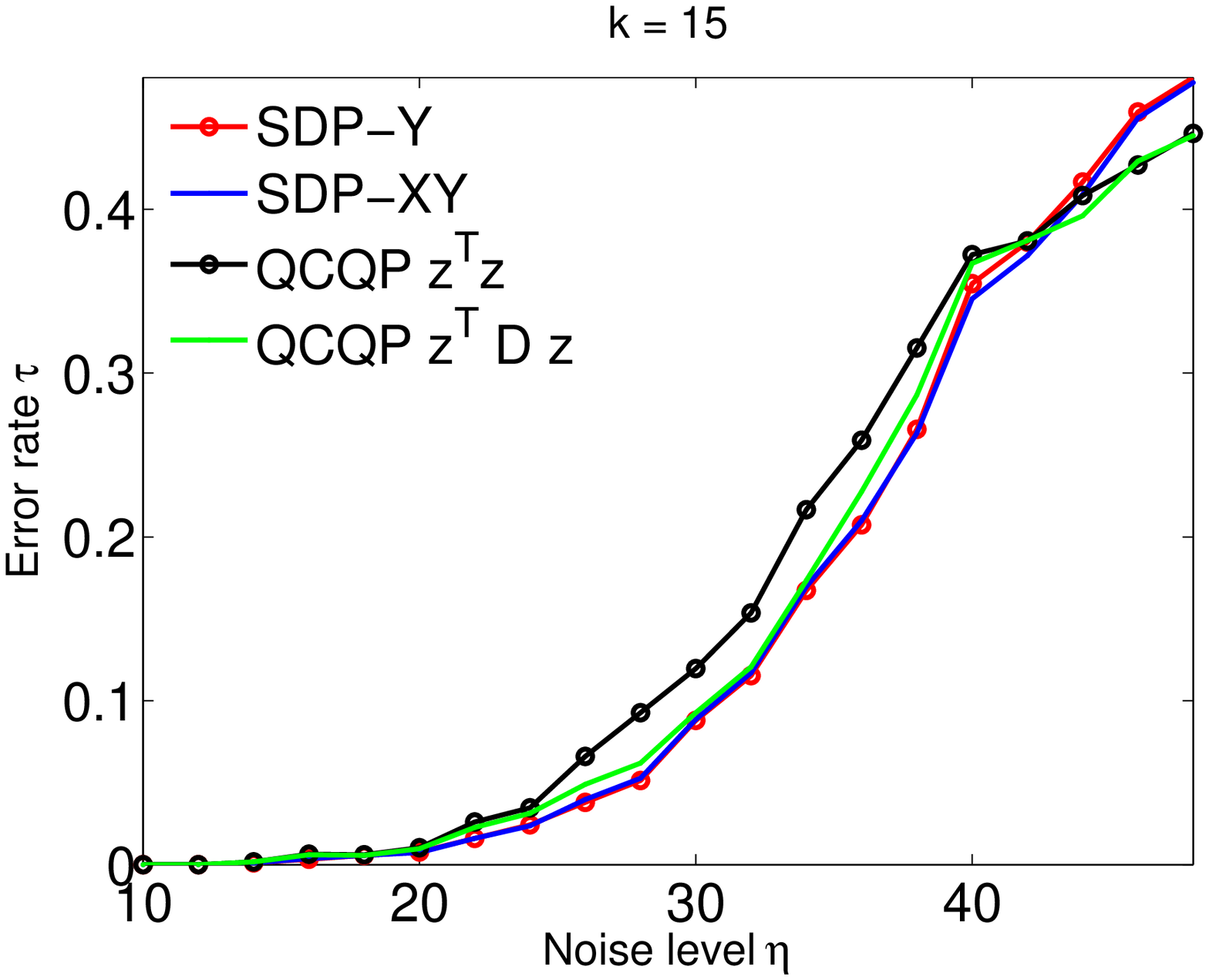}
\includegraphics[width=0.49\columnwidth]{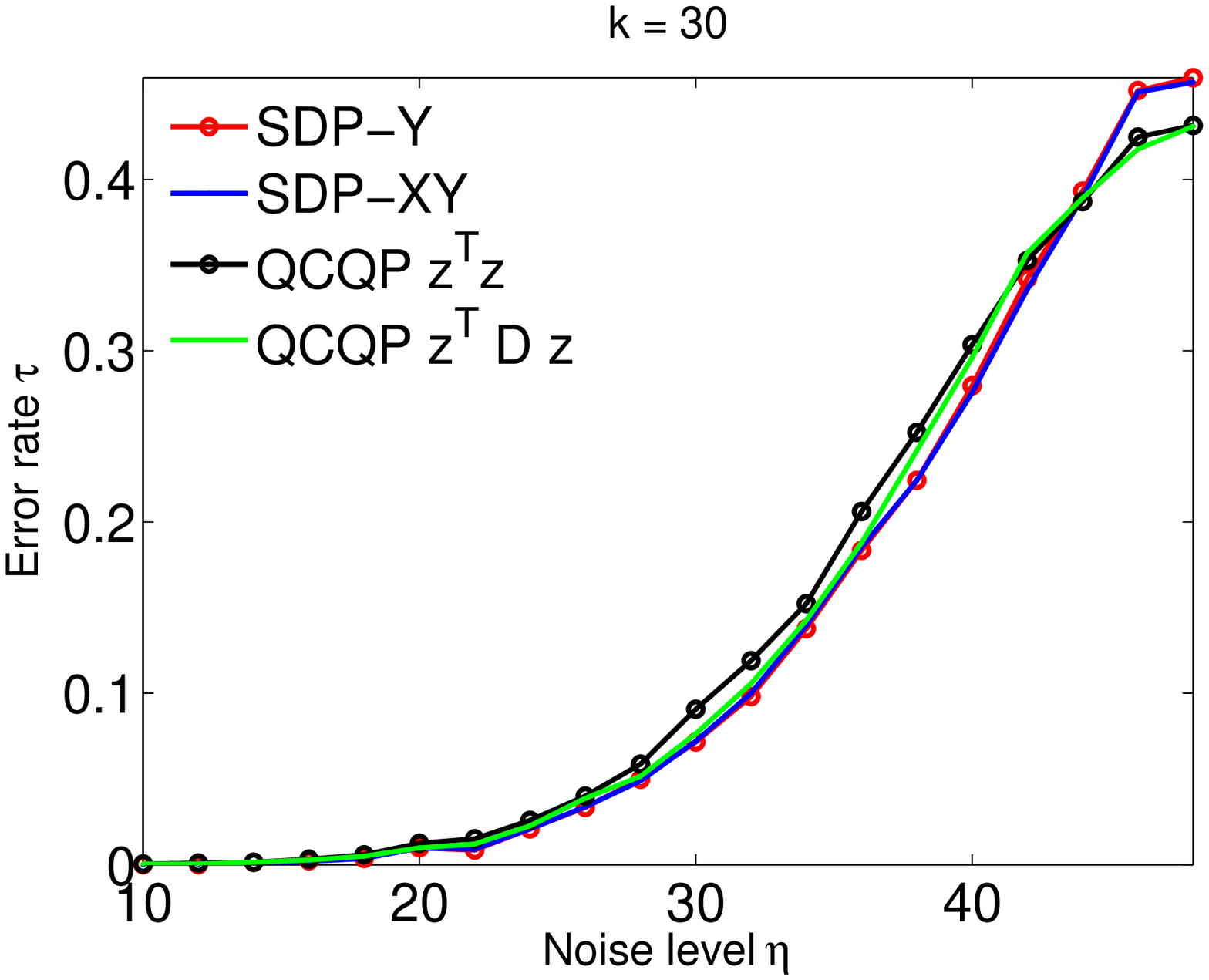}
\includegraphics[width=0.49\columnwidth]{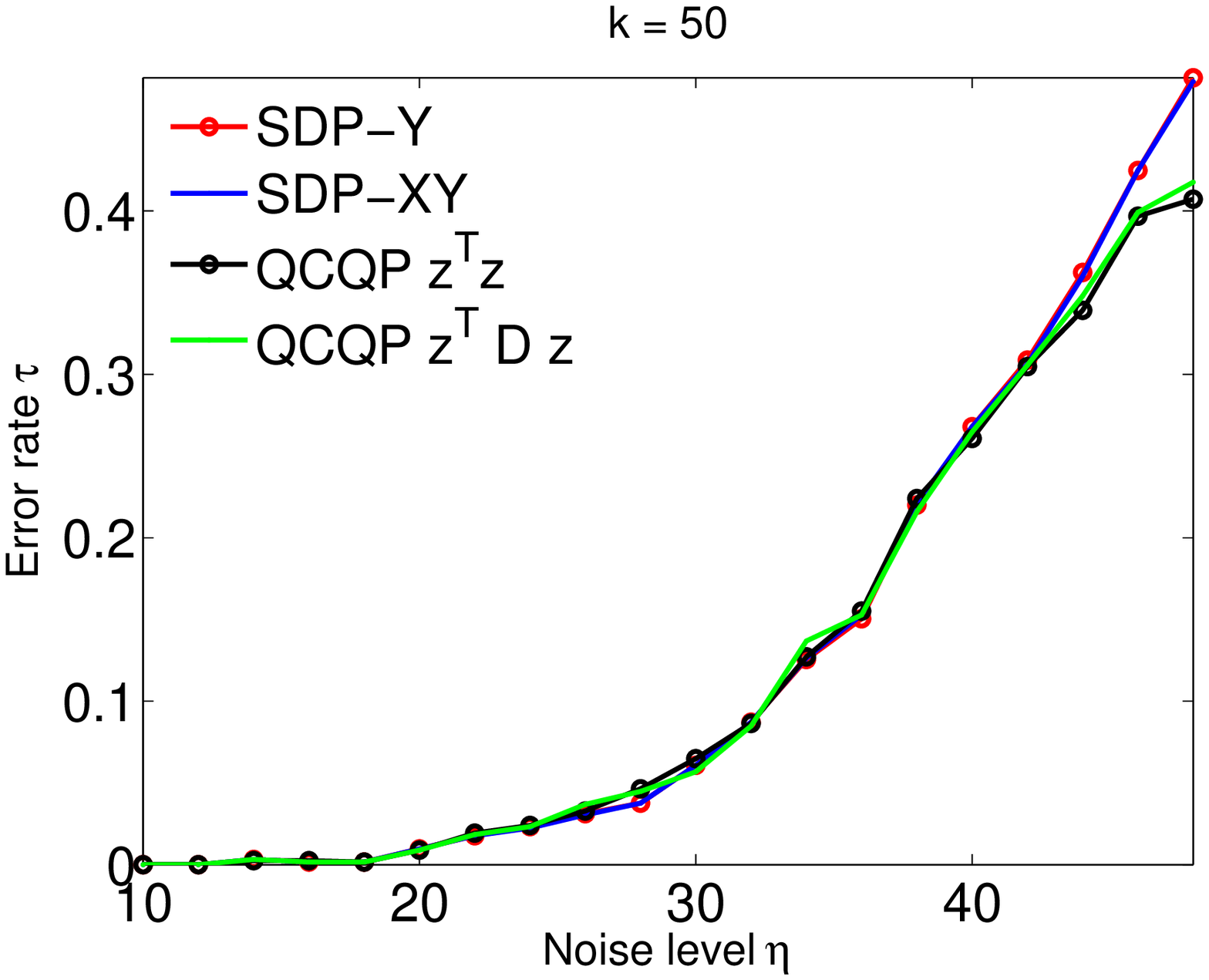}
\caption{ Comparison, in terms of robustness to noise, of the four algorithms we proposed for SYNC($\mathbb{Z}_2$): the two QCQP formulations using the two different constraints: $z^Tz=s$ and $z^T D z = \Delta$ as they appear in equations (\ref{maximization_zTz}) and (\ref{maximization_zTDz}), the two SDP-based formulations SDP-Y and SDP-XY as formulated in (\ref{SDP_max}) and (\ref{SDP_max_XY}). The adjacency graph of available pairwise measurements is an Erd\H{o}s-R\'{e}nyi  graph $G(N,p)$ with $N=75$ and $p=0.2$. Also, $k$ denotes the number of anchors, chosen uniformly at random from the $N$ nodes. Results are averaged over 50 runs.}
\label{fig:comp_sync_anch}
\end{figure}

\section{Experimental results}\label{numexp}

We have implemented our 3D-ASAP algorithm and compared its performance with other methods across a variety of measurement graphs, varying parameters such as the number of nodes, average degree (sensing radius) and level of noise.

In our experiments, the noise is multiplicative and uniform, meaning that to each true distance measurement $l_{ij} = \parallel p_i - p_j \parallel$, we add random independent noise $\epsilon_{ij}$ in the range $[-\eta l_{ij}, \eta l_{ij}]$, i.e.,
\begin{eqnarray}
   & d_{ij} = l_{ij} + \epsilon_{ij}   \nonumber \\
   & \epsilon_{ij} \sim Uniform(  [-\eta l_{ij}, \eta l_{ij}])
\label{eta}
\end{eqnarray}
The percentage noise added is $100 \eta$, (e.g., $\eta=0.1$ corresponds to $10\%$ noise).

The sizes of the graphs we experimented with range from $200$ to $1000$ nodes taking different shapes, with average degrees in the range $14-26$, and noise levels up to $50\%$. Across all our simulations, we consider the geometric graph model, meaning that all pairs of sensors within range $\rho$ are connected. We denote the true coordinates of all sensors by the $2\times n$ matrix $P=(p_1 \; \cdots \; p_n)$, and the estimated coordinates by the matrix $\hat{P} = (\hat{p}_1 \; \cdots\;  \hat{p}_n)$.  To measure the localization error of our algorithm we first factor out the optimal rigid transformation between the true embedding $P$ and our reconstruction $\hat{P}$, and then compute the following average normalized error (ANE)
\begin{equation}
ANE=\frac{ \sqrt{ \sum_{i=1}^n  \parallel p_i - \hat{p}_i \parallel^2}}
{\sqrt{ \sum_{i=1}^n  \parallel p_i - p_0\parallel^2} }
= \frac{\parallel  P-\hat{P}  \parallel_{F}}{\parallel P  - p_0 \mb{1}^T \parallel_{F}},
\label{AAvgNE}
\end{equation}
where $p_0 = \frac{1}{n}\sum_{i=1}^n p_i$ is the center of mass of the true coordinates. The normalization factor in the denominator of (\ref{AAvgNE}) ensures that the ANE is not only rigid invariant, but it is also scale free, i.e., it is invariant to scaling the underlying configuration by a constant factor.

The experimental results in the case of noiseless data (i.e. incomplete set of exact distances) should already give the reader an appreciation of the difficulty of the problem. As mentioned before, the graph realization problem is NP-hard, and the most we can aim for is an approximate solution. The main advantage of introducing the notion of a weakly uniquely localizable graph and using it for breaking the original problem into smaller subproblems, can be seen in the experimental results for noiseless data. Note that across all experiments, except for the PACM graph reconstructions, we are able to compute localizations which are at least one order of magnitude more accurate than what those computed by DISCO or SNL-SDP. Note that all algorithms are followed by a refinement procedure, in particular steepest descent with backtracking line search.
\begin{table}[h]
\begin{minipage}[b]{0.55\linewidth}
\centering
\begin{tabular}{|c||c|c|c|c|}
\hline
  $\eta$ & 3D-ASAP   & DISCO  & SNL-SDP   & MVU  \\
\hline\hline
   0\%  &     5e-4 &    8e-3 &    2e-3 &  2e-6\\
  10\%  &     0.04 &    0.06 &    0.04  & 0.07 \\
  20\%  &     0.07 &    0.10 &    0.09  & 0.12\\
  30\%  &     0.16 &    0.17 &    0.17  & 0.26\\
  40\%  &     0.19 &    0.29 &    0.29  & 0.50\\
  45\%  &     0.26 &    0.34 &    0.35  & 0.48\\
  50\%  &     0.32 &    0.42 &    0.43  & 0.61 \\
\hline
\end{tabular}
\caption{Reconstruction errors (measured in $ANE$) for the UNITCUBE graphs with $n=212$ vertices, sensing radius $\rho = 0.3$ and average degrees $deg = 17-25$. \vspace{3.4mm}
}
\label{tab:ANE_UNITCUBE}
\end{minipage} \;\;\;\;\;\;
\begin{minipage}[b]{0.4\linewidth}
\centering
\begin{tabular}{|c||c|c|c|}
\hline
  $\eta$ & 3D-ASAP   & DISCO  \\
\hline\hline
    0 \%  &   0.0001  &   0.0024   \\
   10 \%   &   0.0031  &   0.0029  \\
   20 \%  &   0.0052  &   0.0058   \\
   30 \%  &   0.0069  &   0.0078   \\
   35 \%  &   0.0130  &   0.0104   \\
   40 \%  &   0.0094  &   0.0107   \\
   45 \%  &   0.0168  &   0.0207   \\
   50 \%  &   0.0146  &   0.0151   \\
\hline
\end{tabular}
\caption{Reconstruction errors (measured in $ANE$) for the 1d3z molecule graph with $n=1009$ vertices,
sensing radius $\rho = 5$ angstrom and average degree $deg = 14$.
}
\label{tab:ANE_1d3z}
\end{minipage}
\end{table}

As expected, FAST-MVU performs at its best when the data is a set of random points uniformly distributed in the unit cube. In this case, the top three eigenvectors of the Laplacian matrix used by FAST-MVU  provide an excellent approximation of the original coordinates. Indeed, as the experimental results in Table \ref{tab:ANE_UNITCUBE} show, the FAST-MVU algorithm gives the best precision across all algorithms, returning an $ANE=2e-6$. However, in the case of noisy data, the performance of FAST-MVU degrades significantly, making FAST-MVU the least robust to noise algorithm tested on the UNITCUBE graph. Furthermore, FAST-MVU performs extremely poor on more complicated topologies, even in the absence of noise, as it can be seen in the reconstructions of the PACM and BRIDGE-DONUT graphs shown in Figure \ref{fig:MVU_examples}. For comparison, the original underlying graphs are shown in Figures \ref{fig:RECS_PACM} and \ref{fig:RECS_Donut}. 

\begin{figure}[h]
\begin{center}
\subfigure[PACM  graph with noise $\eta=0\%$]{\includegraphics[width=0.67\columnwidth]{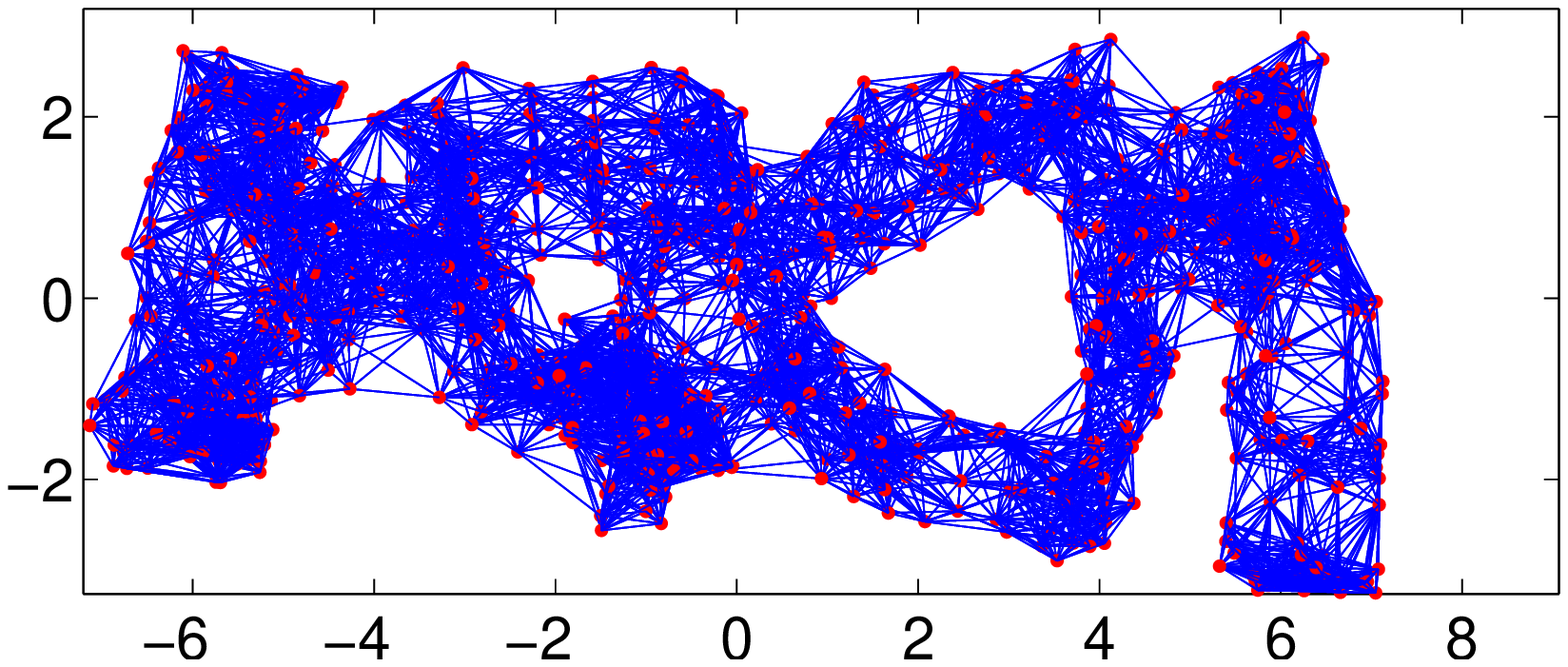}}
\subfigure[DONUT graph with noise $\eta=0\%$]{\includegraphics[width=0.30\columnwidth]{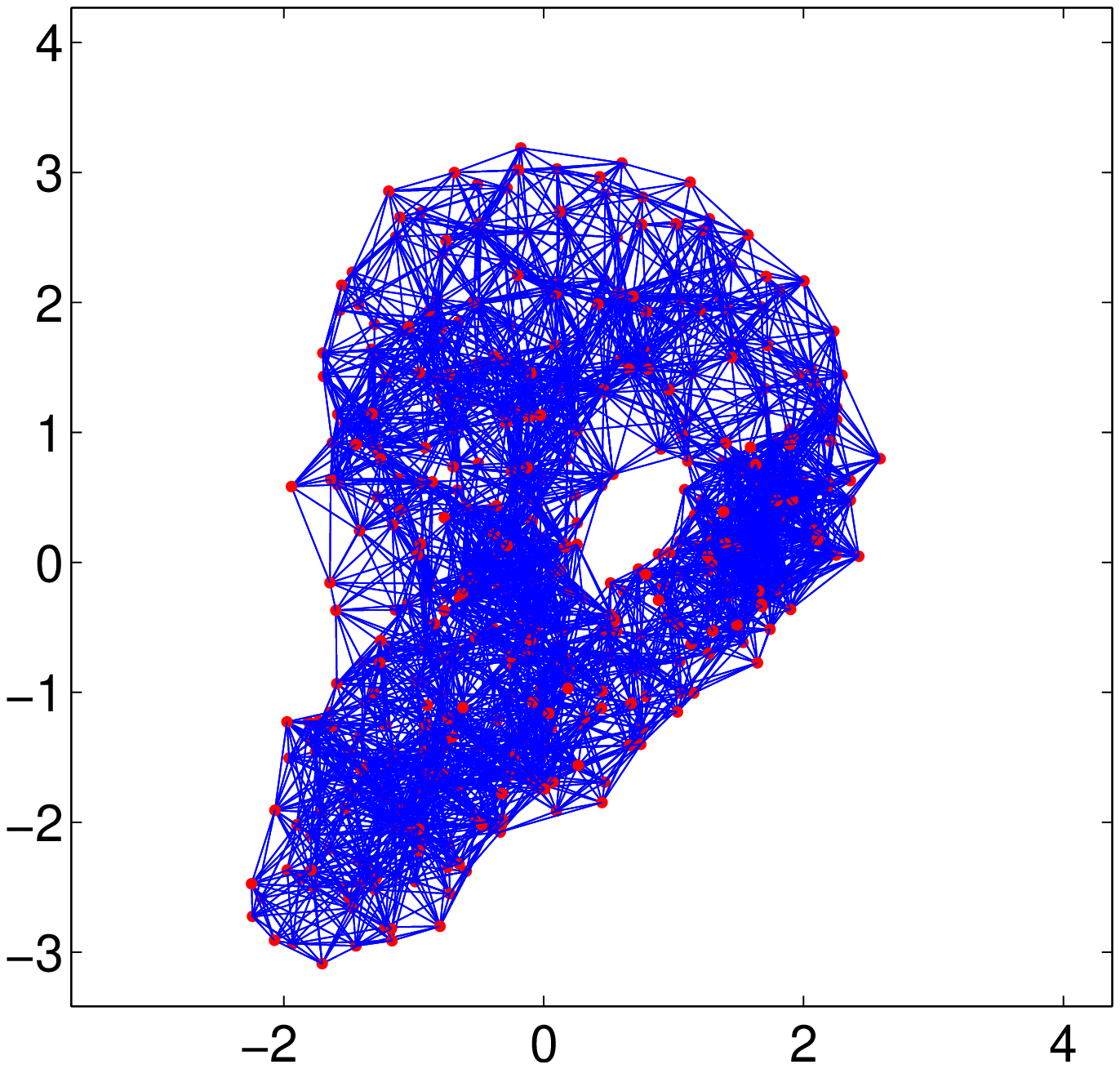}}
\end{center}
\vspace{-5mm}
\caption{ Reconstructions of the PACM and BRIDGE-DONUT graphs using the FAST-MVU algorithm, in the case of noiseless data (compare to the original measurement graphs shown in Figures \ref{fig:RECS_PACM} and \ref{fig:RECS_Donut}.)}
\label{fig:MVU_examples}
\end{figure}

The second data set used in our experiments is a synthetic example of the molecule problem, and represents the ubiqutin protein represented by PDB entry 1d3z \cite{cornilescu}, with 1288 atoms, 686 of which are hydrogen atoms. However, after removing the nodes of degree less than 4,  $n=1009$ atoms remain. For this data set, the available distances can be divided into groups: the first group contains distances inferred from known bond length and torsion angles (``good" edges) which are kept accurate across different levels of noise, and the second group contains distances (NOE edges) which are perturbed with noise $\eta$ at each stage of the experiment. On average, each node is incident with $6$ good edges and $8$ NOE edges, thus the average degree of the entire measurement graph is $deg=14$. In addition, besides knowledge of good and NOE edges, we also use  information from molecular fragments  in the form of subgroups of atoms (molecular fragments) whose coordinates are readily available. We incorporate 280 such molecular fragments, whose average size is close to 5 atoms. To exploit such information, we slightly modify the approach to break up the measurement graph and synchronize the patches. We build patches from molecular fragments by selecting a WUL subgraph from each extended molecular fragment, where by an extended molecular fragment we mean the graph resulting from a molecular fragment and all its 1-hop neighbors. Once we extract patches from all extended molecular fragments, we consider all  remaining nodes (singletons) that are not contained in any of the patches obtained so far. Depending on the noise level, the number of such singleton nodes is between $10$ and $15$.
Finally, we extract WUL patches from the 1-hop neighborhoods of the singleton nodes, following the approach of Section \ref{WULS}. Since  we know a priori  the localization, and in particular the reflection of each molecular fragment, and the reflection of a molecular fragment induces a reflection on the patch that contains it, we have readily available information on the reflection of all patches that contain molecular fragments. In other words, we are solving a synchronization problem over $\mathbb{Z}_2$ when molecular fragment information is available. Using the synchronization method introduced in Section \ref{sub:qcqp}, we compute the reflection of the remaining patches. In terms of the reconstruction error, 3D-ASAP and DISCO return comparable results, slightly in favor of our algorithm. For noiseless NOE distances, we compute a localization that is one order of magnitude more accurate than the solution returned by DISCO.

Another graph that we tested is the BRIDGE-DONUT graph shown in Figure \ref{fig:RECS_Donut}. It has $n=500$ nodes, sensing radius $\rho=0.92$, and average degree in the range $18-25$, depending on the noise level. Table \ref{tab:ANE_DONUT} contains the reconstruction errors for various levels of noise
showing that 3D-ASAP consistently returns more accurate solutions than DISCO.
For the BRIDGE-DONUT graph, we included in our numerical simulations the spectral partitioning 3D-SP-ASAP algorithm, which performed just as well as the 3D-ASAP algorithm (while significantly reducing the running time as shown in Tables \ref{tab:times} and \ref{tab:SP_times}), and consistently outperformed the DISCO algorihtm at all levels of noise.
Note that 3D-SP-ASAP returns very poor localizations when using only $k=8$ partitions; in this case the extended partitions have large size (up to 150 nodes) and SNL-SDP fails to embed them accurately, even at small levels of noise, as shown in Table \ref{tab:ANE_DONUT}. By increasing the number of partitions from $k=8$ to $k=25$ and thus decreasing the size of each partition (and also of the extended partitions), the running time of 3D-SP-ASAP increases slightly from 140 to 186 seconds (at $35\%$ noise), but we are now able to match the accuracy of 3D-ASAP which requires almost ten times more running time (1770 seconds).

Finally, for the PACM graphs in Figure \ref{fig:RECS_PACM}, the network takes the shape of the letters $P,A,C,M$ that form a connected graph on $n=800$ vertices. The sensing radius is $\rho = 1.2$ and the average degree in the range $deg \approx 21-26$. This graph was particularly useful in testing the sensitivity of the algorithm to the topology of the network. In Table \ref{tab:ANE_PACM}, we show the reconstruction errors for various levels of noise
DISCO returns better results for $\eta = 0\%, 10\%, 40\%$,  but less accurate for $\eta=20\%, 30\%, 35\% $. We believe that DISCO returns results comparable to 3D-ASAP (unlike the previous three graphs) because the topology of the PACM graph favors the graph decomposition used by DISCO. Note that at $\eta=0\%$ noise, the 3D-ASAP reconstruction differs from the original embedding mainly in the right handside of the letter M, which is loosely connected to the rest of the letter, and also parts of its underlying graph are very sparse, thus rendering the SDP localization algorithms less accurate. This can also be seen in the spectral partitioning of the PACM graph shown in Figure \ref{fig:spectral_part_PACM}, where for $k$ (number of clusters) as low as 3 or 4, the right side of letter M is picked up as an individual cluster. At higher levels of noise 3D-SP-ASAP proves to be more accurate than DISCO, and it returns results comparable with ASAP.

\begin{table}[h]
\begin{minipage}[b]{0.9\linewidth}
\centering
\begin{tabular}{|c||c|c|c|c|c|}
\hline
  $\eta$ &     3D-ASAP &  3D-SP-ASAP$_{k=8}$& 3D-SP-ASAP$_{k=25}$ & DISCO    \\
\hline\hline
    0\%  &     6e-4 &  4e-4 & 2e-4 		 & 4e-3  \\
   10\%  &     0.04 &  0.59 & 0.03	 & 0.03   \\
   20\%  &     0.05 &  0.48 & 0.04	 & 0.07   \\
   30\%  &     0.11 &  0.52 & 0.14	 & 0.30   \\
   35\%  &     0.12 &  0.33 & 0.15	 & 0.21    \\
   40\%  &     0.24 &  0.42 & 0.19	 & 0.27   \\
   45\%  &     0.28 &  0.71 & 0.32	 & 0.35   \\
   50\%  &     0.23 &  0.35 &  0.35	 & 0.40    \\
\hline
\end{tabular}
\caption{Reconstruction errors (measured in $ANE$) for the BRIDGE-DONUT graph with $n=500$ vertices, sensing radius $\rho = 0.92$ and average degrees $deg = 18-25$. We used $k$ to denote the number of partitions of the vertex set. For $\eta=0$ we embed each of the $k$ patches (extended partitions) using FULL-SDP, while for $\eta > 0$ we used the SNL-SDP algorithm. At $\eta=50\%$,  3D-SP-ASAP$_{k=8}$ localizes only 435 out of the 500 nodes.}
\label{tab:ANE_DONUT}
\end{minipage}
\;\;\;
\begin{minipage}[b]{0.90\linewidth}
\centering
\begin{tabular}{|c||c|c|c|c|}
\hline
  $\eta$ & 3D-ASAP &  3D-SP-ASAP$_{k=40}$   &   DISCO    \\
\hline\hline
    0\%  &    0.05 &  0.05 & 0.02  \\
   10\%  &    0.08 & 0.12  & 0.07  \\
   20\%  &    0.07 & 0.16  & 0.18  \\
   30\%  &    0.27 &  0.15 & 0.45  \\
   35\%  &    0.18 &  0.32 & 0.28  \\
   40\%  &    0.48 &  0.20 & 0.26  \\
\hline
\end{tabular}
\caption{Reconstruction errors (measured in $ANE$) for the PACM graph with $n=800$ vertices, sensing radius $\rho = 1.2$ and average degrees $deg = 21-26$. For 3D-SP-ASAP we used $k=40$ partitions. Note that even for the noiseless case the error is not negligible due to incorrect embeddings of subgraphs that are contained in the right leg of the letter M.   \;\;\;\;\;\;\;\;\; \;\;\;\;\;\;\;\;\;\;\;\;\;\;\;\;\;  \;\;\;\;\;\;\;\;
}
\label{tab:ANE_PACM}
\end{minipage}
\end{table}

\begin{figure}[ht]
\renewcommand{\arraystretch}{1.3}
\centering
\begin{tabular}{@{\hspace{0.1cm}} m{0.04\columnwidth}@{\hspace{0.2cm}}|@{\hspace{0.2cm}}
m{0.45\columnwidth}@{\hspace{0.1cm}} m{0.45\columnwidth}@{\hspace{0.1cm}}  }
 \ \ $\eta$ &   \quad\quad\quad\quad\quad 3D-ASAP  & \quad\quad\quad\quad\quad DISCO   \\ \hline
0\% &
\includegraphics[width=0.43\columnwidth]{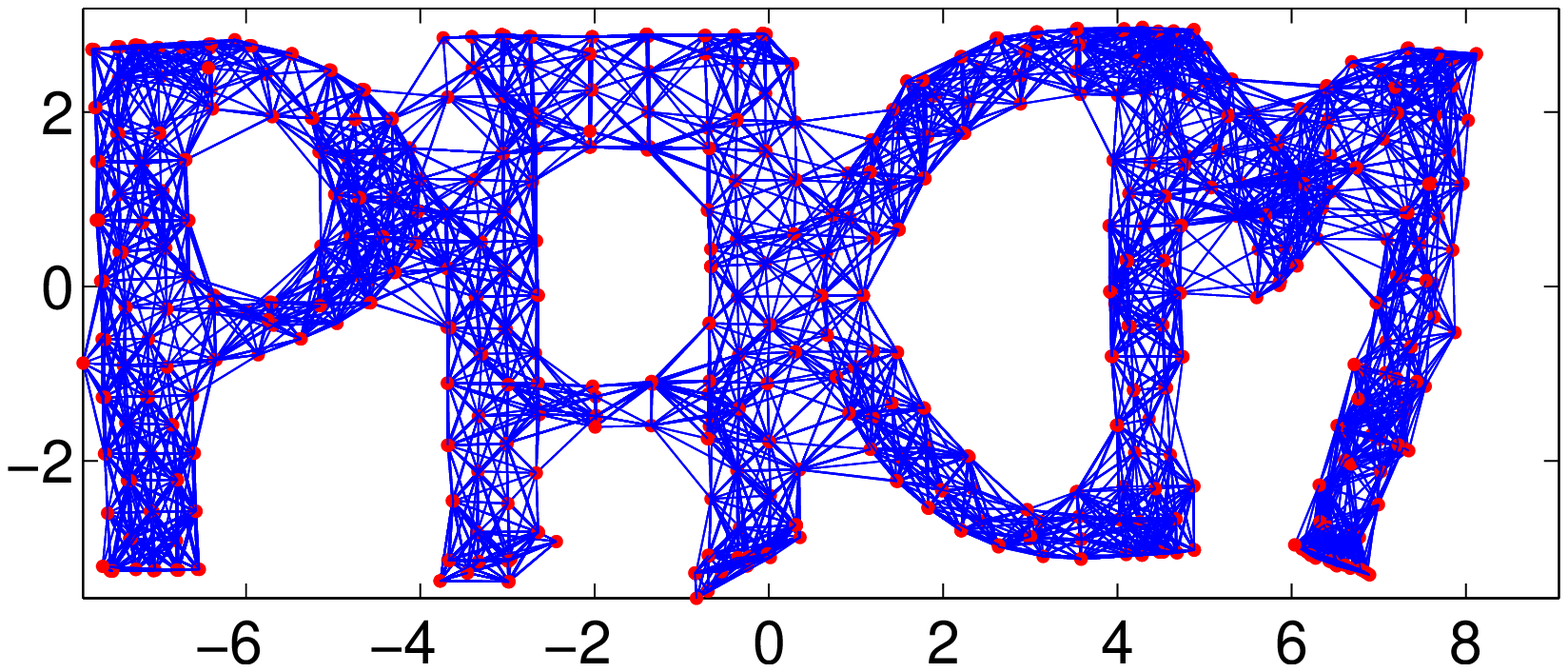}   &
\includegraphics[width=0.43\columnwidth]{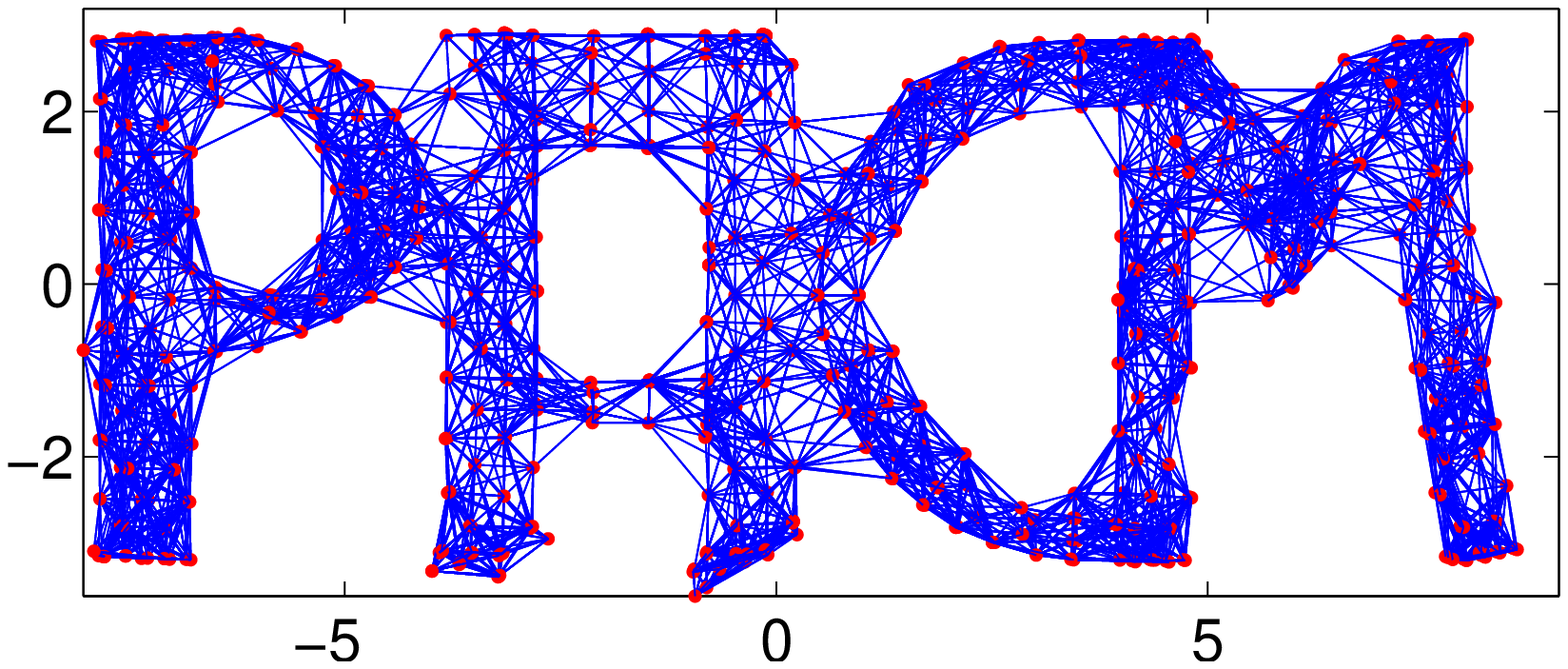}  \\
10\% &
\includegraphics[width=0.43\columnwidth]{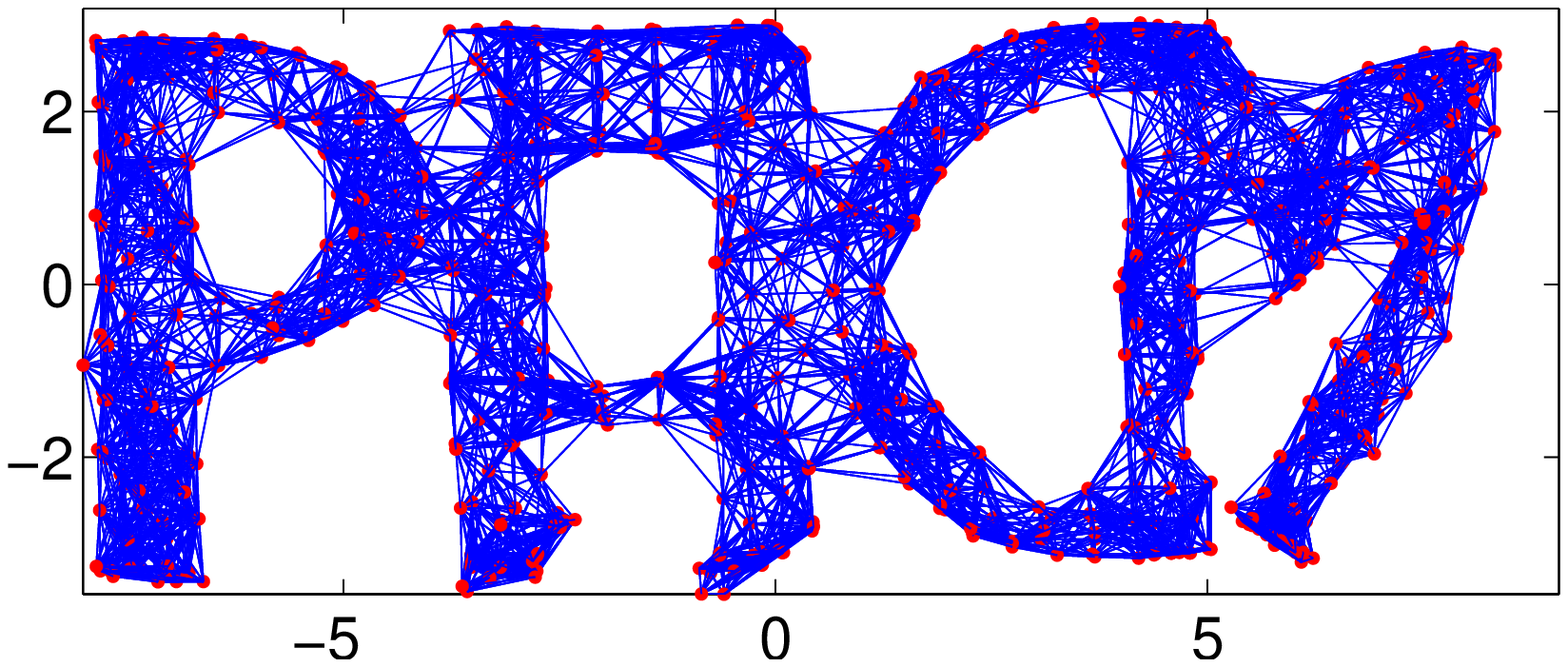}  &
\includegraphics[width=0.43\columnwidth]{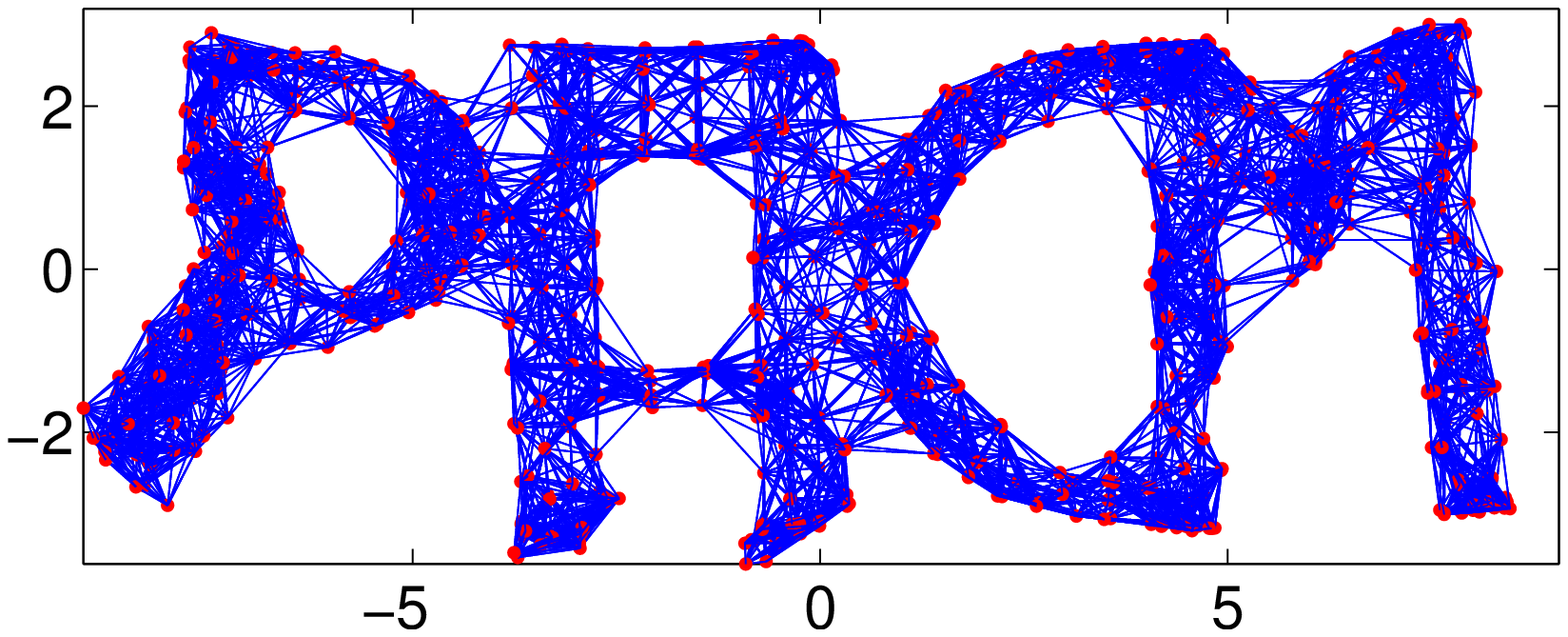}   \\
 25\% &
 \includegraphics[width=0.43\columnwidth]{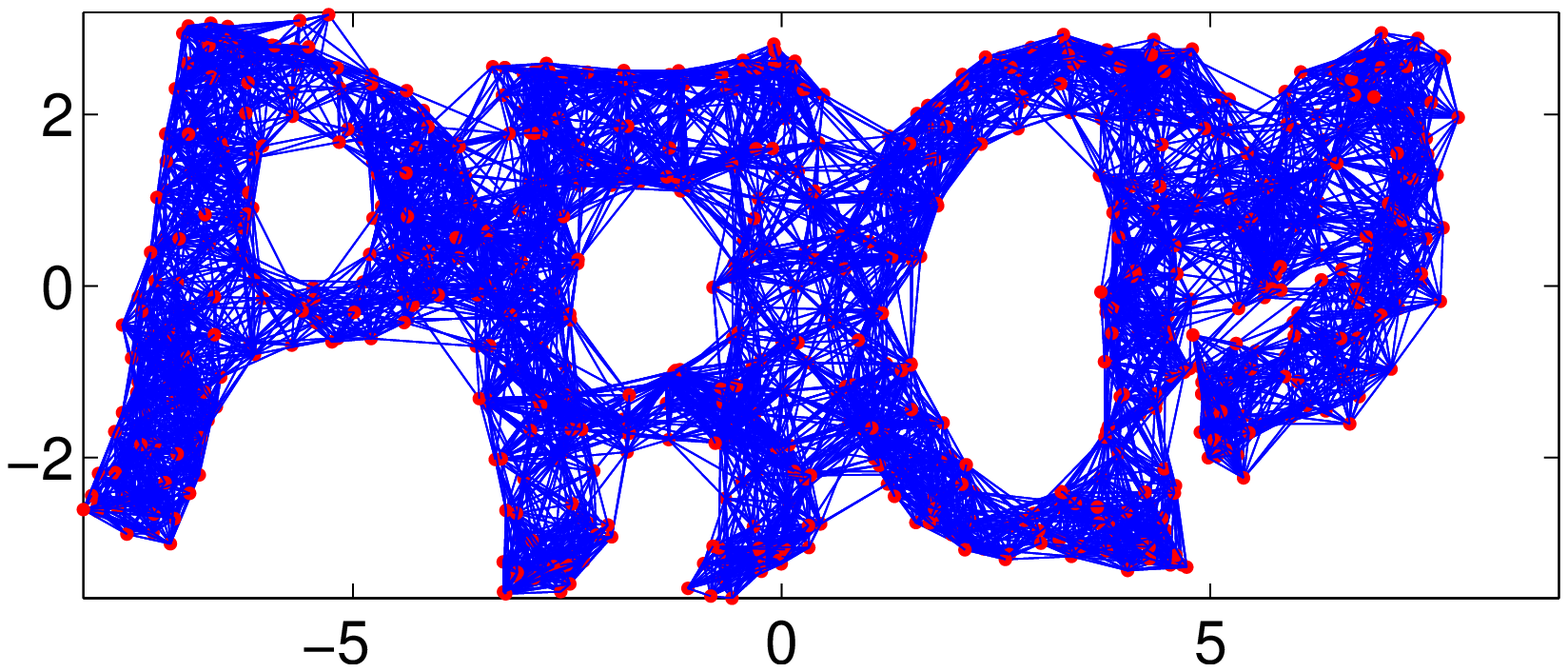} &
 \includegraphics[width=0.43\columnwidth]{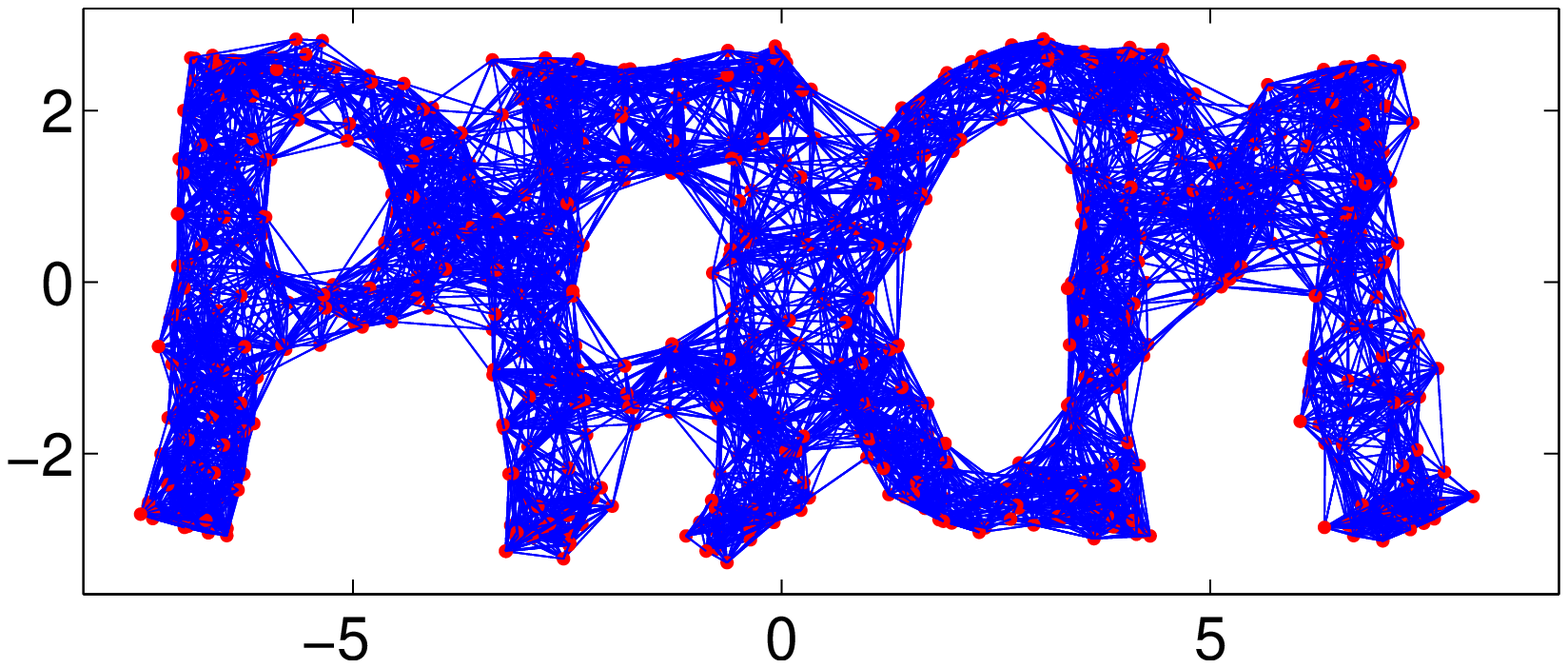}   \\
30\% &
\includegraphics[width=0.43\columnwidth]{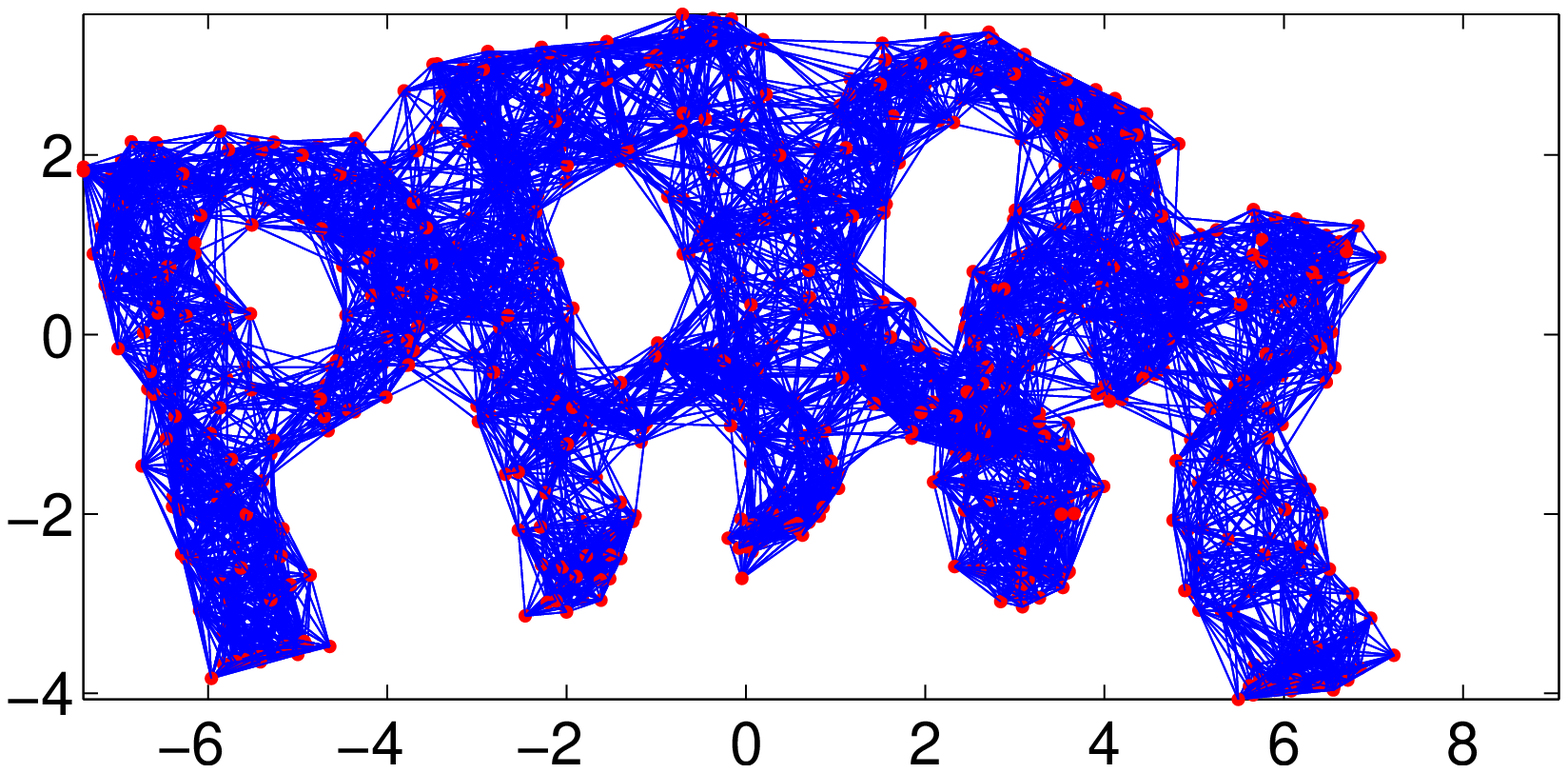} &
\includegraphics[width=0.43\columnwidth]{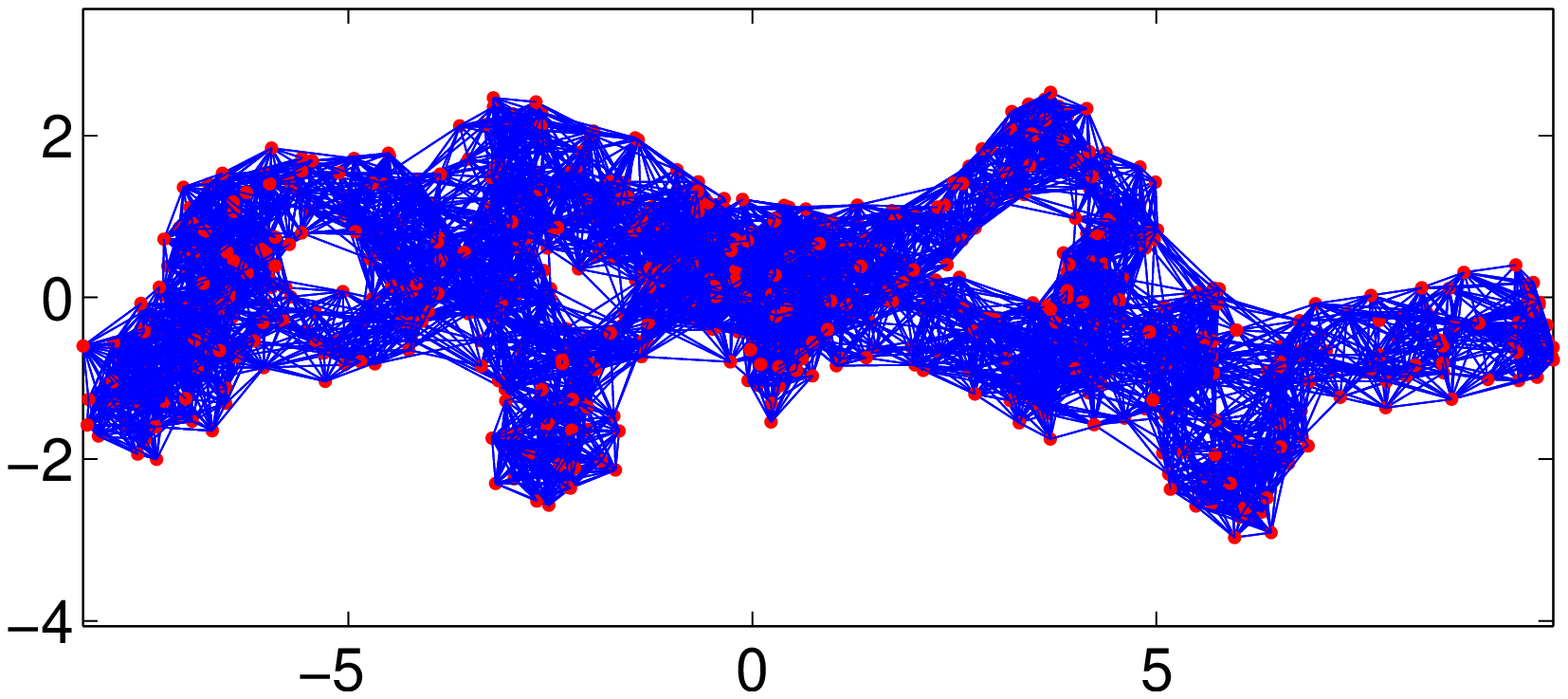}   \\
35\% &
\includegraphics[width=0.43\columnwidth]{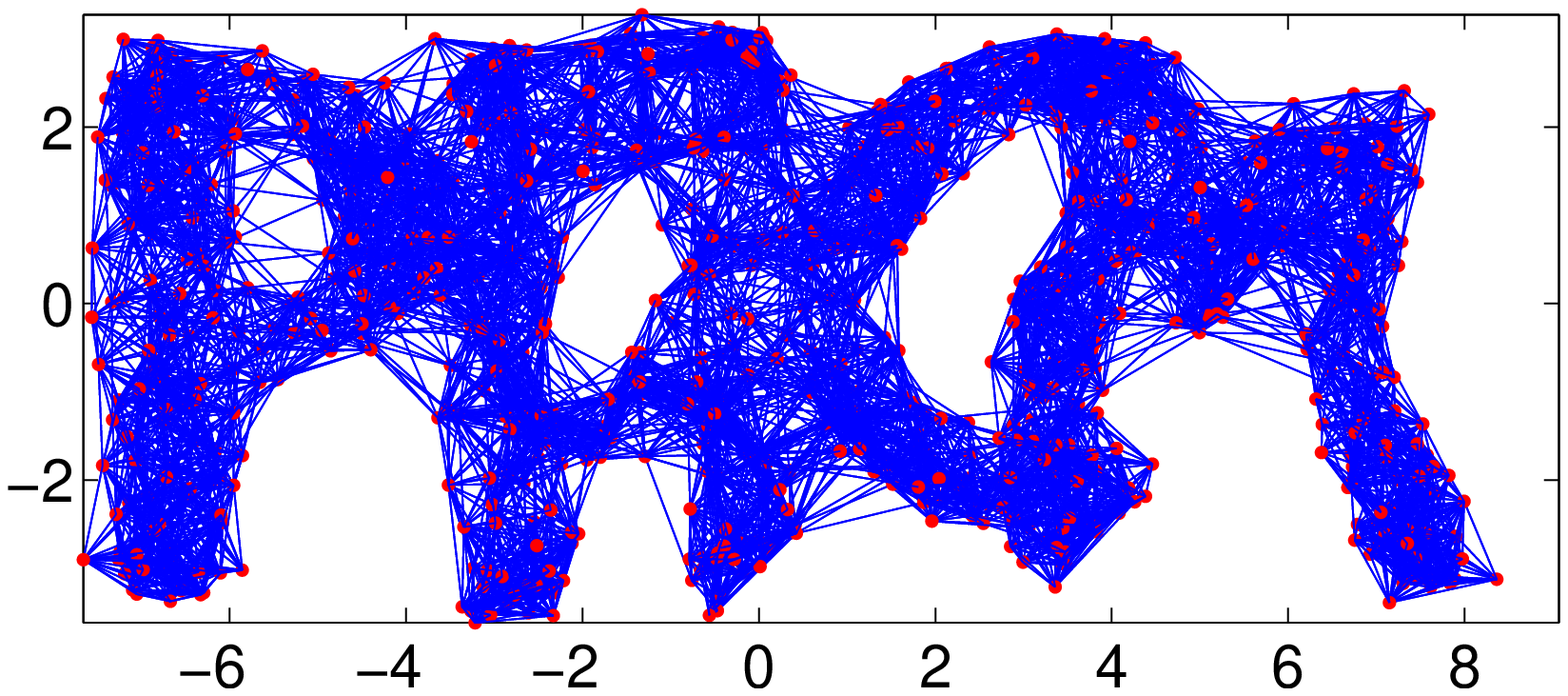} &
\includegraphics[width=0.43\columnwidth]{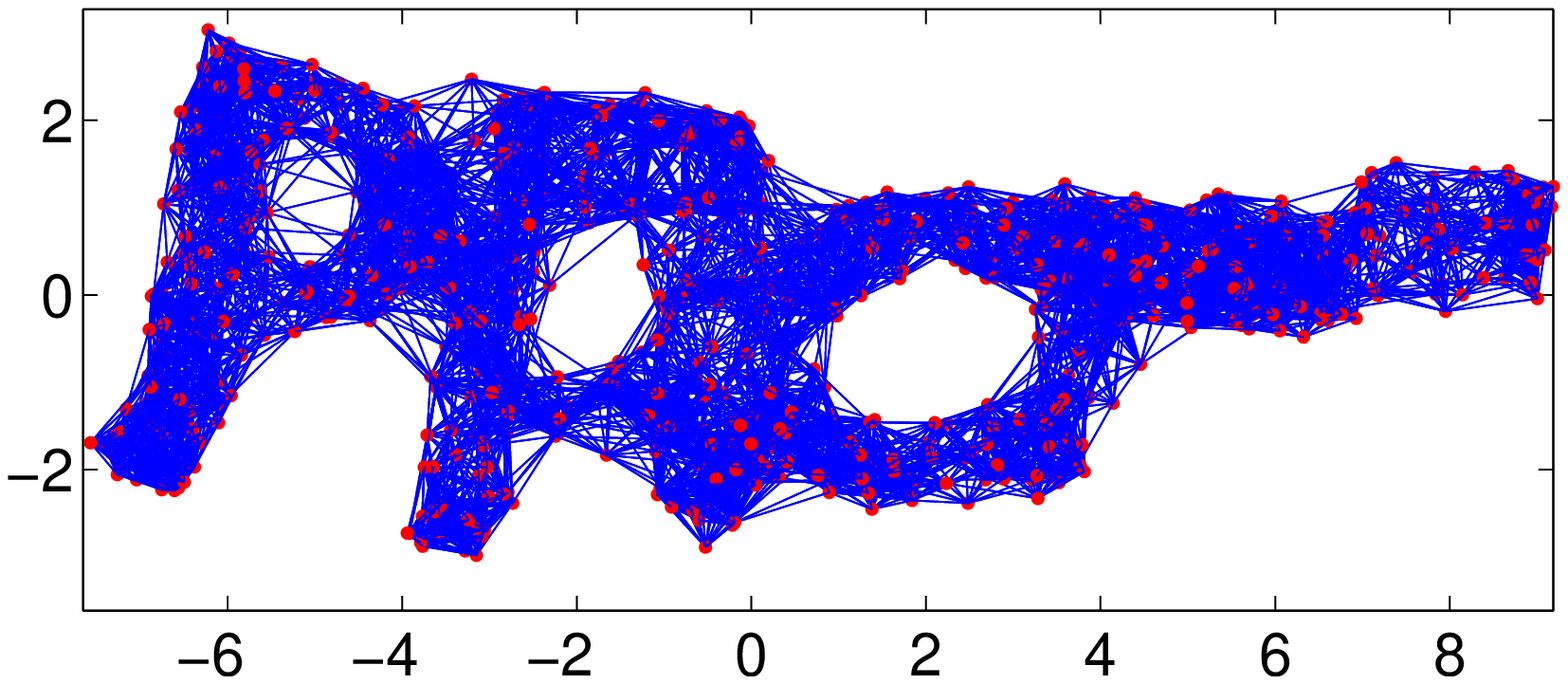}   \\
40\% &
\includegraphics[width=0.43\columnwidth]{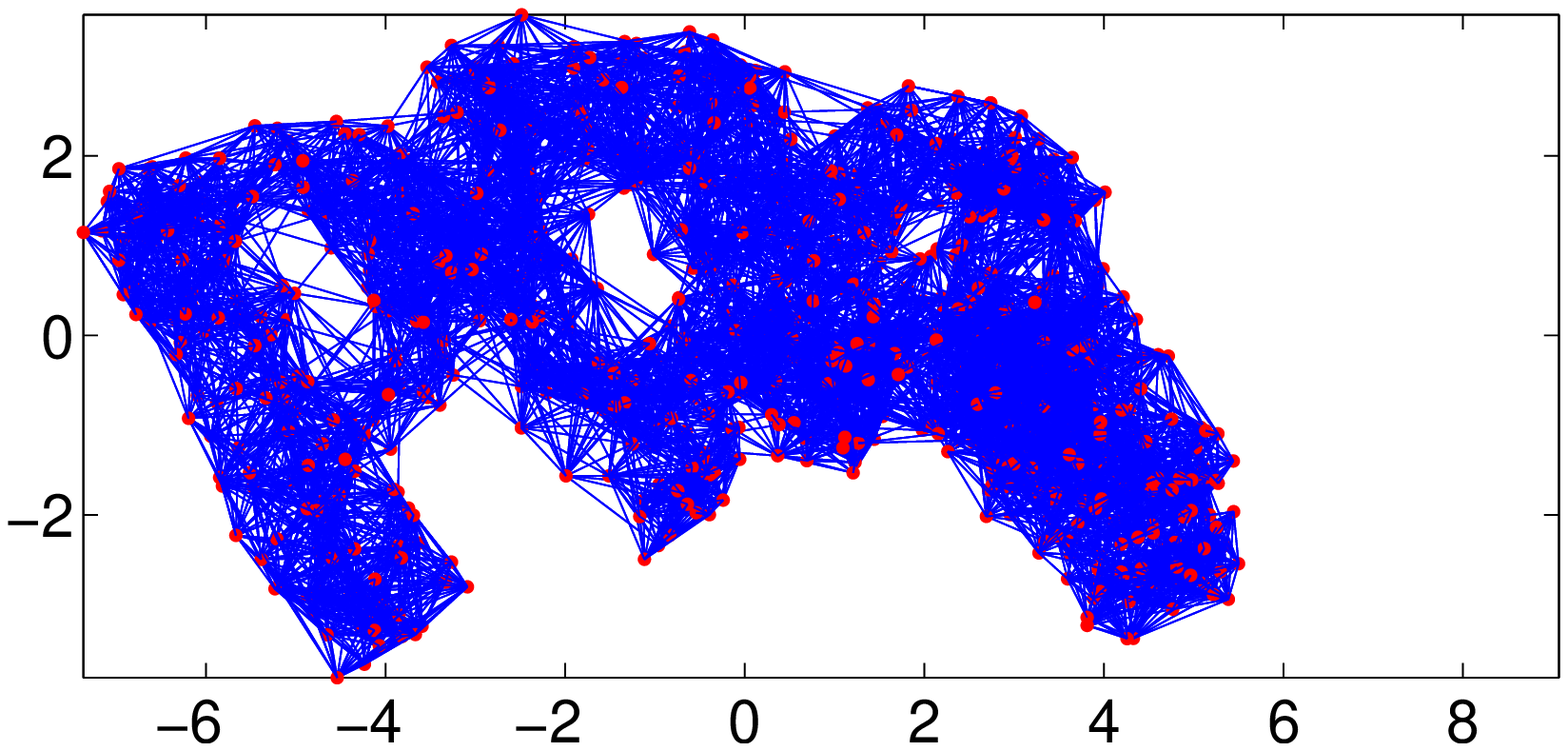} &
\includegraphics[width=0.43\columnwidth]{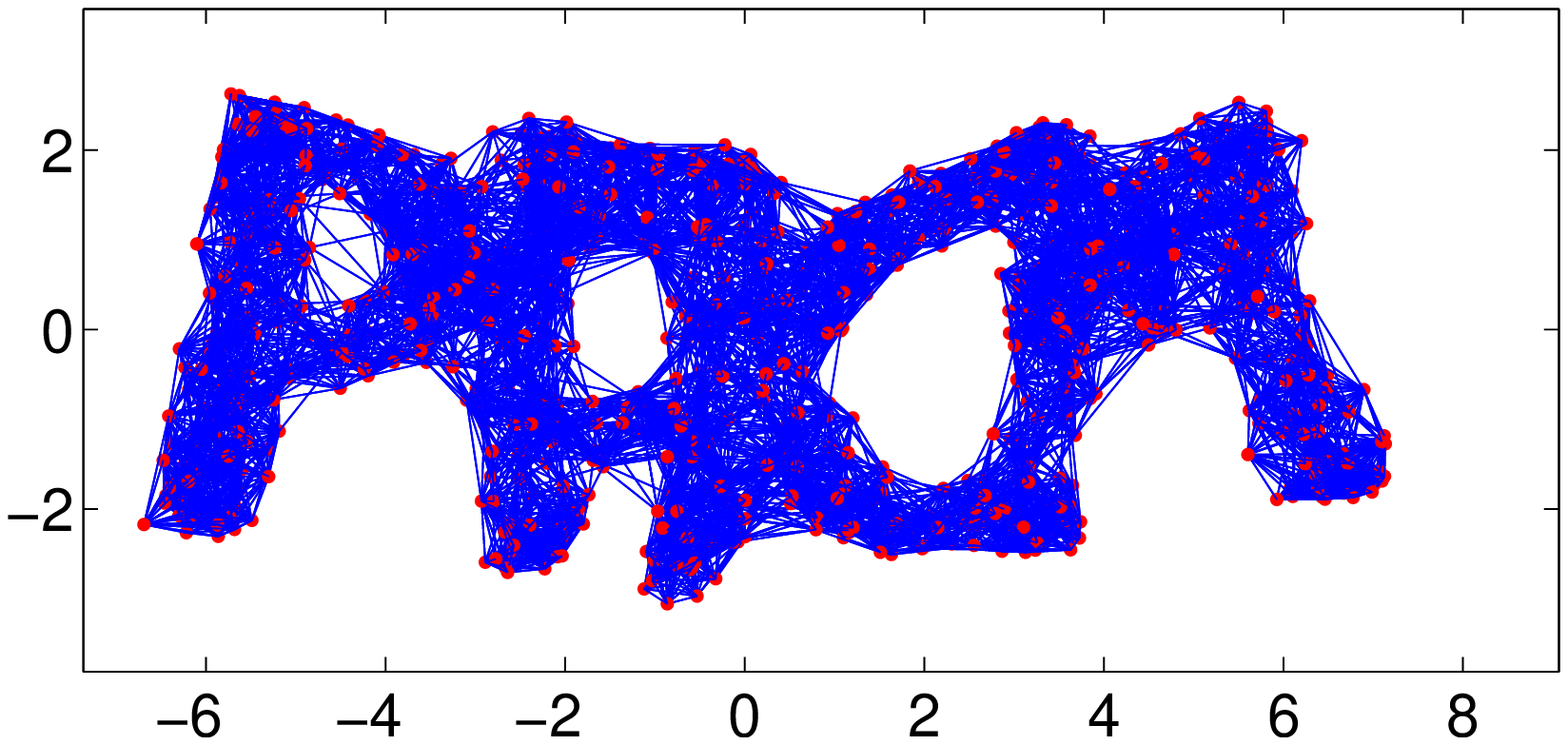}   \\
\end{tabular}
\caption{Reconstructions of the PACM graph with $n=800$ nodes, sensing radius $\rho=1.2$ and $\eta = 0\%, 10\%, 25\%, 30\%, 35\%, 40\%$.}
\label{fig:RECS_PACM}
\end{figure}

\begin{figure}[ht]
\renewcommand{\arraystretch}{1.3}
\centering
\begin{tabular}{@{\hspace{0.1cm}} m{0.04\columnwidth}@{\hspace{0.2cm}}|@{\hspace{0.2cm}}
m{0.45\columnwidth}@{\hspace{0.1cm}} m{0.45\columnwidth}@{\hspace{0.1cm}} }
 \ \ $\eta$ &   \quad\quad\quad\quad\quad 3D-ASAP  & \quad\quad\quad\quad\quad DISCO \\ \hline
0\% &
\includegraphics[width=0.40\columnwidth] {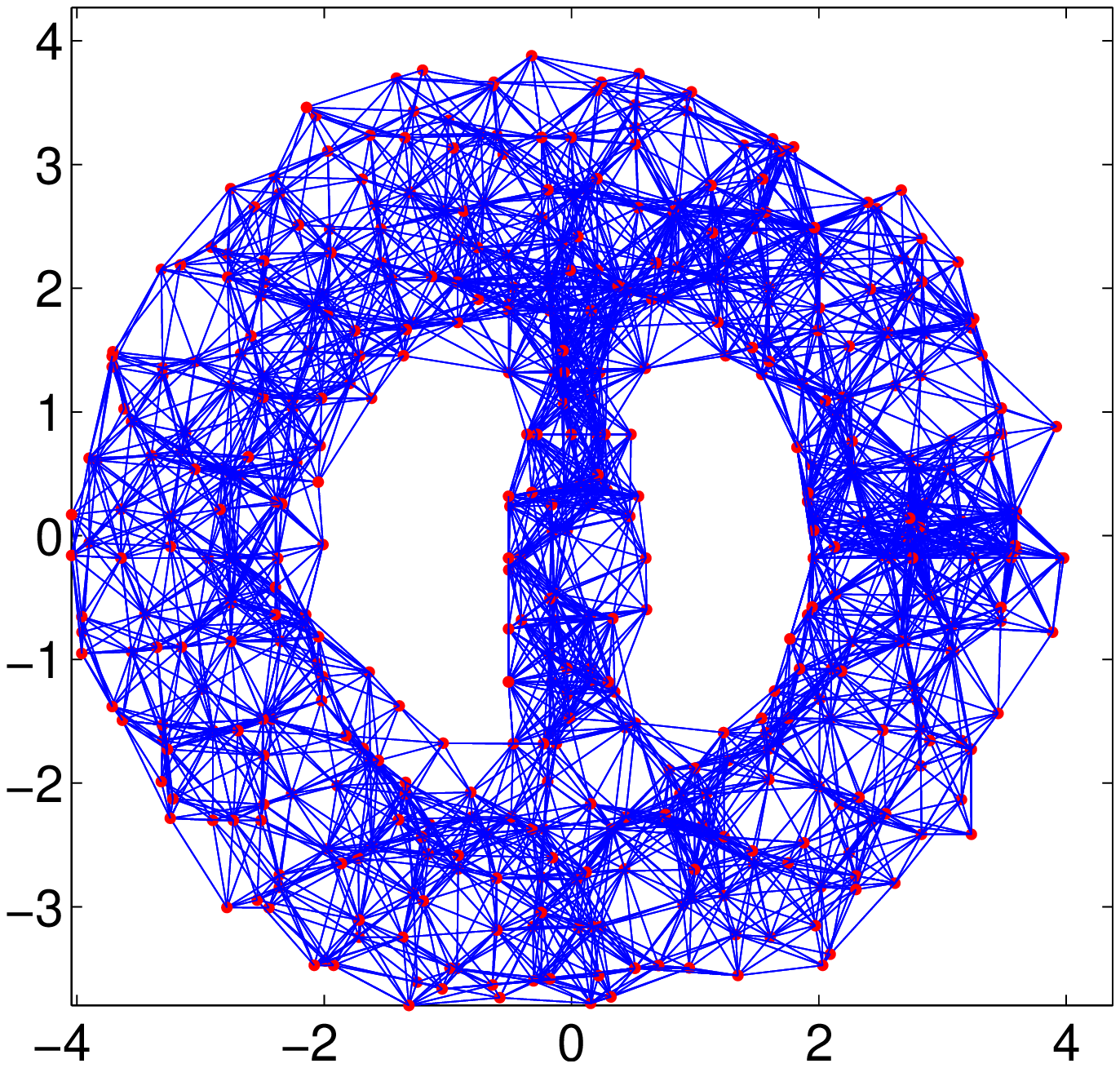} &
\includegraphics[width=0.40\columnwidth] {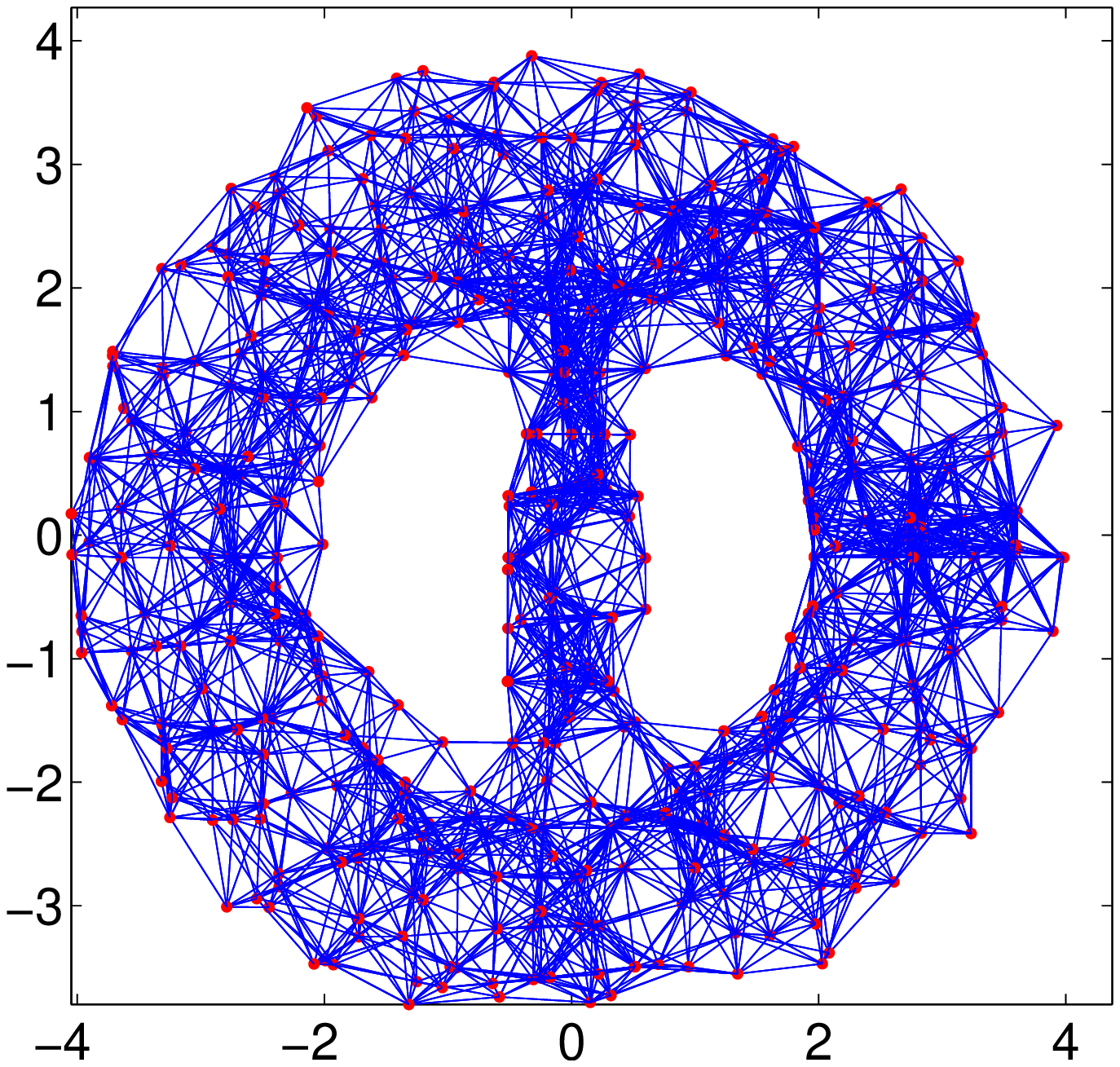}  \\
20\% &
\includegraphics[width=0.40\columnwidth ] {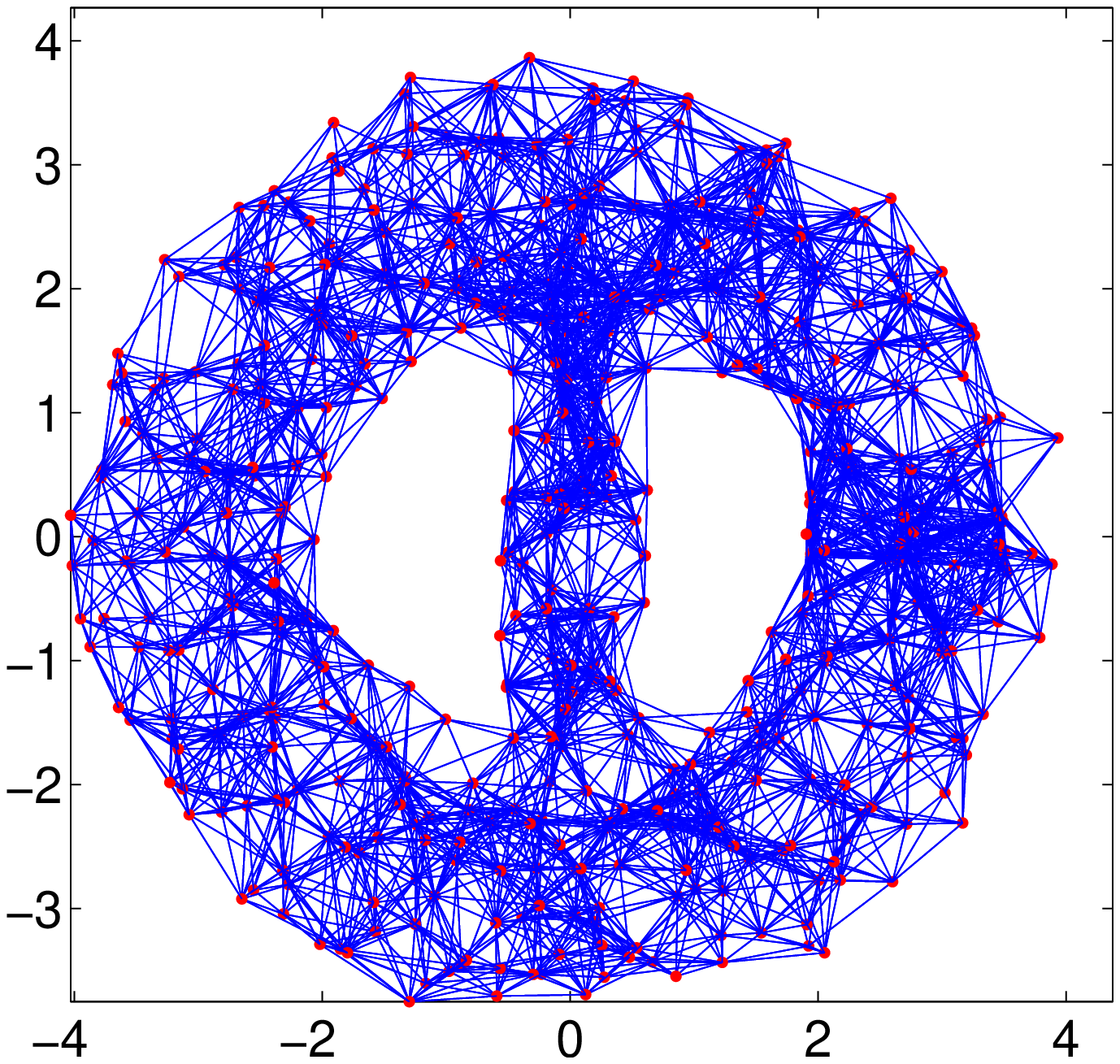} &
\includegraphics[width=0.40\columnwidth ] {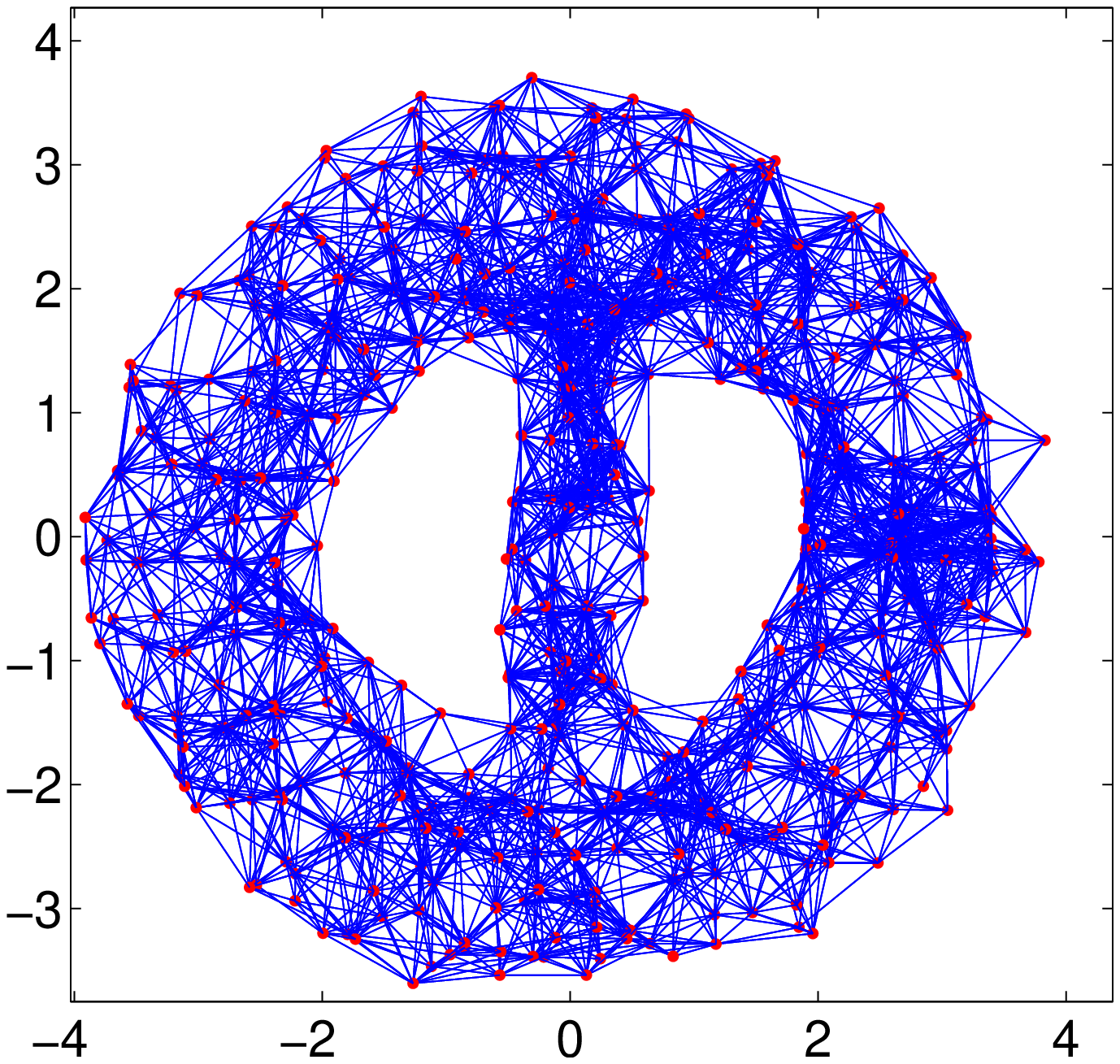} \\
40\% &
\includegraphics[width=0.40\columnwidth ] {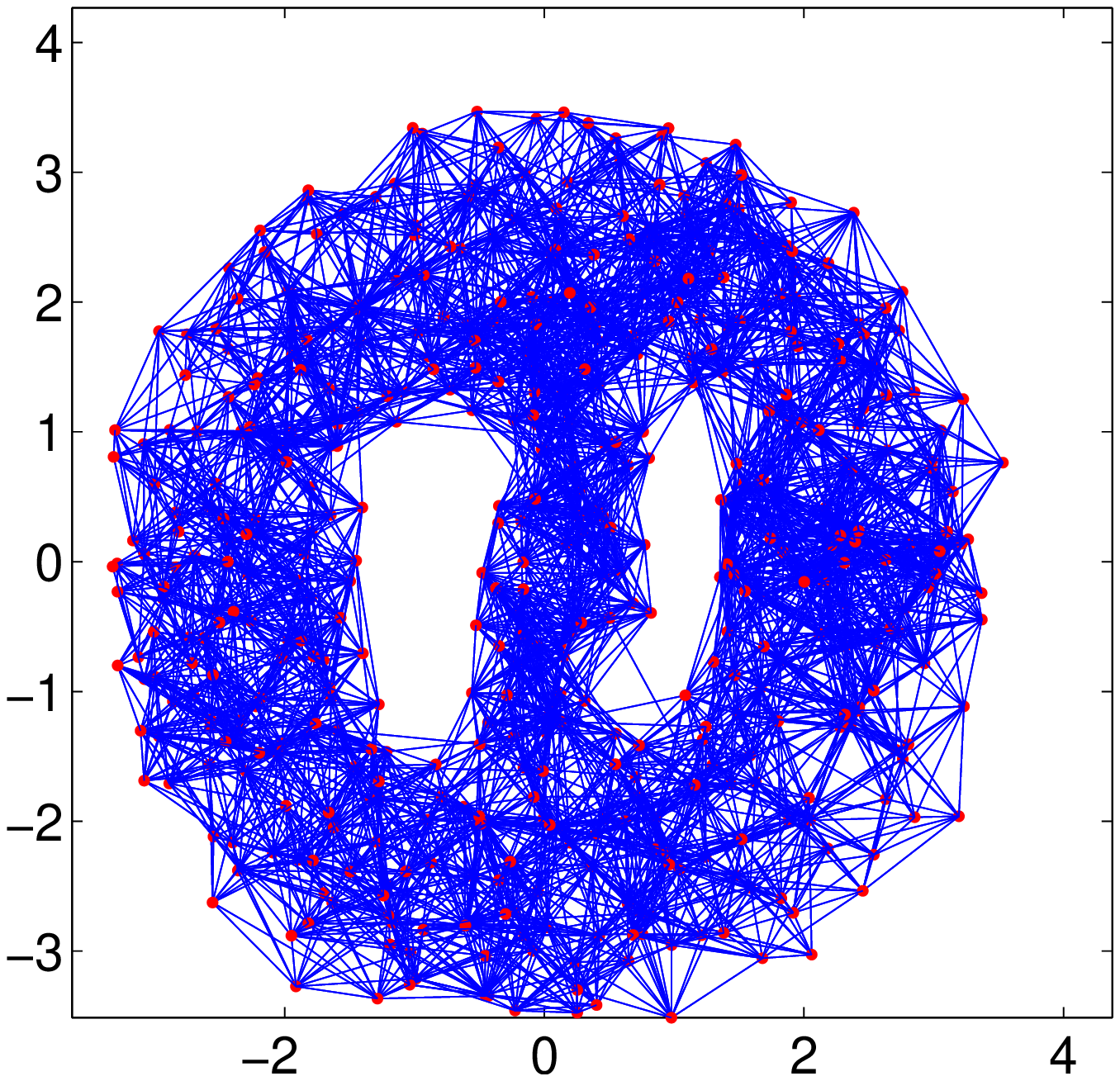}&
\includegraphics[width=0.40\columnwidth ] {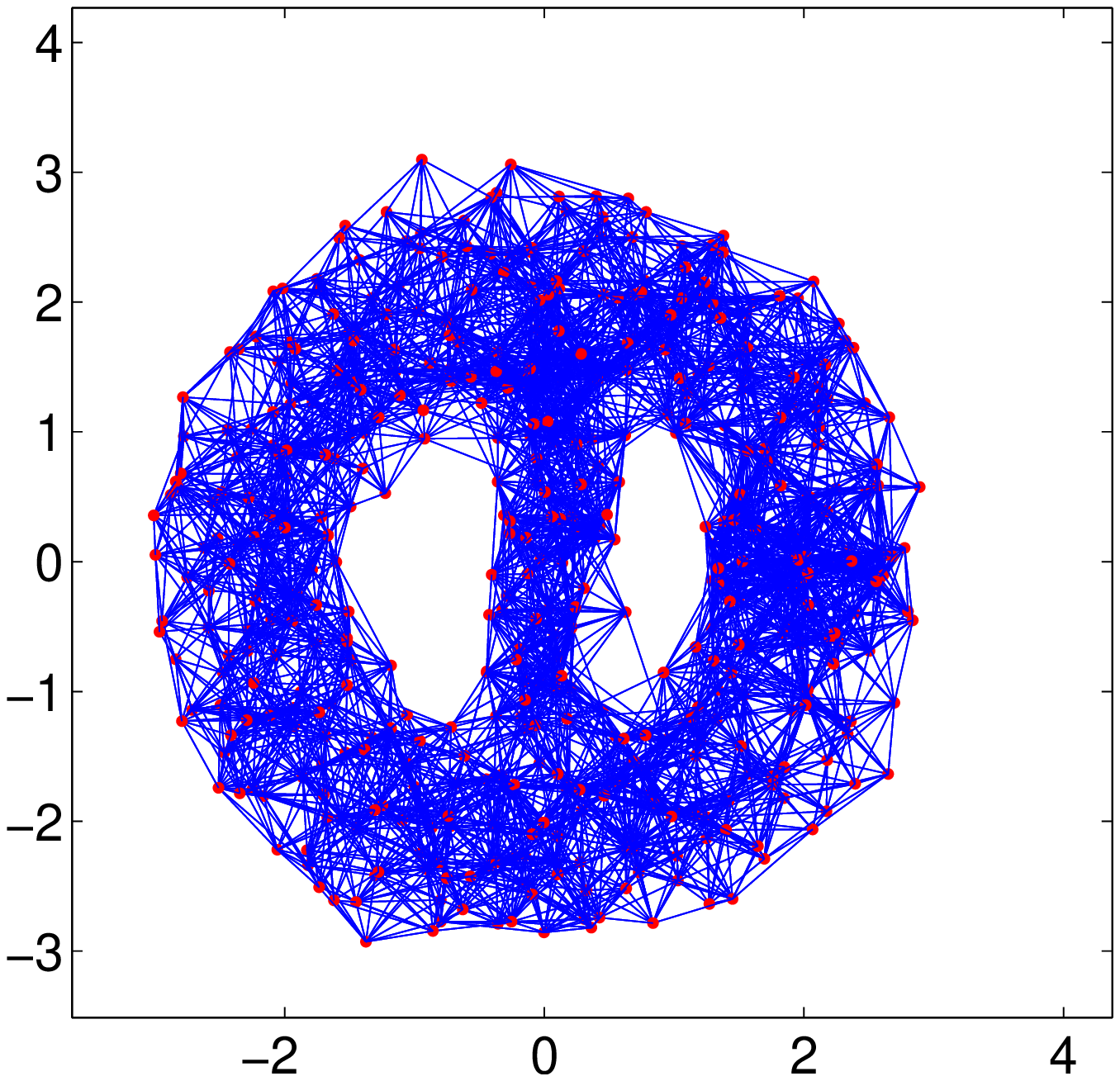}  \\
50\% &
\includegraphics[width=0.40\columnwidth ] {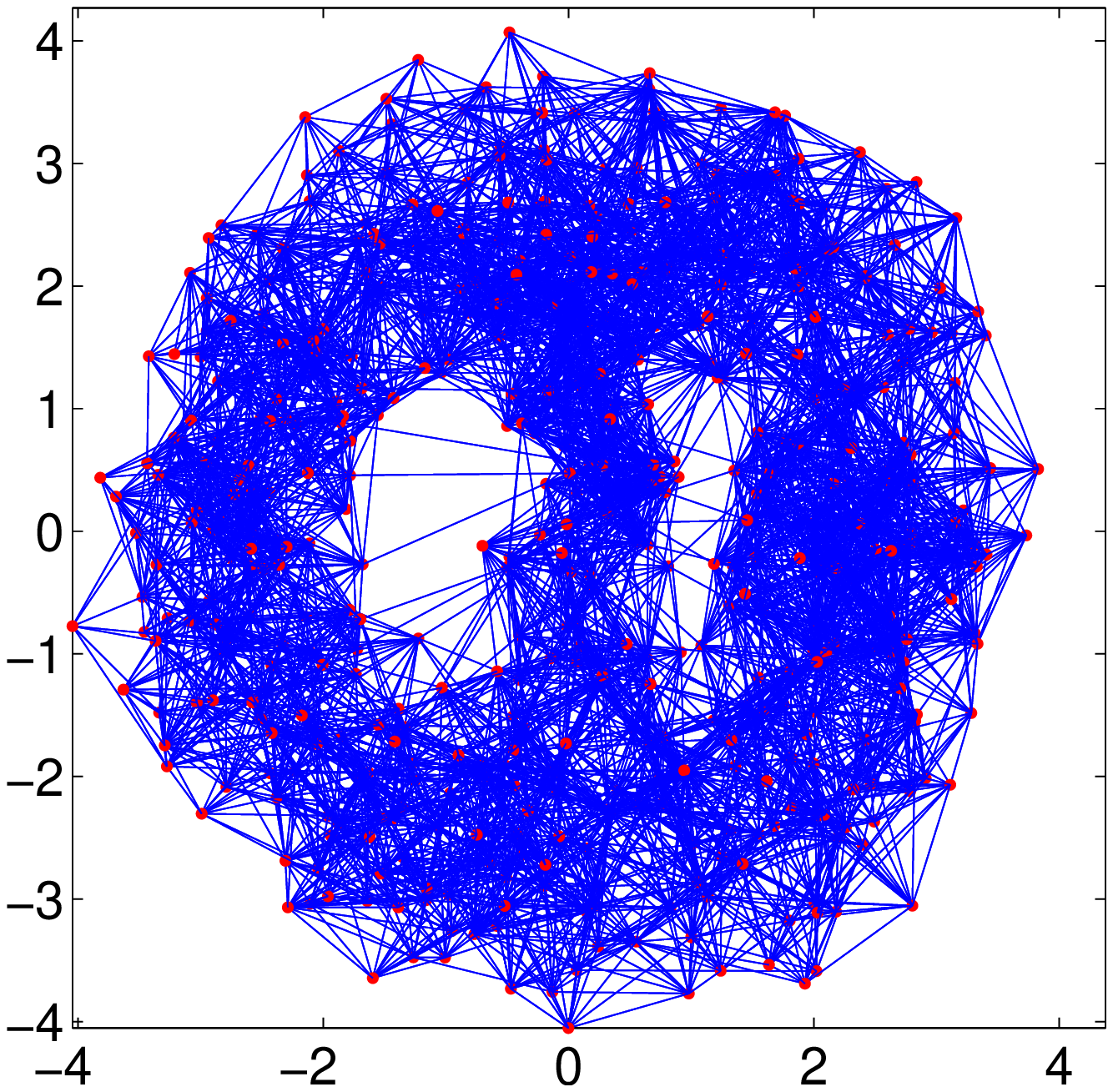} &
\includegraphics[width=0.40\columnwidth ] {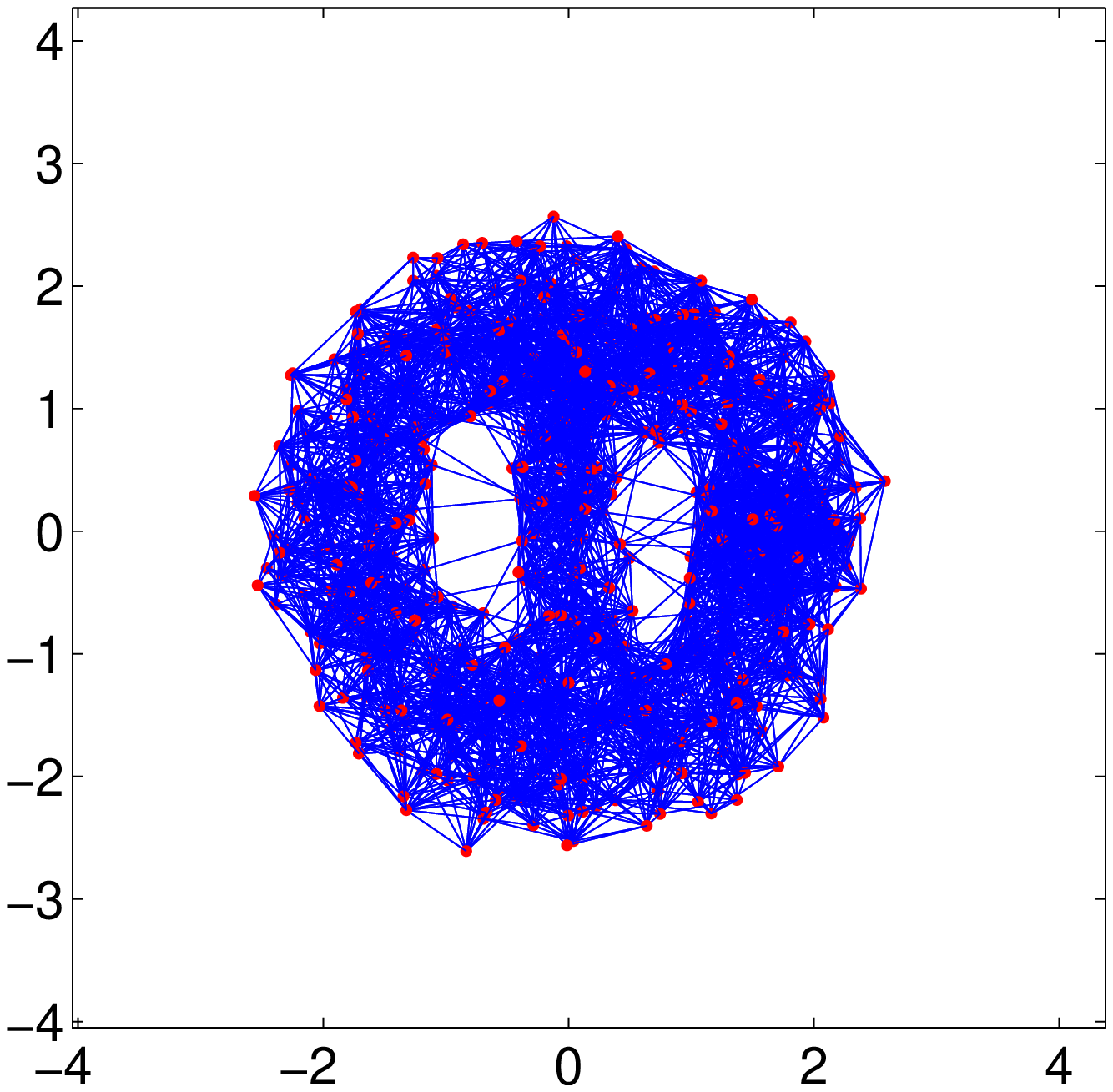} \\
\end{tabular}
\caption{Reconstructions of the BRIDGE-DONUT graph with $n=500$ nodes, sensing radius $\rho=0.92$ and $\eta = 0\%, 10\%, 20\%, 30\%, 40\%, 50\%$.}
\label{fig:RECS_Donut}
\end{figure}

Table \ref{tab:times} shows the running time of the various steps of the 3D-ASAP algorithm corresponding to our not particularly optimized MATLAB implementation. Our experimental platform was a PC machine equipped with an Intel(R) Core(TM)2 Duo CPU E8500 @ 3.16GHz $4$ GB RAM. Notice that all steps are amenable to a distributed implementation, thus a parallelized implementation would significantly reduce the running times.
Note the running time of 3D-ASAP is significantly larger than that of 3D-SP-ASAP and DISCO, due to the large number of patches (linear in the size of the network) that need to be localized. 3D-SP-ASAP addresses this issue, and reduces the running time from 474 to 108 seconds (for $\eta=0\%$), and from 1770 to 186 seconds (for $\eta=35\%$). Note that all steps of the algorithm scale linearly in the size of the network, except for the eigenvector computation, which is nearly-linear. We refer the reader to Section 7 of \cite{ASAP} for a complexity analysis of each step of 2D-ASAP, and remark that it is very similar to the complexity of 3D-ASAP.

\begin{table}[h]
\begin{minipage}[b]{0.98\linewidth}
\centering
\begin{tabular}{|l|l|l|l|l|}
\hline
Number of partitions   $k$      & \multicolumn{2}{c|}{$k=8$} &  $k=25$  \\
\hline
Noise level $\eta $ & $0\%$ & $35\%$ &  $35\%$  \\
\hline
  Spectral partitioning & 0.3 & 0.3 &  0.7 \\
  Finding WUL subgraphs &  48  & 81 &   89 \\
  Embedding FULL-SDP	 &  53 & (82) &  (89)\\
  Embedding SNL-SDP  	& (26) & 50  &   86 \\
  Step 1 ($\mathbb{Z}_2 \times \mbox{S0(3)}$ )  & 1 & 1  & 1 \\
  Step 2 (Least squares) & 6 & 8  & 9  \\
\hline
  Total & 108 & 140  & 186 \\
\hline
\end{tabular}
\caption{Running times (in seconds) of the 3D-SP-ASAP algorithm for the BRIDGE-DONUT graph with $n=500$ nodes, $\eta=0\%,35\%$, $deg=18,21$, and $k=8,25$ partitions. For $\eta=0 \%$ we embed the patches using FULL-SDP, while for $\eta= 35 \%$ we use SNL-SDP since the regularization term improves the localization.}
\label{tab:SP_times}
\end{minipage}

\begin{minipage}[b]{0.98\linewidth}
\centering
\begin{tabular}{|l|l|l|}
\hline
   & $\eta=0\%$ & $\eta = 35\% $ \\
\hline
  Break $G$ into patches & 59 & 90  \\
  Finding WUL subgraphs & 210 & 233 \\
  Embedding FULL-SDP &  154 & (252) \\
  Embedding SNL-SDP  & (966) & 1368 \\
  Denoising patches &  30 & 39 \\
  Step 1 ($\mathbb{Z}_2 \times \mbox{S0(3)}$ )  & 15 & 19    \\
  Step 2 (Least squares) & 6 & 21 \\
\hline
  3D-ASAP & 474 & 1770  \\
\hline
\hline
DISCO & 196 & 197 \\
\hline
3D-SP-ASAP & 108 & 186\\
\hline
\end{tabular}
\caption{Running times (in seconds) of the 3D-ASAP algorithm for the BRIDGE-DONUT graph with $n=500$ nodes, $\eta=0\%,35\%$, $deg=18,21$, $N=533,541$ patches, and average patch sizes $17.8, 20.3$.
For $\eta=0 \%$ we embed the patches using FULL-SDP, while for $\eta= 35 \%$ we use SNL-SDP since the regularization term improves the localization. For $\eta= 35 \%$, during the least squares step we also use the scaling heuristic, and run the gradient descent algorithm several times, hence the increase in the running time from 6 to 21 seconds.}
\label{tab:times}
\end{minipage}
\end{table}

\section{Summary and discussion} \label{conclusion}

In this paper, we introduced 3D-ASAP (As-Synchronized-As-Possible), a novel divide and conquer, non-incremental non-iterative anchor-free algorithm for solving (ab-initio) the molecule problem. In addition, we also proposed  3D-SP-ASAP, a faster version of 3D-ASAP, which uses a spectral partitioning algorithm as a preprocessing step for dividing the initial graph into smaller subgraphs. Our extensive numerical simulations show that 3D-ASAP and 3D-SP-ASAP are very robust to high levels of noise in the measured distances and to sparse connectivity in the measurement graph.

We build on the approach used in 2D-ASAP to accommodate for the additional challenges posed by rigidity theory  in $\mathbb{R}^3$ as opposed to $\mathbb{R}^2$. In particular, we extract patches that are not only globally rigid, but also weakly uniquely localizable, a notion that is based on the recent unique localizability of So and Ye \cite{SY05}. In addition, we also increase the robustness to noise of the  algorithm by using a median-based denoising algorithm in the preprocessing step, by combining into one step the methods for computing the reflections and rotations, thus doing synchronization over O(3)= $\mathbb{Z}_2 \times$ SO(3) rather than individually over $\mathbb{Z}_2$ followed by SO(3), and finally by incorporating a scaling correction in the final step where we  compute the translations of each patch by solving an overdetermined linear system by least squares. Another feature of 3D-ASAP is being able to incorporate readily available structural information on various parts of the network.

Furthermore, in terms of robustness to noise, 3D-ASAP compares favorably to some of the current state-of-the-art graph localization algorithms. The 3D-ASAP algorithm follows the same ``divide and conquer" philosophy that is behind our previous 2D-ASAP algorithm, and starts with local coordinate computations based only on the 1-hop neighborhood information (of a single node, or set of nodes whose coordinates are known a priori ), but unlike previous incremental methods, it synchronizes all such local information in a noise robust global optimization process using an efficient eigenvector computation. In the preprocessing step of doing the local computations, a median-based denoising algorithm improves the accuracy of the patch reconstructions, previously computed by the SDP localization algorithm.

Across almost all graphs that we have tested, 3D-ASAP constantly gives better results in terms of the averaged normalized error of the reconstruction. Except for the PACM graph, whose topology greatly favors the divide and conquer approach used by DISCO, 3D-ASAP returns reconstructions that are often significantly more accurate in the presence of large noise. Furthermore, for the case of noiseless distance measurements, the notion of weakly uniquely localizable graphs that we introduced, leads to reconstructions that are an order of magnitude more accurate than DISCO and SNL-SDP.


The geometric graph assumption, which comes up naturally in many problems of practical interest, is essential to the performance of the 3D-ASAP algorithm as it favors the existence of globally rigid or WUL patches of relatively large size. When the geometric graph assumption does not hold, the 1-hop neighborhood of a node may be extremely sparse, and thus breaking up such a sparse star graph leads to many small patches (i.e., most of them may contain only a few nodes), with only a few of them having a large enough pairwise intersection. Since small patches lead to small patch intersections, it would therefore be difficult for 3D-ASAP to align patches correctly and compute a robust final solution.
Note that in the case of random Erd\H{o}s-R\'{e}nyi graphs we expect the SDP methods, or even the low-rank matrix completion approaches, to work well. In other words, while these methods rely on randomness, our 3D-ASAP algorithm   benefits from structure the most.


For the molecule problem, while 3D-ASAP can benefit from any existing molecular fragments, there is still information that it does not take advantage of, and which can further improve the performance of the algorithm. The most important information missing from our 3D-ASAP formulation are the residual dipolar couplings (RDC) measurements that give angle information ($cos^2(\theta)$) with respect to a global orientation. Another possible approach is to consider an energy based formulation that captures the interaction between pairs of atoms (e.g. Lennard-Jones potential), and use this information to better localize the patches.


Another information one may use to further increase the robustness to noise is the distinction between the ``good" and ``noisy" edges. There are two parts of the algorithm that can benefit from such information. First, in the preprocessing step for localizing the patches one may enforce the ``good" distances as hard constraints in the SDP formulation. Second, in the step that synchronizes the translations using least squares, one may choose to give more weight to equations involving the ``good" edges, keeping in mind however that such equations are not noise free since the direction of an edge may be noisy as a result of steps 1 and 2, even if the distances are accurate. However, note that 3D-ASAP does use the ``good" edges as hard constraints in the gradient descent refinement at the end of step 3.

One other possible future direction is combining the reflection, rotation and translation steps into a single step, thus doing synchronization over the Euclidean group. Our current approach in 3D-ASAP takes one step in this direction, and combines the reflection and rotations steps by doing synchronization over the Orthogonal group O(3). However, incorporating the translations step imposes additional challenges due to the non-compactness of the group Euc(3), rendering the eigenvector method no longer applicable directly. 

In general, there exist very few theoretical guaranties for graph localization algorithms, especially in the presence of noise. A natural extension of this paper is a theoretical analysis of ASAP, including performance guarantees in terms of robustness to noise for a variety of graph models and noise models.
However, a complete analysis of the noise propagation through the pipeline of Figure \ref{fig:pipeline} is out of our reach at the moment, and first calls for theoretical guarantees for the SDP formulations used for localizing the patches in the preprocessing step. Recent work in this direction is due to \cite{montanari_localization}, whose main result provides a theoretical characterization of the robustness properties of an SDP-based algorithm, for random geometric graphs and uniformly bounded adversarial noise.

While the experimental results returned by 3D-ASAP are encouraging when compared to other graph realization algorithms, we still believe that there is room for improvement, and expect that further combining the approaches of 3D-ASAP and DISCO would increase the robustness to noise even more. The disadvantage of 3D-ASAP is that it sometimes uses patches of very small size, which in the noisy scenario, can be poorly aligned with neighboring patches because of small, possibly inaccurate set of overlapping nodes. One of the advantages of DISCO is that it uses larger patches, which leads to larger overlappings and more robust alignments. At the same time, the number of patches that need to be localized by an SDP algorithm is small in the case of DISCO ($2^{h}$ where $h$ is the height of the tree in the graph decomposition), thus reducing the computational cost of the algorithm. However, DISCO does not take advantage, at a global level, of pairwise alignment information that may involve more than two patches, while the eigenvector synchronization algorithm of 3D-ASAP incorporates such local information in a globally consistent framework. Straddling the boundary between SDP computational feasibility and robustness to noise, as well as finding the ``right" method of dividing the initial problem are future research directions. For a given patch of size $k$, is it better to run an SDP localization algorithm on the whole graph, or to first split the graph into two or more subgraphs, localize each one and then merge the solutions to recover the initial whole patch? An analysis of this question, both from a computational point of view and with respect to robustness to noise, would reveal more insight into creating a hybrid algorithm that combines the best aspects of 3D-ASAP and DISCO. The 3D-SP-ASAP algorithm is one step in the direction, and was able to address some of these challenges.



\section*{Acknowledgments}
The authors would like to thank
Yinyu Ye for useful discussions on rigidity theory and for sharing the FULL-SDP and SNL-SDP code.
A. Singer and M. Cucuringu acknowledge support by Award Number R01GM090200 from the NIGMS and Award Number FA9550-09-1-0551 from AFOSR. A. Singer was partially supported by the Alfred P. Sloan Foundation.

\bibliographystyle{siam}
\bibliography{3D_ASAP_main}

\end{document}